\newtheorem{assumption}{Assumption}
\newcommand{\R}{{\mathbb R}}
\newcommand{\E}{{\mathbb E}}
\newcommand{\X}{{\mathcal X}}
\newcommand{\U}{{\mathcal U}}
\newcommand{\Y}{{\mathcal Y}}
\renewcommand{\P}{{\mathcal P}}
\newcommand{\Py}{{\mathbb P}}
\newcommand{\C}{\mathcal{C}}
\newcommand{\G}{\mathcal{G}}
\renewcommand{\H}{\mathcal{H}}
\newcommand{\vct}[1]{{#1}}
\newcommand{\rv}[1]{{{#1}}}
\newcommand{\gm}[1]{{\mathscr{#1}}}
\begin{document}

\title{Upper and Lower Values in Zero-sum Stochastic Games with Asymmetric Information\thanks{Preliminary version of this paper appears in the proceedings of the 58th Conference on Decision and Control (CDC), 2019 \cite{kartik2019stochastic}.}
}

\titlerunning{Zero-sum Stochastic Games with Asymmetric Information}        

\author{Dhruva Kartik         \and
        Ashutosh Nayyar 
}


\institute{Dhruva Kartik \at
              \email{mokhasun@usc.edu}           
           \and
          Ashutosh Nayyar \at
              \email{ashutosh.nayyar@usc.edu}\\
              \\
              Ming Hsieh Department
of Electrical and Computer Engineering, University of Southern California, Los Angeles,
CA, 90007 USA
}

\date{Received: date / Accepted: date}

\maketitle

\begin{abstract}
A general model for zero-sum stochastic games with asymmetric information is considered. In this model, each player's information at each time can be divided into a common information part and a private information part. Under certain conditions on the evolution of the common and private information, a dynamic programming characterization of the value of the game (if it exists) is presented. If the value of the zero-sum game does not exist, then the dynamic program provides bounds on the upper and lower values of the game. This dynamic program is then used for a class of zero-sum stochastic games with complete information on one side and partial information on the other, that is, games where one player has complete information about state, actions and observation history while the other player may only have partial information about the state and action history. For such games, it is shown that the value exists and can be characterized using the dynamic program. {It is further shown that for this class of games, the dynamic program can be used to compute an equilibrium strategy for the more informed player in which the player selects its action using its private information and the common information belief.}
\keywords{Dynamic Games \and Asymmetric information \and Upper and lower values}
\end{abstract}

\section{Introduction}
\label{intro}
%
%
%
%

Zero-sum games have been widely used as a model of strategic decision-making in the presence of adversaries. Such decision-making scenarios arise in a range of domains including (i) security of cyber-physical and infrastructure systems such as  the power grid and  water networks in the presence of cyber or physical attacks \cite{washburn1995two,amin2015game,amin2012cyber,wu2018securing,zhu2015game,shelar2016security}, (ii) cyber-security of networked computing and communication systems \cite{washburn1995two,alpcan2010network}, (iii) designing anti-poaching measures \cite{fang2015security,fang2017paws,bondi2019using}, (iv) military operations in the presence of hostile agents \cite{haywood1954military} and, (v) competitive markets and geopolitical interactions \cite{morrow1994game,aumann1995repeated}. In many cases, the adversarial interactions occur over time in a dynamic and uncertain environment. Zero-sum stochastic games provide a useful model for these situations. In these games, two players may jointly control the evolution of the state of a stochastic dynamic system with one player trying to minimize the total cost while the other trying to maximize it. In stochastic games with symmetric information, all players have the same information about the state and action histories. Such games have been extensively studied in the literature in both zero-sum and nonzero-sum settings \cite{fudenberg1991game,filar2012competitive,basar1999dynamic}. In many situations of interest, however, the players may have different information about the state and action histories. A potential attacker of a cyber-physical system for example, may not have the same information as the defender; adversaries in a battlefield may have different information about the surroundings and about each other. The focus of this paper is on such \emph{asymmetric information} settings. 


We adopt a model of asymmetric information that was originally developed for decentralized stochastic control \cite{nayyar2013decentralized}. This model partitions each player's information at each time into a common information part and a private information part. The common information at time $t$ is known to all players at that time and at all times in the future. In addition to the common information, each player may have some private information. It has been noted in the existing literature that this model subsumes a wide range of information structures \cite{nayyar2013decentralized,nayyar2014common}. 

In our model, it may be the case that no player knows the current state of the underlying stochastic system perfectly. Further,  since each player may have some private information, one player's information is not necessarily included in the other player's information.  The partial observability of the state, the asymmetry of information and the fact that each player may have some private information complicates the characterization and computation of the equilibrium cost (value) and equilibrium strategies. We provide two results for this general model of zero-sum stochastic game with asymmetric information: (i) If the game has a Nash equilibrium in behavioral strategies, then our result provides dynamic programming based characterizations of the value of the game. Each step of these programs involves a min-max (or a max-min) problem over the space of \emph{prescriptions} which are functions from players' private information to actions. (ii) If the game does not have a Nash equilibrium, then our dynamic programs provide a lower bound on the upper value of the game and an upper bound on the lower value of the game.

{We then specialize our model to the case where (i) one player (say, the attacker) has \emph{partial information} on the system state and the other player's (say, the defender's)  actions; (ii) the other player (i.e. the defender) has complete information, that is, it knows the system state as well as the attacker's information. We allow both players to control the system state. Our model can be viewed as a generalization of the models in \cite{renault2006value,renault2012value,li2014lp,zheng2013decomposition}. We first show that a Nash equilibrium exists in our model and thus the upper and lower values are equal. This allows us to characterize the value of the game using our dynamic programming approach. We also describe some structural properties of the value functions in the dynamic program that could be leveraged for computational efficiency. Further, we find a sufficient statistic for the more informed player, i.e, we show that there exists a Nash equilibrium where the more informed player plays a \emph{common information belief} based strategy \cite{nayyar2014common}, \cite{ouyang2017dynamic} and that our dynamic programming approach can be used to compute this strategy.}

\subsection{Related Work}\label{sec:related}
\begin{enumerate}
\item \emph{Stochastic games of symmetric information:} In this stochastic game model, the players have access to the same information. Thus, at any time $t$, each player has no uncertainty regarding other players' information and makes a decision anticipating the other players' strategies. Such games of symmetric information have been extensively studied in the literature \cite{fudenberg1991game,filar2012competitive,basar1999dynamic}. Because of this symmetry, players' shared information (or a function of it) can be treated as a state and utilized to decompose a dynamic game into simpler single-stage Bayesian games. These single-stage games can then be solved in a backward inductive manner to obtain the value and Nash equilibria (if any exist). In this paper, we focus on models in which players have different information and thus the methodology described above for symmetric information games is not directly applicable to our model.
\item  \emph{Zero-sum games with limited information on one side:} Stochastic zero-sum games with complete information on one side and limited information on the other have been investigated before with varying degrees of generality. In \cite{mertens2015repeated,renault2006value}, the state evolution was uncontrolled, in \cite{renault2012value,li2014lp}, the state could only be controlled by the more-informed player, and in \cite{zheng2013decomposition}, the state could be controlled by both players. In all these works (except \cite{renault2012value}), both players' actions are commonly observed and, the less-informed player has no state information. A related model in which the system is uncontrolled and players' actions are commonly observed but both players may have asymmetric state information has been investigated in \cite{gensbittel2015value}. In our general model in Section \ref{sec:probform}, both players may have imperfect information about the system and the other player's actions, and both players may control the system state. We also consider a specialized model in Section \ref{sec:incomp} where (i) the defender has complete information while the attacker has partial information on the state history and the defender's action history; (ii) both players may influence the state evolution. These two features differentiate our work from prior work mentioned above.
\item \emph{Stochastic games of asymmetric information with strategy-independent common information beliefs: } In \cite{nayyar2014common}, a common information based dynamic program was developed for finding Nash equilibria in general (i.e. not necessarily zero-sum) stochastic games of asymmetric information. The key idea in this approach is to first convert the game of asymmetric information into a virtual game of symmetric information. This virtual game of symmetric information is then solved using a common information based dynamic program. However, this approach relies on an assumption on the players' information (see Assumption 2 in \cite{nayyar2014common}). This assumption holds only for certain classes of information structures and may not necessarily be true for the asymmetric information games described in Sections \ref{sec:probform} and \ref{sec:incomp}.
\item \emph{Common information based perfect Bayesian equilibria in stochastic games of asymmetric information:} Authors in \cite{ouyang2017dynamic} consider a stochastic game model in which the system state can be decomposed into a public state that is commonly observed by all players and a private state that is privately observed by each player. In this model, all the players' past actions are commonly observed and, additionally, an imperfect version of players' private state may be disclosed to all the players at each time. {A special case of this model has been considered in \cite{vasal2019systematic}.} For the models in \cite{ouyang2017dynamic} and \cite{vasal2019systematic}, the authors provide characterizations of perfect Bayesian equilibria under some assumptions on the evolution of players' private state. 
In this paper, we focus only on two-player zero-sum games. However, the system dynamics and the information structure in our model are more general than those in the model of \cite{ouyang2017dynamic,vasal2019systematic}. For instance, unlike in  \cite{ouyang2017dynamic,vasal2019systematic}, players' actions may not be fully observed in our model. Further, the solutions in \cite{ouyang2017dynamic,vasal2019systematic} rely on strong existence assumptions that may not be true in general.

\end{enumerate}

Our work is most closely related to \cite{nayyar2017information} and \cite{nayyar2014common} . We follow the approach in \cite{nayyar2017information} and build on its results. The system model in \cite{nayyar2017information} conformed to a specific structure, that is, the system state could be decomposed into three components: a public state that is commonly observed (perhaps partially) and a privately observed component for each player. The model in our paper is substantially more general than in \cite{nayyar2017information}. Another major restriction in \cite{nayyar2017information} was that the players were allowed to play only pure strategies. In this paper, we allow the players to play behavioral strategies. 
Our model is similar to \cite{nayyar2014common} but {we do not make the critical assumption made in \cite{nayyar2014common} that the common information based beliefs be strategy-independent (see Assumption 2 of \cite{nayyar2014common}). Removing this assumption  makes our model much more widely applicable than the model in \cite{nayyar2014common}.}

\subsection{Contributions}
The main contributions of our paper are:
\begin{enumerate}
\item We consider a general stochastic zero-sum game model in which the players select their actions using different information. For this general model, we provide a dynamic programming characterization of the value of the zero-sum game, if it exists. If the value does not exist, then our characterization provides bounds on the upper and lower values of the zero-sum game.

\item We then specialize our model to the case in which the defender has complete information and the attacker may have partial information. For this model, we show that the value of the zero-sum game exists and that our dynamic program characterizes the value of this game.

\item For the game in Section \ref{sec:incomp}, we show that our dynamic program can be used to find an equilibrium strategy for the more-informed player such that its behavioral action is a function of its private information and the common information based belief.

\item For the specialized model in Section \ref{sec:incomp}, we prove that the value functions in our dynamic program satisfy some structural properties that can be leveraged to make the dynamic program computationally more tractable. One such property is that the value functions are piecewise linear and convex in the common information belief.

\end{enumerate}

\subsection{Notation}\label{notation}
Random variables/vectors are denoted by upper case letters, their realizations by the corresponding lower case letters. In general, subscripts are used as time index while superscripts are used to index decision-making agents. For time indices $t_1\leq t_2$, $\rv{X}_{t_1:t_2}$ (resp. $g_{t_1:t_2}$) is the short hand notation for the variables $(\rv{X}_{t_1},\rv{X}_{t_1+1},...,\rv{X}_{t_2})$ (resp.  functions $(g_{t_1},\dots,g_{t_2})$). Similarly, $\rv{X}^{1:2}$ is the short hand notation for the collection of variables $(\rv{X}^1,\rv{X}^2)$.
Operators $\Py(\cdot)$ and $\E[\cdot]$ denote the probability of an event, and the expectation of a random variable respectively.
For random variables/vectors $X$ and $Y$, $\Py(\cdot | \rv{Y}=y)$, $\E[\rv{X}| \rv{Y}=y]$ and $\Py(\rv{X} = x \mid \rv{Y} = y)$ are denoted by $\Py(\cdot | y)$, $\E[\rv{X}|y]$ and $\Py(x \mid y)$, respectively. 
For a strategy $g$, we use $\Py^g(\cdot)$ (resp. $\E^g[\cdot]$) to indicate that the probability (resp. expectation) depends on the choice of $g$. For any finite set $\mathcal{A}$, $\Delta\mathcal{A}$ denotes the probability simplex over the set $\mathcal{A}$.

\subsection{Organization}
The rest of the paper is organized as follows.  We formulate the game in Section \ref{sec:probform} and construct a virtual game with symmetric information in Section \ref{virtual}. In Section \ref{sec:expanded}, we construct an expanded virtual game and use it provide a dynamic programming characterization of the value. In Section \ref{sec:incomp}, we analyze the model with complete information on one side and partial information on the other. We conclude the paper in Section \ref{sec:conc}. Proofs of key results are provided in the appendices.

\section{Problem Formulation}\label{sec:probform}
Consider a dynamic system with two players. The system operates in discrete time over a horizon $T$. Let $\rv{X}_t \in \X_t$ be the state of the system at time $t$, and let $\rv{U}_t^i \in \U_t^i$ be the action of player $i$ at time $t$, where $i = 1,2$. The state of the system evolves in a controlled Markovian manner as
\begin{align}
\label{statevol}\rv{X}_{t+1} = f_t(\rv{X}_t, \rv{U}_t^1,\rv{U}_t^2,\rv{W}_t^s),
\end{align}
where $\rv{W}_t^s$ is the system noise. There are two observation processes $\rv{Y}_t^1 \in \Y_t^1$ and $\rv{Y}_t^2 \in \Y_t^2$ given as
\begin{align}
\label{obseq}\rv{Y}_t^i = h_t^i(\rv{X}_t, \rv{U}_{t-1}^1,\rv{U}_{t-1}^2,\rv{W}_t^i), \; i = 1,2,
\end{align}
where $\rv{W}_t^1$ and $\rv{W}_t^2$ are observation noises. We assume that the sets $\X_t, \U_t^i$ and $\Y_t^i$ are finite for all $i$ and $t$. Further,  the random variables $\rv{X}_1, \rv{W}_t^s, \rv{W}_t^i$ (referred to as \emph{the primitive random variables}) can take finitely many values and are mutually independent.

\subsection{Information Structure}\label{infostruct}
The collection of variables (i.e. observations, actions) available to player $i$ at time $t$ is denoted by $\rv{I}_t^i$. 
$\rv{I}_t^i$ is a subset of all observations until time $t$ and actions until $t-1$, i.e,   $ \rv{I}_t^i \subseteq \{\rv{Y}^{1:2}_{1:t}, \rv{U}^{1:2}_{1:t-1}\} $. The set of all possible realizations of $\rv{I}_t^i$ is denoted by $\mathcal{I}^i_t$. 

Information $\rv{I}_t^i$ can be decomposed into \emph{private} and \emph{common} information, i.e. $\rv{I}_t^i = \rv{C}_t \cup \rv{P}_t^i$. Common information $\rv{C}_t$ is the set of variables known to both players at time $t$ while variables in the private information $\rv{P}_t^i$ are known only to player $i$. Let $\mathcal{C}_t$ be the set of all realizations of common information at time $t$ and let $\mathcal{P}_t^i$ be the set of all realizations of private information for player $i$ at time $t$.  We make the following assumption on the evolution of common and private information. This is similar to Assumption 1 of  \cite{nayyar2014common}\footnote{Note that we do not impose Assumption 2 of \cite{nayyar2014common}.}. 

\begin{assumption}\label{infevolve}
The evolution of common and private information available to the players is as follows:
\begin{enumerate}
\item The common information $\rv{C}_t$ is increasing with time, i.e. $\rv{C}_t \subset \rv{C}_{t+1}$. Let $\rv{Z}_{t+1} := \rv{C}_{t+1}\setminus \rv{C}_t$ be the increment in common information. Thus, $\rv{C}_{t+1} = \{\rv{C}_t,\rv{Z}_{t+1}\}$. Furthermore,
\begin{align}
\label{commonevol}\rv{Z}_{t+1} = \zeta_{t+1}(\rv{P}_t^{1:2},\rv{U}_t^{1:2},\rv{Y}_{t+1}^{1:2}),
\end{align}
where $\zeta_{t+1}$ is a fixed transformation.
\item The private information evolves as
\begin{align}
\label{privevol}\rv{P}^i_{t+1} = \xi_{t+1}^i(\rv{P}_t^i,\rv{U}_t^i,\rv{Y}_{t+1}^i),
\end{align}
where $\xi_{t+1}^i$ is a fixed transformation.
\end{enumerate}
\end{assumption}
As noted in \cite{nayyar2013decentralized} and \cite{nayyar2014common}, a number of information structures satisfy the above assumption.
We briefly mention a few below:
\begin{enumerate}
\item  \emph{No common information:} Consider the case where each player only has access to its own observations and actions, i.e., $I^i_t=\{Y^i_{1:t},U^i_{1:t-1}\}, i = 1,2$. In this case, there is no common information, i.e., $C_t=\varnothing$. It is easy to verify that Assumption \ref{infevolve} is valid in this case.
\item \emph{No private information:} Consider the case where all observations and actions are public, i.e., $I^i_t = \{Y_{1:t}^{1:2},U_{1:t-1}^{1:2}\}$. In this case, players do not have any private information, i.e., $P_t^i = \varnothing$. Once again, Assumption \ref{infevolve} is true.
\item \emph{Information structure in \cite{vasal2019systematic}:} In this model, player $i$ has a private state $X_t^i$, $i = 1,2$. Player $i$ knows its private state and both players' actions are commonly observed. This information structure can be seen as a special case of our model in the following manner: let the state $X_t := (X_t^1,X_t^2)$ and the observation processes $Y_t^i = X_t^i$ for $i = 1,2$. Define the information sets at time $t$ as $I_t^i = \{Y_{1:t}^i, U_{1:t-1}^{1:2}\}.$ In this case, $C_t = \{U_{1:t-1}^1,U_{1:t-1}^2\}$ and $P_t^i = Y_{1:t}^i$. Clearly, this information structure satisfies Assumption \ref{infevolve}. Similarly, information structures in \cite{ouyang2017dynamic} and \cite{nayyar2017information} can also be seen as special cases of our model.
\item \emph{Full state information on one side and quantized state information on the other:} Consider the model in which player 1 knows the state $X_t$ and player 2 sees a quantized version of $X_t$. That is, $Y_t^1 = X_t$ and $Y_t^2 = q(X_t)$ where $q$ is an arbitrary function. Both players' actions are commonly observed. Since player 1 knows the state $X_t$ and $Y_t^2$ is a deterministic function of the state, player 1 also knows player 2's observation $Y_t^2$. {Thus, in this case, $I_t^1 = \{Y_{1:t}^{1:2},U_{1:t-1}^{1:2}\}$ and $I_t^2 = \{Y_{1:t}^{2},U_{1:t-1}^{1:2}\}$. Therefore, $C_t = \{Y_{1:t}^{2},U_{1:t-1}^{1:2}\}$, $P_t^1 = Y_{1:t}^{1}$ and $P_t^2 = \varnothing$. Clearly, this model satisfies Assumption \ref{infevolve}.}
\end{enumerate}

\subsection{Strategies and Values}
Players can use any information available to them to select their actions and we allow behavioral strategies for both players. Thus, player $i$ chooses a distribution $\delta\rv{U}_t^i$ over its action space using a \emph{control law} $g_t^i:  \mathcal{I}_t^i \rightarrow \Delta\mathcal{U}_t^i$, i.e.
\begin{equation}
\delta\rv{U}_t^i = g_t^i(\rv{I}_t^i) = g_t^i(\rv{P}_t^i,\rv{C}_t). \label{eq:stg}
\end{equation}
Player $i$'s action at time $t$ is randomly chosen from $\U_t^i$ according to the distribution $\delta\rv{U}_t^i$. We will at times refer to $\delta \rv{U}^i_t$ as player $i$'s \emph{behavioral action} at time $t$. It will be helpful for our analysis to explicitly describe the randomization procedure used by the players. To do so, we assume that player $i$ has access to i.i.d. random variables $\rv{K	}^i_{1:T}$ that are uniformly distributed over the interval $(0,1]$. The variables $\rv{K}^1_{1:T}, \rv{K}^2_{1:T}$ are independent of each other and of the primitive random variables. Further, player $i$ has access to a mechanism $\kappa$ that takes as input $\rv{K}^i_t$ and a distribution over $\mathcal{U}^i_t$ and generates a random action with the input distribution. Thus, player $i$'s action at time $t$ can be written as $\rv{U}_t^i = \kappa(g_t^i(\rv{I}_t^i),\rv{K}_t^i).$
\begin{remark}
One choice of the mechanism $\kappa$ can be described as follows: Suppose  $\mathcal{U}^i_t = \{1,2,..n\}$ and the input distribution is $(p_1,...p_n)$. We can \emph{partition} the interval $(0,1]$ into $n$ intervals $(a_i,b_i]$ such that the length of $i$th interval is $b_i -a_i = p_i$. Then, $U^i_t=k$ if $\rv{K}_t^i \in (a_k,b_k]$ for $k=1,\ldots,n$.
\end{remark}

The collection of control laws $\vct{g}^i = (g_1^i,\dots,g_T^i)$ is referred to as the \emph{control strategy} of player $i$, and the pair of control strategies $(\vct{g}^1,\vct{g}^2)$ is referred to as a \emph{strategy profile}. Let the set of all possible control strategies for player $i$ be $\G^i$.

The total expected cost associated with a strategy profile $(\vct{g}^1,\vct{g}^2)$ is
\begin{align}
J(\vct{g}^1,\vct{g}^2):=\E^{(\vct{g}^1,\vct{g}^2)}\left[\sum_{t=1}^T c_t(\rv{X}_t,\rv{U}_t^{1},\rv{U}_t^2)\right], \label{eq:totalcost}
\end{align}
where $c_t:\X_t \times \U_t^1 \times \U_t^2 \rightarrow \R$ is the cost function at time $t$.
Player 1 wants to minimize the total expected cost, while Player 2 wants to maximize it. We refer to this zero-sum game as Game $\gm{G}$. 
\begin{definition}
The upper value of the game $\gm{G}$ is defined as
\begin{align}
S^u(\gm{G}) := \inf_{g^1 \in \mathcal{G}^1}\sup_{g^2 \in \mathcal{G}^2} J(g^1,g^2).
\end{align}
The lower value of the game $\gm{G}$ is defined as
\begin{align}
S^l(\gm{G}) := \sup_{g^2 \in \mathcal{G}^2}\inf_{g^1 \in \mathcal{G}^1} J(g^1, g^2).
\end{align}
If the upper and lower values are the same,  they are referred to as  the value of the game and denoted by $S(\gm{G})$.   
\end{definition}
A Nash equilibrium of the zero-sum game $\gm{G}$ is a strategy profile $(\vct{g}^{1*},\vct{g}^{2*})$ such that for every $\vct{g}^1 \in \G^1$ and $\vct{g}^2 \in \G^2$, we have
\begin{align}
J(\vct{g}^{1*},\vct{g}^{2}) \leq J(\vct{g}^{1*},\vct{g}^{2*}) \leq J(\vct{g}^{1},\vct{g}^{2*}).
\end{align}
Nash equilibria in zero-sum games satisfy the following property \cite{osborne1994course}.
\begin{proposition}
If a Nash equilibrium in Game $\gm{G}$ exists, then for every Nash equilibrium $(\vct{g}^{1*},\vct{g}^{2*})$ in Game $\gm{G}$, we have
\begin{align}
J(\vct{g}^{1*},\vct{g}^{2*}) = S^l(\gm{G}) = S^u(\gm{G}) = S(\gm{G}).
\end{align}
\end{proposition}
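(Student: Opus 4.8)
The plan is to derive the statement purely from the defining inequalities of a Nash equilibrium together with the elementary ``weak duality'' inequality $S^l(\gm{G}) \le S^u(\gm{G})$, which holds for any real-valued function of two arguments. Fix an arbitrary Nash equilibrium $(\vct{g}^{1*},\vct{g}^{2*})$; the key observation is that all estimates below are in terms of $S^u(\gm{G})$ and $S^l(\gm{G})$, which do not depend on the chosen equilibrium, so the conclusion will automatically apply to every Nash equilibrium.

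First I would use the left inequality $J(\vct{g}^{1*},\vct{g}^{2}) \le J(\vct{g}^{1*},\vct{g}^{2*})$, valid for all $\vct{g}^2 \in \G^2$. Taking the supremum over $\vct{g}^2$ (the reverse inequality being trivial since $\vct{g}^{2*}\in\G^2$) gives $\sup_{\vct{g}^2 \in \G^2} J(\vct{g}^{1*},\vct{g}^2) = J(\vct{g}^{1*},\vct{g}^{2*})$, and then infimizing over $\vct{g}^1$ yields $S^u(\gm{G}) \le \sup_{\vct{g}^2} J(\vct{g}^{1*},\vct{g}^2) = J(\vct{g}^{1*},\vct{g}^{2*})$. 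Symmetrically, from $J(\vct{g}^{1*},\vct{g}^{2*}) \le J(\vct{g}^{1},\vct{g}^{2*})$ for all $\vct{g}^1 \in \G^1$ I get $\inf_{\vct{g}^1} J(\vct{g}^1,\vct{g}^{2*}) = J(\vct{g}^{1*},\vct{g}^{2*})$, and supremizing over $\vct{g}^2$ gives $S^l(\gm{G}) \ge \inf_{\vct{g}^1} J(\vct{g}^1,\vct{g}^{2*}) = J(\vct{g}^{1*},\vct{g}^{2*})$. Hence $S^u(\gm{G}) \le J(\vct{g}^{1*},\vct{g}^{2*}) \le S^l(\gm{G})$.

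Next I would record the minimax inequality $S^l(\gm{G}) \le S^u(\gm{G})$: for every $\vct{g}^1,\vct{g}^2$ we have $\inf_{\tilde{\vct{g}}^1} J(\tilde{\vct{g}}^1,\vct{g}^2) \le J(\vct{g}^1,\vct{g}^2) \le \sup_{\tilde{\vct{g}}^2} J(\vct{g}^1,\tilde{\vct{g}}^2)$, so $\inf_{\tilde{\vct{g}}^1} J(\tilde{\vct{g}}^1,\vct{g}^2) \le S^u(\gm{G})$ for every $\vct{g}^2$, and taking the supremum over $\vct{g}^2$ on the left gives $S^l(\gm{G}) \le S^u(\gm{G})$. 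Chaining this with the previous step produces $S^u(\gm{G}) \le J(\vct{g}^{1*},\vct{g}^{2*}) \le S^l(\gm{G}) \le S^u(\gm{G})$, forcing all four quantities to be equal; in particular the upper and lower values coincide, so the value $S(\gm{G})$ exists and equals $J(\vct{g}^{1*},\vct{g}^{2*})$, and this holds for every Nash equilibrium.

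I do not expect any real obstacle: the proposition is a consequence of the order-theoretic behavior of $\inf$ and $\sup$, and uses nothing beyond the equilibrium inequalities and the trivial minimax inequality. The only point requiring care is bookkeeping of the directions of the inequalities when passing to $\inf$/$\sup$, and phrasing the argument for an arbitrary equilibrium so that the final ``for every Nash equilibrium'' claim is immediate.
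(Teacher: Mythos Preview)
Your proof is correct and is exactly the standard argument for this fact; the paper does not give its own proof but simply cites \cite{osborne1994course} for this proposition, and the textbook argument there is precisely the one you wrote.
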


\begin{remark}\label{kuhnremark}
Note that the existence of a Nash equilibrium is not guaranteed in general. However, if players have perfect recall, i.e. 
\begin{equation}
\{\rv{U}^i_{1:t-1}\} \cup \rv{I}_{t-1}^i \subseteq \rv{I}_t^i
\end{equation} for every $i$ and $t$, then the existence of a behavioral strategy equilibrium is guaranteed by {Kuhn's theorem \cite{maschler2013game}}.
\end{remark}

The objective of this work is to characterize the upper and lower values $S^u(\gm{G})$ and $S^l(\gm{G})$ of Game $\gm{G}$. To this end, we will define a virtual game $\gm{G}_v$ and an ``expanded'' virtual game $\gm{G}_e$. These virtual games will be used to  obtain bounds on the upper and lower values of the original game $\gm{G}$.


\section{Virtual Game $\gm{G}_v$}\label{virtual}
The virtual game $\gm{G}_v$ is constructed using the methodology in \cite{nayyar2014common}. This game involves the same set of primitive random variables as in Game $\gm{G}$. The two players of game $\gm{G}$ are replaced by two virtual players in $\gm{G}_v$. The virtual players operate as follows. At each time $t$, virtual player $i$ selects a function $\Gamma^i_t$ that maps private information $P^i_t$ to a distribution $\delta \rv{U}_t^i$ over the space $\mathcal{U}_t^i$. We refer to these functions as \emph{prescriptions}.  Let $\mathcal{B}_t^i$ be the set of all possible prescriptions for virtual player $i$ at time $t$ (i.e.  $\mathcal{B}_t^i$ is the set of all mappings from $\mathcal{P}_t^i$ to $\Delta \mathcal{U}_t^i$).

Once the virtual players select their prescriptions, the action $U^i_t$ is randomly generated according to distribution $\Gamma_t^i(\rv{P}_t^i)$. More precisely, the system dynamics for this game are given by:
\begin{align}
\label{virdyn1}\rv{X}_{t+1} &= f_t(\rv{X}_t,\rv{U}_t^{1:2},\rv{W}_t^s)\\
\label{virdyn2}\rv{P}^i_{t+1} &= \xi_{t+1}^i(\rv{P}_t^i,\rv{U}_t^i,\rv{Y}_{t+1}^i) & i = 1,2,\\
\label{virdyn3}\rv{Y}_{t+1}^i &= h_{t+1}^i(\rv{X}_{t+1},\rv{U}_t^{1:2},\rv{W}_{t+1}^i) & i = 1,2,\\
\label{virdyn4}\rv{U}_t^i &= \kappa(\Gamma_t^i(\rv{P}_t^i),\rv{K}_t^i) & i = 1,2,\\
\label{virdyn5}\rv{Z}_{t+1} &= \zeta_{t+1}(\rv{P}_t^{1:2},\rv{U}_t^{1:2},\rv{Y}_{t+1}^{1:2}),
\end{align}
where the functions $f_t,h_t^i$, $\xi_t^i,\kappa$ and $\zeta_{t}$  are the same as in $\gm{G}$.



In the virtual game, virtual players use the common information $\rv{C}_t$ to select their prescriptions at time $t$. The $i$th virtual player selects its prescription according to a control law $\chi_t^i$, i.e. $\Gamma_t^i = \chi_t^i(\rv{C}_t)$.
 For virtual player $i$, the collection of control laws over the entire time horizon $\vct{\chi}^i = (\chi_1^i,\dots,\chi_T^i)$ is referred to as its control strategy. Let   $\mathcal{H}_t^i$ be the set of all possible control laws for virtual player $i$ at time $t$  and let  $\mathcal{H}^i$ be the set of all possible control strategies for virtual player $i$, i.e. $\mathcal{H}^i = \mathcal{H}_1^i \times \dots \times \mathcal{H}_T^i$. 
The total cost associated with the game for a strategy profile $(\vct{\chi}^1,\vct{\chi}^2)$ is
\begin{align}
\mathcal{J}(\vct{\chi}^1,\vct{\chi}^2)=\E^{(\vct{\chi}^1,\vct{\chi}^2)}\left[\sum_{t=1}^T c_t(\rv{X}_t,\rv{U}_t^{1},\rv{U}_t^2)\right], \label{eq:virtualJv}
\end{align}
where the function $c_t$ is the same as in Game $\gm{G}$.

The following lemma establishes a connection between the original game $\gm{G}$ and the virtual game $\gm{G}_v$ constructed above. 

\begin{lemma}\label{virtlemma}
Let $S^u(\gm{G}_v)$ and $S^l(\gm{G}_v)$ be, respectively, the upper and lower values of the virtual game $\gm{G}_v$. Then, \[S^l(\gm{G}) = S^l(\gm{G}_v) ~~\mbox{and}~~ S^u(\gm{G}) = S^u(\gm{G}_v).\] 
Consequently, if a Nash equilibrium exists in the original game $\gm{G}$, then $S(\gm{G})=S^l(\gm{G}_v) = S^u(\gm{G}_v).$ 
\end{lemma}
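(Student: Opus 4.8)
The plan is to exhibit an explicit, cost-preserving bijection between the strategy profiles of $\gm{G}$ and those of $\gm{G}_v$ and then push the $\inf$/$\sup$ operations through it. For player $i$ and time $t$, a control law $g_t^i:\mathcal{I}_t^i\to\Delta\mathcal{U}_t^i$ is, by \eqref{eq:stg}, a map $g_t^i:\mathcal{P}_t^i\times\mathcal{C}_t\to\Delta\mathcal{U}_t^i$; ``currying'' it gives $\chi_t^i:\mathcal{C}_t\to\mathcal{B}_t^i$ defined by letting $\chi_t^i(c_t)$ be the prescription $p_t^i\mapsto g_t^i(p_t^i,c_t)$, and conversely one sets $g_t^i(p_t^i,c_t):=\big(\chi_t^i(c_t)\big)(p_t^i)$. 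Since all the sets involved are finite and $\mathcal{B}_t^i$ is by definition the set of all maps from $\mathcal{P}_t^i$ to $\Delta\mathcal{U}_t^i$, these operations are mutually inverse, so $\vct{g}^i\mapsto\vct{\chi}^i$ is a bijection from $\G^i$ onto $\mathcal{H}^i$; write $(\vct{\chi}^1,\vct{\chi}^2)$ for the image of $(\vct{g}^1,\vct{g}^2)$.

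Next I would show that for corresponding profiles the two games induce the same distribution over the quantity being averaged. Realize both games on the common probability space carrying the primitives $\rv{X}_1,\rv{W}^s_{1:T},\rv{W}^{1:2}_{1:T}$ and the randomization seeds $\rv{V}^{1:2}_{1:T}$, with their common product law. By induction on $t$ the processes $(\rv{X}_t,\rv{P}_t^{1:2},\rv{C}_t,\rv{Y}_t^{1:2},\rv{U}_t^{1:2})$ produced by $\gm{G}$ under $(\vct{g}^1,\vct{g}^2)$ and by $\gm{G}_v$ under $(\vct{\chi}^1,\vct{\chi}^2)$ coincide sample-path-wise: the base case holds since $\rv{X}_1$ and $\rv{P}_1^{1:2},\rv{C}_1$ are the same functions of the primitives in both games; for the step, the action in $\gm{G}$ is $\rv{U}_t^i=\kappa\big(g_t^i(\rv{P}_t^i,\rv{C}_t),\rv{V}_t^i\big)$ while in $\gm{G}_v$ it is $\kappa\big((\chi_t^i(\rv{C}_t))(\rv{P}_t^i),\rv{V}_t^i\big)=\kappa\big(g_t^i(\rv{P}_t^i,\rv{C}_t),\rv{V}_t^i\big)$, i.e.\ identical, and the updates \eqref{statevol}, \eqref{obseq}, \eqref{privevol}, \eqref{commonevol} and their counterparts \eqref{virdyn1}--\eqref{virdyn5} use the same transformations $f_t,h_t^i,\xi_{t+1}^i,\zeta_{t+1}$, so the coincidence propagates to time $t+1$. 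Hence $\sum_{t=1}^T c_t(\rv{X}_t,\rv{U}_t^1,\rv{U}_t^2)$ has the same law under both, and
\begin{equation*}
J(\vct{g}^1,\vct{g}^2)=\mathcal{J}(\vct{\chi}^1,\vct{\chi}^2).
\end{equation*}

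From here the conclusions are immediate. Since $\vct{g}^i\mapsto\vct{\chi}^i$ is a payoff-preserving bijection $\G^i\to\mathcal{H}^i$ for $i=1,2$, applying $\inf_{g^1}\sup_{g^2}$ on one side is the same as $\inf_{\chi^1}\sup_{\chi^2}$ on the other, giving $S^u(\gm{G})=S^u(\gm{G}_v)$, and likewise $\sup_{g^2}\inf_{g^1}$ yields $S^l(\gm{G})=S^l(\gm{G}_v)$. If $\gm{G}$ has a Nash equilibrium, the Proposition stated above gives $S^l(\gm{G})=S^u(\gm{G})=S(\gm{G})$, so $S(\gm{G})=S^l(\gm{G}_v)=S^u(\gm{G}_v)$.

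The only genuinely delicate point is the distribution-coincidence step. One must be careful that $\gm{G}$ lets each player act on the full information $(\rv{P}_t^i,\rv{C}_t)$ whereas the virtual player in $\gm{G}_v$ chooses its prescription from $\rv{C}_t$ alone; the resolution is precisely the currying observation, a map on $\mathcal{P}_t^i\times\mathcal{C}_t$ being the same object as a $\mathcal{C}_t$-indexed family of maps on $\mathcal{P}_t^i$, so nothing is lost or gained. It is also worth recording that Assumption~\ref{infevolve} is exactly what makes $\rv{C}_t$ and $\rv{P}_t^i$ evolve through the fixed transformations $\zeta_{t+1},\xi_{t+1}^i$ that define the dynamics of $\gm{G}_v$, so that the induction closes; without it the two games would not be comparable in this way.
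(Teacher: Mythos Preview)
Your proof is correct and follows essentially the same approach as the paper: both rely on the currying bijection $\mathcal{M}^i:\G^i\to\H^i$ (which the paper imports from \cite{nayyar2014common} and you construct explicitly) together with the cost-preservation identity $J(g^1,g^2)=\mathcal{J}(\mathcal{M}^1(g^1),\mathcal{M}^2(g^2))$, after which the $\inf/\sup$ are pushed through. Your version is more self-contained, supplying the sample-path coupling argument that the paper outsources to the reference.
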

\begin{proof}
See Appendix \ref{virtlemmaproof}.\qed
\end{proof}


The authors in \cite{nayyar2014common} use the virtual game to find equilibrium costs and strategies for a stochastic dynamic game of asymmetric information. However, the methodology in \cite{nayyar2014common} is applicable \emph{only  under the assumption that the posterior beliefs on state $\rv{X}_t$ and private information $\rv{P}^{1,2}_t$ given the common information $C_t$ do not depend on the strategy profile being used} (see Assumption 2 in \cite{nayyar2014common}). We will refer to this assumption as the \emph{strategy-independent beliefs} (SIB) assumption. As pointed out in \cite{nayyar2014common}, the SIB assumption is satisfied by some special system models and information structures but is not true for general stochastic dynamic games.  A simple example which does not satisfy the SIB assumption is the following delayed sharing information structure \cite{nayyar2010optimal}: Consider game $\gm{G}$ with common information $C_t = 
\{Y^{1,2}_{1:t-2},U^{1,2}_{1:t-2} \}$ and $P^i_t = \{Y^i_t, Y^i_{t-1}, U^i_{t-1}\}$. Another example of a game where the SIB assumption fails is presented in Section \ref{sec:incomp}.
 
Thus, we are faced with the following situation: if our zero-sum game satisfies the SIB assumption, we can adopt the results in \cite{nayyar2014common} to find equilibrium costs (i.e. the value) of our game. However, if the zero-sum game does not satisfy the SIB assumption, then the methodology of \cite{nayyar2014common} is inapplicable. In the next section, we will develop a methodology to bound the upper and lower values of the zero-sum game $\gm{G}$ even when the game does not satisfy the SIB assumption. 

%
%
%
%

\section{Expanded Virtual Game $\gm{G}_{e}$ with Prescription History}\label{sec:expanded}

In order to circumvent the SIB assumption, we now construct an expanded virtual game $\gm{G}_e$ by increasing the amount of information available to virtual players in game $\gm{G}_v$. In this new game $\gm{G}_e$, the state dynamics, observation processes, primitive random variables and cost function are all the same as in the  game $\gm{G}_v$. The only difference is in the information used by the virtual players to select their prescriptions. The virtual players now have access to the common information $\rv{C}_t$ \textit{as well as} all the past prescriptions of both players, i.e., $\Gamma_{1:t-1}^{1:2}$.   Virtual player $i$ selects its prescription at time $t$  using a control law $\tilde{\chi}_t^i$, i.e, $\Gamma_t^i = \tilde{\chi}_t^i(\rv{C}_t,\Gamma_{1:t-1}^{1:2}).$
Let $\tilde{\mathcal{H}}_t^i$ be the set of all such control laws at time $t$ for virtual player $i$. 
$\tilde{\mathcal{H}}^i  := \tilde{\mathcal{H}}_1^i \times \dots \times \tilde{\mathcal{H}}_T^i$ is the set of all control strategies for player $i$. 
The total cost associated with the game for a strategy profile $(\tilde{\vct{\chi}}^1,\tilde{\vct{\chi}}^2)$ is
\begin{align}
{\mathcal{J}}(\tilde{\vct{\chi}}^1,\tilde{\vct{\chi}}^2)=\E^{(\tilde{\vct{\chi}}^1,\tilde{\vct{\chi}}^2)}\left[\sum_{t=1}^T c_t(\rv{X}_t,\rv{U}_t^{1},\rv{U}_t^2)\right]. \label{eq:virtualJ}
\end{align}

\begin{remark}
Note that any strategy $\chi^i \in \H^i$ is equivalent to the strategy $\tilde{\chi}^i \in \tilde{\H}^i$ that satisfies the following condition: for each time $t$ and for each realization of common information $\vct{c}_t \in \C_t$,
\begin{align}
\tilde{\chi}_t^i(\vct{c}_t,\gamma_{1:t-1}^{1:2}) = \chi_t^i(\vct{c}_t) \quad \forall\;\gamma_{1:t-1}^{1:2} \in \mathcal{B}_{1:t-1}^{1:2}.
\end{align}
Hence, with slight abuse of notation, we can say that the strategy space $\H^i$ in the virtual game $\gm{G}_v$ is a subset of the strategy space $\tilde{\H}^i$ in the expanded game $\gm{G}_e$. {For this reason, the function $\mathcal{J}$ in \eqref{eq:virtualJ} can be thought of as an extension of the  function $\mathcal{J}$ in \eqref{eq:virtualJv}.    }
\end{remark}

\begin{remark}
Expansion of information structures has been used in prior work to find equilibrium costs/strategies. See, for example, \cite{bacsar1981saddle} which studies a linear stochastic differential game where both players have a common noisy observation of the state.
\end{remark}

\subsection{Upper and Lower Values of Games $\gm{G}_v$ and $\gm{G}_e$}

We will now establish the relationship between the upper and lower values of the expanded game $\gm{G}_e$ and the virtual game $\gm{G}_v$. To do so, we define the following mappings between the strategies in games $\gm{G}_v$ and $\gm{G}_{e}$. 

\begin{definition}\label{def:rho}
Let $\varrho^i: \tilde{\H}^1 \times  \tilde{\H}^2 \rightarrow \H^i$ be an operator that maps a strategy profile  $(\vct{\tilde{\chi}}^1,\vct{\tilde{\chi}}^2)$ in virtual game $\gm{G}_e$ to a strategy  $ \vct{{\chi}}^i$ for virtual player $i$ in  game $\gm{G}_v$ as follows: For $t=1,2,\ldots,T,$
\begin{align}
\chi_t^i(\vct{c}_t) := \tilde{\chi}_t^i(\vct{c}_t, \tilde{\gamma}_{1:t-1}^{1:2}),
\end{align}
where  $\tilde{\gamma}_s^j = \tilde{\chi}_s^j(\vct{c}_s,\tilde{\gamma}_{1:s-1}^{1:2})$ for every $1\leq s \leq t-1$ and $j = 1,2$.
We denote the ordered pair $(\varrho^1,\varrho^2)$ by $\varrho$.
\end{definition}
The mapping $\varrho$ is defined in such a way that the strategy profile  $(\vct{\tilde{\chi}}^1,\vct{\tilde{\chi}}^2)$  and the strategy profile  $\varrho(\vct{\tilde{\chi}}^1,\vct{\tilde{\chi}}^2)$ induce identical dynamics in the respective games $\gm{G}_e$ and $\gm{G}_v$. 




\begin{lemma}\label{evolequi}
Let $(\vct{{\chi}}^1,\vct{{\chi}}^2)$ and $(\vct{\tilde{\chi}}^1,\vct{\tilde{\chi}}^2)$ be  strategy profiles for games $\gm{G}_v$ and $\gm{G}_{e}$, such that $\vct{\chi}^i = \varrho^i(\vct{\tilde{\chi}}^1,\vct{\tilde{\chi}}^2)$, $i=1,2$. Then, 
\begin{align}
\mathcal{J}(\vct{{\chi}}^1,\vct{{\chi}}^2) = {\mathcal{J}}(\vct{\tilde{\chi}}^1,\vct{\tilde{\chi}}^2).
\end{align}
\end{lemma}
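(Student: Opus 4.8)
The goal is to show that the two strategy profiles induce the same joint distribution over all the relevant random variables (state, private information, observations, actions, prescriptions, common information), from which equality of the expected costs follows immediately since the cost functions $c_t$ are identical in both games. The natural tool is induction on time $t$, tracking the joint distribution of the tuple $(\rv{X}_t, \rv{P}_t^{1:2}, \rv{C}_t)$ in game $\gm{G}_v$ under $(\vct{\chi}^1,\vct{\chi}^2)$ and the tuple $(\rv{X}_t, \rv{P}_t^{1:2}, \rv{C}_t, \Gamma_{1:t-1}^{1:2})$ in game $\gm{G}_e$ under $(\vct{\tilde\chi}^1,\vct{\tilde\chi}^2)$. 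The key structural observation, already flagged in the remark following Definition \ref{def:rho}, is that $\varrho$ is constructed precisely so that along any realization $\vct{c}_t$ of the common information, the prescription $\gamma_t^i$ chosen by $\chi_t^i$ equals the prescription chosen by $\tilde\chi_t^i$ when fed the prescription history $\tilde\gamma_{1:t-1}^{1:2}$ generated recursively from $\vct{c}_{1:t}$. In other words, on the event that the common information realizations match, the prescriptions match as well.

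\textbf{Key steps.} First I would set up the coupling: since both games are driven by the same primitive random variables $\rv{X}_1, \rv{W}_{1:T}^s, \rv{W}_{1:T}^{1:2}$ and the same randomization seeds $\rv{V}_{1:T}^{1:2}$, and since the dynamics \eqref{virdyn1}--\eqref{virdyn5} are identical in $\gm{G}_v$ and $\gm{G}_e$, it suffices to show that for every realization of the primitive variables and seeds, the two games produce identical sample paths of $(\rv{X}_t,\rv{P}_t^{1:2},\rv{Y}_t^{1:2},\rv{U}_t^{1:2},\rv{C}_t,\Gamma_t^{1:2})$. I would prove by induction on $t$ the statement: conditioned on a fixed realization of primitives and seeds, the sample paths of $(\rv{X}_{1:t},\rv{P}_{1:t}^{1:2},\rv{C}_{1:t})$ agree in both games, and the prescriptions $\Gamma_s^{1:2}$ for $s \le t$ agree. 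The base case $t=1$ holds because $\rv{X}_1$, $\rv{P}_1^{1:2}$ and $\rv{C}_1$ are determined by the primitives alone (independent of strategy), and $\Gamma_1^i = \chi_1^i(\rv{C}_1) = \tilde\chi_1^i(\rv{C}_1, \emptyset)$ matches by the definition of $\varrho$. For the inductive step, given that $\rv{C}_t$ and $\Gamma_{1:t-1}^{1:2}$ agree, Definition \ref{def:rho} gives $\chi_t^i(\rv{C}_t) = \tilde\chi_t^i(\rv{C}_t, \Gamma_{1:t-1}^{1:2})$, so $\Gamma_t^{1:2}$ agree; then \eqref{virdyn4} gives identical $\rv{U}_t^{1:2}$ (same prescription, same $\rv{P}_t^i$, same seed $\rv{V}_t^i$), \eqref{virdyn1} and \eqref{virdyn3} give identical $\rv{X}_{t+1}$ and $\rv{Y}_{t+1}^{1:2}$, and finally \eqref{virdyn2} and \eqref{virdyn5} give identical $\rv{P}_{t+1}^{1:2}$ and $\rv{Z}_{t+1}$, hence identical $\rv{C}_{t+1}$. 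This closes the induction. Having established pathwise equality of $(\rv{X}_t,\rv{U}_t^{1:2})$ for all $t$, the stage costs $c_t(\rv{X}_t,\rv{U}_t^1,\rv{U}_t^2)$ are pathwise equal, so taking expectations over the common distribution of primitives and seeds yields $\mathcal{J}(\vct{\chi}^1,\vct{\chi}^2) = \mathcal{J}(\vct{\tilde\chi}^1,\vct{\tilde\chi}^2)$.

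\textbf{Main obstacle.} The only subtlety is bookkeeping: one must be careful that the prescription history $\tilde\gamma_{1:t-1}^{1:2}$ appearing inside $\tilde\chi_t^i$ in Definition \ref{def:rho} is the \emph{same} history that actually unfolds in game $\gm{G}_e$ along the realized common-information path --- i.e., that the recursive definition $\tilde\gamma_s^j = \tilde\chi_s^j(\vct{c}_s, \tilde\gamma_{1:s-1}^{1:2})$ used to define $\varrho$ coincides with the realized prescriptions in $\gm{G}_e$. This is exactly what the induction hypothesis supplies, because in $\gm{G}_e$ the prescription $\Gamma_s^j$ is itself a deterministic function of $\rv{C}_s$ and $\Gamma_{1:s-1}^{1:2}$, and by induction these equal $\vct{c}_s$ and $\tilde\gamma_{1:s-1}^{1:2}$. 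So there is no genuine difficulty beyond writing the induction cleanly; no measurability or existence issues arise since all spaces are finite. One could alternatively phrase the argument at the level of distributions rather than sample paths, but the pathwise coupling via shared primitives and seeds is the cleanest route.
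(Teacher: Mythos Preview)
Your proposal is correct and follows essentially the same approach as the paper's proof: both couple the two games via the shared primitive variables and randomization seeds, then run an induction on $t$ to establish pathwise equality of states, private and common information, prescriptions, and actions, from which equality of the expected costs is immediate. The paper's Appendix~\ref{evolequiproof} is slightly terser in its bookkeeping, but your identification of the ``only subtlety'' (that the recursively defined prescription history in Definition~\ref{def:rho} coincides with the realized history in $\gm{G}_e$) is exactly the content of the paper's inductive step.
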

\begin{proof}
See Appendix \ref{evolequiproof}.\qed
\end{proof}

The following theorem connects the upper and lower values of the two virtual games and the original game.
\begin{theorem}\label{origvirt}
The lower and upper values of the three games defined above satisfy the following:
\begin{align*}
S^l(\gm{G}) =S^l(\gm{G}_v) \leq S^l(\gm{G}_e) \leq S^u(\gm{G}_e) \leq  S^u(\gm{G}_v) = S^u(\gm{G}).
\end{align*}
Consequently, if a Nash equilibrium exists in the original game $\gm{G}$, then $S(\gm{G}) = S^l(\gm{G}_e) = S^u(\gm{G}_e)$.
\end{theorem}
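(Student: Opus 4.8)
The plan is to establish the chain of inequalities term by term, using the two lemmas already available. The two equalities at the ends, $S^l(\gm{G}) = S^l(\gm{G}_v)$ and $S^u(\gm{G}) = S^u(\gm{G}_v)$, are exactly the content of Lemma \ref{virtlemma}, so nothing remains to be done there. The middle inequality $S^l(\gm{G}_e) \leq S^u(\gm{G}_e)$ is the generic fact that the lower value of any game is at most its upper value (a straightforward $\sup\inf \leq \inf\sup$ argument, valid for an arbitrary function of two arguments). So the real work is in the two outer inequalities involving $\gm{G}_e$, namely $S^l(\gm{G}_v) \leq S^l(\gm{G}_e)$ and $S^u(\gm{G}_e) \leq S^u(\gm{G}_v)$.

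For $S^u(\gm{G}_e) \leq S^u(\gm{G}_v)$: recall from the remark that $\H^i \subseteq \tilde{\H}^i$ (a strategy in the smaller game embeds as a prescription-history-independent strategy in the expanded game) and that $\mathcal{J}$ on $\gm{G}_e$ extends $\mathcal{J}$ on $\gm{G}_v$. Hence, writing the upper value of $\gm{G}_e$ as $\inf_{\tilde{\chi}^1 \in \tilde{\H}^1} \sup_{\tilde{\chi}^2 \in \tilde{\H}^2} \mathcal{J}(\tilde{\chi}^1, \tilde{\chi}^2)$, I would restrict the outer infimum to the subset $\H^1 \subseteq \tilde{\H}^1$; this can only increase the infimum. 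But I must be careful: the inner supremum in $\gm{G}_e$ is still over the larger set $\tilde{\H}^2$, not over $\H^2$, so this restriction alone gives $S^u(\gm{G}_e) \leq \inf_{\chi^1 \in \H^1} \sup_{\tilde{\chi}^2 \in \tilde{\H}^2} \mathcal{J}(\chi^1, \tilde{\chi}^2)$, and I still need to argue that for a fixed $\chi^1 \in \H^1$ the supremum over $\tilde{\H}^2$ equals the supremum over $\H^2$. This is where Lemma \ref{evolequi} enters: given any $\tilde{\chi}^2 \in \tilde{\H}^2$, the strategy $\chi^2 := \varrho^2(\chi^1, \tilde{\chi}^2) \in \H^2$ satisfies $\mathcal{J}(\chi^1, \chi^2) = \mathcal{J}(\chi^1, \tilde{\chi}^2)$ (here I am using that $\chi^1 \in \H^1$ is already of the expanded form and is its own image under $\varrho^1$), so player 2 gains nothing from the extra information. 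Thus the two suprema coincide and $S^u(\gm{G}_e) \leq S^u(\gm{G}_v)$. The inequality $S^l(\gm{G}_v) \leq S^l(\gm{G}_e)$ is the symmetric statement: write $S^l(\gm{G}_v) = \sup_{\chi^2 \in \H^2} \inf_{\chi^1 \in \H^1} \mathcal{J}(\chi^1, \chi^2)$, embed $\H^2 \subseteq \tilde{\H}^2$, and for a fixed $\chi^2 \in \H^2$ use Lemma \ref{evolequi} with $\varrho^1(\tilde{\chi}^1, \chi^2)$ to show the inner infimum over $\tilde{\H}^1$ matches the one over $\H^1$; restricting the outer supremum of $\gm{G}_e$'s lower value to $\H^2$ then only decreases it, yielding $S^l(\gm{G}_e) \geq S^l(\gm{G}_v)$.

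The final sentence about the Nash equilibrium is then immediate: if a Nash equilibrium exists in $\gm{G}$, then by Proposition 1 we have $S^l(\gm{G}) = S^u(\gm{G})$, so the entire chain of inequalities collapses to a chain of equalities, and in particular $S(\gm{G}) = S^l(\gm{G}_e) = S^u(\gm{G}_e)$.

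The main obstacle, and the only place where care is genuinely required, is the subtlety I flagged above: the "expand the strategy set" argument does not by itself suffice, because expanding one player's strategy set in a min-max also changes what the \emph{other} player is optimizing against. One must combine the set-inclusion observation with the value-preservation Lemma \ref{evolequi} and the structure of $\varrho$ to show that the additional prescription-history information is useless to a player whose opponent is committed to a strategy in the smaller class $\H$. Getting the quantifier order right — fix the $\H$-player's strategy first, then collapse the opponent's enlarged optimization — is the crux of the argument; the rest is bookkeeping.
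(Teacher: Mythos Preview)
Your proposal is correct and follows essentially the same approach as the paper's proof: invoke Lemma~\ref{virtlemma} for the end equalities, then for $S^u(\gm{G}_e) \leq S^u(\gm{G}_v)$ restrict the outer infimum to $\H^1 \subseteq \tilde{\H}^1$ and use Lemma~\ref{evolequi} together with the observation that $\varrho^1(\chi^1,\tilde{\chi}^2)=\chi^1$ when $\chi^1 \in \H^1$ to show the inner supremum over $\tilde{\H}^2$ collapses to one over $\H^2$; the lower-value inequality is symmetric. Your identification of the quantifier-order subtlety as the crux is exactly right and matches the paper's handling of it.
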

\begin{proof}
See Appendix \ref{origvirtproof}.\qed
\end{proof}

Using Theorem \ref{origvirt}, we can obtain bounds on the upper and lower values of the original game by computing the upper and lower values of the expanded game $\gm{G}_e$.

\subsection{The Dynamic Programming Characterization}\label{dpsec}
We now describe a methodology for finding the upper and lower values of the expanded game $\gm{G}_e$. Suppose the virtual players are using the strategy profile $(\tilde{\vct{\chi}}^{1}, \tilde{\vct{\chi}}^{2})$  in the expanded game $\gm{G}_e$. Let $\Pi_t$ be the virtual players' belief on the  state and private information based on their information  in game $\gm{G}_e$. Thus, $\Pi_t$ is 
 defined as 
\begin{align*}
\Pi_t(\vct{x}_t,\vct{p}_t^{1:2}) := \Py^{(\tilde{\vct{\chi}}^{1}, \tilde{\vct{\chi}}^{2})}(\rv{X}_t = \vct{x}_t,\rv{P}_t^{1:2} = \vct{p}_t^{1:2} \mid \rv{C}_t,\Gamma_{1:t-1}^{1:2}), \; \forall x_t,p_t^1,p_t^2.
\end{align*}
We refer to $\Pi_t$ as the \emph{common information belief} (CIB). {$\Pi_t$ takes values in the set $\mathcal{S}_t := \Delta(\mathcal{X}_t \times \mathcal{P}^1_t \times \mathcal{P}^2_t)$.}

\begin{definition}
{Given a belief $\pi$ on the state and private informations at time $t$ and mappings $\gamma^i, i=1,2,$ from $\mathcal{P}^i_t$ to $\Delta\mathcal{U}^i_t$, we define $\gamma^i(p_t^i ; u)$ as the probability assigned to action $u$ under the probability distribution $\gamma^i(p_t^i)$. 
Also, define
\begin{align}
\label{tildec}\tilde{c}_t(\pi,\gamma^1,\gamma^2) :=&\sum_{\vct{x}_t,\vct{p}_t^{1:2},\vct{u}_t^{1:2}}c_t(\vct{x}_t, \vct{u}_t^1,\vct{u}_t^2) \pi(\vct{x}_t,\vct{p}_t^1,\vct{p}_t^2)\gamma^1(\vct{p}_t^1 ; \vct{u}_t^1)\gamma^2(\vct{p}_t^2;\vct{u}_t^2).
\end{align}}
{$\tilde{c}_t(\pi,\gamma^1,\gamma^2)$ is the expected value of the  cost at time $t$ if the state and private informations have $\pi$ as their probability distribution and $\gamma^1,\gamma^2$ are the prescriptions chosen by the virtual players.}
\end{definition}
\begin{lemma}\label{infstate}
For any strategy profile $(\tilde{\vct{\chi}}^{1}, \tilde{\vct{\chi}}^{2})$, the common information based belief $\Pi_t$ evolves almost surely as
\begin{align}
\Pi_{t+1} = F_t(\Pi_t, \Gamma_t^{1:2},\vct{Z}_{t+1}), ~~ t \geq 1, \label{eq:pieq}
\end{align}
where $F_t$ is a fixed transformation that does not depend on the virtual players' strategies. Further, the total expected cost can be expressed as
\begin{align}
{\mathcal{J}}(\tilde{\vct{\chi}}^1,\tilde{\vct{\chi}}^2)=\E^{(\tilde{\vct{\chi}}^1,\tilde{\vct{\chi}}^2)}\left[\sum_{t=1}^T \tilde{c}_t(\Pi_t,\Gamma_t^1,\Gamma_t^2)\right],
\end{align}
where $\tilde{c}_t$ is as defined in equation (\ref{tildec}).
\end{lemma}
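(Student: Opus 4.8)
The plan is to establish both assertions by directly computing conditional distributions, exploiting the mutual independence of the primitive random variables and the fact that in $\gm{G}_e$ each prescription $\Gamma_t^i=\tilde{\chi}_t^i(\rv{C}_t,\Gamma_{1:t-1}^{1:2})$ is a deterministic function of the virtual players' information $(\rv{C}_t,\Gamma_{1:t-1}^{1:2})$.

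\textbf{Belief recursion.} I would fix a strategy profile $(\tilde{\vct{\chi}}^1,\tilde{\vct{\chi}}^2)$ and a positive‑probability realization $(\vct{c}_t,\gamma_{1:t-1}^{1:2})$ of $(\rv{C}_t,\Gamma_{1:t-1}^{1:2})$; this also pins down $\gamma_t^{1:2}=\tilde{\chi}_t^{1:2}(\vct{c}_t,\gamma_{1:t-1}^{1:2})$. The key structural fact I would prove first, by a short induction through \eqref{statevol}--\eqref{privevol} and \eqref{virdyn1}--\eqref{virdyn5}, is that $(\rv{C}_t,\Gamma_{1:t-1}^{1:2},\rv{X}_t,\rv{P}_t^{1:2})$ is a deterministic function of the primitive variables $(\rv{X}_1,\rv{W}_{1:t-1}^s,\rv{W}_{1:t}^{1:2},\rv{V}_{1:t-1}^{1:2})$, while the ``fresh'' randomness $(\rv{V}_t^{1:2},\rv{W}_t^s,\rv{W}_{t+1}^{1:2})$ driving the one‑step transition to $(\rv{X}_{t+1},\rv{P}_{t+1}^{1:2},\rv{Z}_{t+1})$ involves a disjoint set of primitives. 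By mutual independence, $(\rv{V}_t^{1:2},\rv{W}_t^s,\rv{W}_{t+1}^{1:2})$ is then independent of $(\rv{C}_t,\Gamma_{1:t-1}^{1:2},\rv{X}_t,\rv{P}_t^{1:2})$ and keeps its prior law under the conditioning. Since $(\rv{X}_t,\rv{P}_t^{1:2})$ has conditional law $\Pi_t=:\pi_t$ given $\{\rv{C}_t=\vct{c}_t,\Gamma_{1:t-1}^{1:2}=\gamma_{1:t-1}^{1:2}\}$, and since $(\rv{X}_{t+1},\rv{P}_{t+1}^{1:2},\rv{Z}_{t+1})$ is a fixed function of $(\rv{X}_t,\rv{P}_t^{1:2},\rv{V}_t^{1:2},\rv{W}_t^s,\rv{W}_{t+1}^{1:2})$ and $\gamma_t^{1:2}$, the conditional joint law of $(\rv{X}_{t+1},\rv{P}_{t+1}^{1:2},\rv{Z}_{t+1})$ given this event is a fixed function of $(\pi_t,\gamma_t^{1:2})$, built only from the primitive distributions and the maps $f_t,h_{t+1}^i,\xi_{t+1}^i,\zeta_{t+1},\kappa$. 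Applying Bayes' rule to condition further on the realized $\rv{Z}_{t+1}$ (on the branch where it has positive conditional probability) then expresses $\Pi_{t+1}$ as a fixed map $F_t(\pi_t,\gamma_t^{1:2},\rv{Z}_{t+1})$; on the null set where the realized $\rv{Z}_{t+1}$ has zero conditional probability, $F_t$ may be set arbitrarily. Because $\Gamma_t^{1:2}$ is $(\rv{C}_t,\Gamma_{1:t-1}^{1:2})$‑measurable, this gives $\Pi_{t+1}=F_t(\Pi_t,\Gamma_t^{1:2},\rv{Z}_{t+1})$ almost surely with $F_t$ strategy‑independent (and $\Pi_1=\Py(\rv{X}_1=\cdot,\rv{P}_1^{1:2}=\cdot\mid\rv{C}_1)$ is likewise a fixed prior).

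\textbf{Cost.} For each $t$, by the smoothing property, $\E^{(\tilde{\vct{\chi}}^1,\tilde{\vct{\chi}}^2)}[c_t(\rv{X}_t,\rv{U}_t^{1:2})]=\E^{(\tilde{\vct{\chi}}^1,\tilde{\vct{\chi}}^2)}\bigl[\E[c_t(\rv{X}_t,\rv{U}_t^{1:2})\mid\rv{C}_t,\Gamma_{1:t}^{1:2}]\bigr]$. Conditioning on a positive‑probability realization of $(\rv{C}_t,\Gamma_{1:t}^{1:2})$ is the same as conditioning on $(\rv{C}_t,\Gamma_{1:t-1}^{1:2})$ since $\Gamma_t^{1:2}$ is redundant there; this determines $\Pi_t=\pi_t$ and $\gamma_t^{1:2}$, under which $(\rv{X}_t,\rv{P}_t^{1:2})\sim\pi_t$ and $\rv{V}_t^{1:2}$ is (by the same independence argument) a pair of independent uniforms, independent of $(\rv{X}_t,\rv{P}_t^{1:2})$. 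Since $\rv{U}_t^i=\kappa(\gamma_t^i(\rv{P}_t^i),\rv{V}_t^i)$ and $\kappa(\cdot,\rv{V}_t^i)$ turns a distribution into a sample from it, the conditional law of $(\rv{U}_t^1,\rv{U}_t^2)$ given $(\rv{X}_t,\rv{P}_t^{1:2})$ is the product $\gamma_t^1(\rv{P}_t^1)\otimes\gamma_t^2(\rv{P}_t^2)$. Hence the inner expectation equals $\sum_{\vct{x}_t,\vct{p}_t^{1:2},\vct{u}_t^{1:2}}c_t(\vct{x}_t,\vct{u}_t^{1:2})\pi_t(\vct{x}_t,\vct{p}_t^{1:2})\gamma_t^1(\vct{p}_t^1;\vct{u}_t^1)\gamma_t^2(\vct{p}_t^2;\vct{u}_t^2)=\tilde{c}_t(\pi_t,\gamma_t^1,\gamma_t^2)$, i.e. $\E[c_t(\rv{X}_t,\rv{U}_t^{1:2})\mid\rv{C}_t,\Gamma_{1:t}^{1:2}]=\tilde{c}_t(\Pi_t,\Gamma_t^1,\Gamma_t^2)$ a.s. Summing over the finitely many $t$ (all costs bounded, as the spaces are finite) and taking the outer expectation yields the claimed identity.

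\textbf{Main obstacle.} The one genuinely delicate step is the conditional‑independence claim: one must carefully track which primitive random variables each of $\rv{C}_t,\Gamma_{1:t-1}^{1:2},\rv{X}_t,\rv{P}_t^{1:2}$ depends on (itself a short induction using Assumption~\ref{infevolve} and \eqref{virdyn4}) in order to conclude that the noise driving the next transition is independent of the conditioning $\sigma$‑field \emph{jointly} with $(\rv{X}_t,\rv{P}_t^{1:2})$, and hence retains its prior law. Everything else is routine bookkeeping with Bayes' rule on positive‑probability events and the harmless almost‑sure caveat on null branches.
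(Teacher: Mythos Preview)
Your proposal is correct and follows essentially the same route as the paper: fix a positive-probability realization of $(\rv{C}_t,\Gamma_{1:t-1}^{1:2})$, use the strategy-independent one-step transition law $\Py[\vct{x}_{t+1},\vct{p}_{t+1}^{1:2},\vct{z}_{t+1}\mid \vct{x}_t,\vct{p}_t^{1:2},\vct{u}_t^{1:2}]$ together with $\pi_t$ and $\gamma_t^{1:2}$ to get the joint conditional law of $(\rv{X}_{t+1},\rv{P}_{t+1}^{1:2},\rv{Z}_{t+1})$, then apply Bayes' rule to define $F_t$ (arbitrary on the null branch); the cost identity is the tower property plus the same conditional-independence step. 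The only difference is one of emphasis: you justify the key conditional independence by explicitly tracking which primitive variables each quantity depends on, whereas the paper simply appeals to ``the structure of the system dynamics in game $\gm{G}_e$ described by equations \eqref{virdyn1}--\eqref{virdyn5}'' and writes down the resulting maps $P_t^j,P_t^m,F_t$ directly.
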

\begin{proof}
See Appendix \ref{infstateproof}.\qed
\end{proof}
\begin{remark}\label{fremark}
{Because \eqref{eq:pieq} is an almost sure equality, the transformation $F_t$ in Lemma \ref{infstate} is not necessarily unique. In Appendix \ref{infstateproof}, we identify a class of transformations such that for any transformation $F_t$ in this class, Lemma \ref{infstate} holds. We denote this class by $\mathscr{B}$.}
\end{remark}

We now describe two dynamic programs, one for each virtual player in $\gm{G}_e$.
\subsubsection{The min-max dynamic program}
The minimizing virtual player (virtual player 1) in game $\gm{G}_e$ solves the following dynamic program. Define $V^u_{T+1}(\pi_{T+1}) = 0$ for every  $\pi_{T+1}$. In a backward inductive manner, at each time $t \leq T$ and for each possible  common information belief $\pi_t$ and prescriptions $\gamma^1_t, \gamma^2_t$, define 
\begin{align}
w^u_t(\pi_t,\gamma_t^1,\gamma_t^2) &:= \tilde{c}_t(\pi_t,\gamma_t^1,\gamma_t^2) + \E[V^u_{t+1}(F_t(\pi_{t},\gamma_t^{1:2},\rv{Z}_{t+1}))\mid \pi_t,\gamma_t^{1:2}]\\
\label{minequa}V_t^u(\pi_t) &:= \inf_{{\gamma}_t^1} \sup_{\gamma_t^2}w_t^u(\pi_t,\gamma_t^1,\gamma_t^2).
\end{align}

\subsubsection{The max-min dynamic program}
The maximizing virtual player (virtual player 2) in game $\gm{G}_e$ solves the following dynamic program. Define $V^l_{T+1}(\pi_{T+1}) = 0$ for every  $\pi_{T+1}$. In a backward inductive manner, at each time $t \leq T$ and for each possible  common information belief $\pi_t$ and prescriptions $\gamma^1_t, \gamma^2_t$, define 
\begin{align}
w^l_t(\pi_t,\gamma_t^1,\gamma_t^2) &:= \tilde{c}_t(\pi_t,\gamma_t^1,\gamma_t^2)  + \E[V^l_{t+1}(F_t(\pi_{t},\gamma_t^{1:2},\rv{Z}_{t+1}))\mid \pi_t,\gamma_t^{1:2}]\\
\label{maxequa}V_t^l(\pi_t) &:= \sup_{\gamma_t^2}\inf_{{\gamma}_t^1}w_t^l(\pi_t,\gamma_t^1,\gamma_t^2).
\end{align}


\begin{lemma}\label{equiexistlemma}
{For any realization of common information based belief $\pi_t$, the \emph{inf} and \emph{sup} in \eqref{minequa} are achieved, i.e. there exists a measurable mapping $\Xi^1_t: \mathcal{S}_t \rightarrow \mathcal{B}_t^1$ such that
\begin{align}
V_t^u(\pi_t) &=  \min_{\gamma_t^1}\max_{\gamma_t^2}w_t^u(\pi_t,\gamma_t^1,\gamma_t^2)= \max_{\gamma_t^2}w_t^u(\pi_t,\Xi_t^1(\pi_t),\gamma_t^2).
\end{align}
Similarly, for any realization of common information based belief $\pi_t$, the \emph{sup} and \emph{inf} in \eqref{maxequa} are achieved, i.e,  there exists a measurable mapping $\Xi_t^2: \mathcal{S}_t \rightarrow \mathcal{B}_t^2$ such that
\begin{align}
V_t^l(\pi_t) &=  \max_{\gamma_t^2}\min_{\gamma_t^1}w_t^l(\pi_t,\gamma_t^1,\gamma_t^2) =\min_{{\gamma}_t^1}w_t^l(\pi_t,\gamma_t^1,\Xi_t^2(\pi_t)).
\end{align}
}
\end{lemma}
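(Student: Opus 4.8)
The plan is to establish the existence of optimal prescription mappings by first showing, at each fixed $\pi_t$, that the inner max-min (or min-max) problem attains its extrema, and then showing the resulting optimizers can be selected measurably in $\pi_t$. I would begin by recording the relevant compactness and continuity facts. The prescription space $\mathcal{B}_t^i$, being the set of maps from the finite set $\mathcal{P}_t^i$ to the simplex $\Delta\mathcal{U}_t^i$, is a compact subset of a finite-dimensional Euclidean space; likewise $\mathcal{S}_t = \Delta(\mathcal{X}_t\times\mathcal{P}_t^1\times\mathcal{P}_t^2)$ is compact. Next I would argue that $w_t^u(\pi_t,\gamma_t^1,\gamma_t^2)$ is continuous (in fact jointly continuous) in $(\pi_t,\gamma_t^1,\gamma_t^2)$. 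The term $\tilde{c}_t$ is a polynomial in the entries of $\pi_t,\gamma_t^1,\gamma_t^2$ by its defining formula \eqref{tildec}, hence continuous. For the continuation term $\E[V^u_{t+1}(F_t(\pi_t,\gamma_t^{1:2},\rv{Z}_{t+1}))\mid \pi_t,\gamma_t^{1:2}]$, I would unfold the conditional expectation as a finite sum over realizations $z_{t+1}$ of $\rv{Z}_{t+1}$, weighted by $\Py(z_{t+1}\mid \pi_t,\gamma_t^{1:2})$; both this weight and the argument $F_t(\pi_t,\gamma_t^{1:2},z_{t+1})$ depend continuously (polynomially, for a suitable representative $F_t\in\mathscr{B}$) on $(\pi_t,\gamma_t^{1:2})$, and by backward induction $V^u_{t+1}$ is continuous on $\mathcal{S}_{t+1}$ (the base case $V^u_{T+1}\equiv 0$ is trivially continuous, and the inductive step follows once we finish the current argument). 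This last point means the lemma and the continuity of $V_t^u$ should really be proved together by a single downward induction on $t$.

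With continuity and compactness in hand, the attainment of the $\inf$ and $\sup$ in \eqref{minequa} for each fixed $\pi_t$ is a standard minimax/Weierstrass argument: for fixed $\gamma_t^1$, $\sup_{\gamma_t^2} w_t^u(\pi_t,\gamma_t^1,\gamma_t^2)$ is attained because $w_t^u$ is continuous and $\mathcal{B}_t^2$ is compact, and the resulting function $\gamma_t^1\mapsto \max_{\gamma_t^2}w_t^u(\pi_t,\gamma_t^1,\gamma_t^2)$ is itself continuous (by the maximum theorem, or directly: a sup of an equicontinuous family), so its $\inf$ over the compact set $\mathcal{B}_t^1$ is attained. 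This yields, for each $\pi_t$, a minimizer, and symmetrically handles \eqref{maxequa}. It also shows $V_t^u(\pi_t) = \min_{\gamma_t^1}\max_{\gamma_t^2} w_t^u(\pi_t,\gamma_t^1,\gamma_t^2)$ is finite and continuous in $\pi_t$.

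The remaining and more delicate step is to produce a \emph{measurable} selector $\tilde{\chi}_t^{1*}:\mathcal{S}_t\to\mathcal{B}_t^1$ with $V_t^u(\pi_t) = \max_{\gamma_t^2} w_t^u(\pi_t,\tilde{\chi}_t^{1*}(\pi_t),\gamma_t^2)$. For this I would invoke a measurable selection theorem — e.g. the Kuratowski–Ryll-Nardzewski selection theorem, or the version for argmin of a Carathéodory function (continuous in the decision variable, measurable in the parameter), as in standard references on stochastic control (Bertsekas–Shreve). Concretely, define $\phi(\pi_t,\gamma_t^1) := \max_{\gamma_t^2} w_t^u(\pi_t,\gamma_t^1,\gamma_t^2)$, which we have argued is jointly continuous; then the set-valued map $\pi_t \rightrightarrows \arg\min_{\gamma_t^1}\phi(\pi_t,\gamma_t^1)$ has nonempty compact values and is measurable (indeed upper hemicontinuous), so it admits a measurable selection. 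The analogous argument with $\arg\max$ over $\gamma_t^2$ of $\min_{\gamma_t^1} w_t^l(\pi_t,\gamma_t^1,\gamma_t^2)$ gives $\tilde{\chi}_t^{2*}$. I expect the main obstacle to be bookkeeping: making sure the representative $F_t\in\mathscr{B}$ chosen in Appendix \ref{infstateproof} is in fact continuous (not merely measurable) so that the continuity induction goes through cleanly, and correctly threading the joint induction on $t$ for "$V_t^u$ continuous $\Rightarrow$ selector exists $\Rightarrow$ $V_t^u$ continuous at the previous stage." Everything else is an application of Weierstrass and a measurable-selection theorem.
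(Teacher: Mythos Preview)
Your overall approach mirrors the paper's: backward induction to establish continuity of $w_t^u$ and $V_t^u$, compactness for attainment, and a measurable selection theorem (the paper invokes Condition~3.3.2 in Hern\'andez-Lerma, equivalent to your Carath\'eodory/argmin argument).

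The one substantive gap is your claim that a representative $F_t\in\mathscr{B}$ can be chosen to depend continuously, even polynomially, on $(\pi_t,\gamma_t^{1:2})$. It cannot: every $F_t$ in $\mathscr{B}$ equals the Bayes ratio $P_t^j/P_t^m$ on $\{P_t^m>0\}$ and an arbitrary $G_t$ on $\{P_t^m=0\}$, and no choice of $G_t$ removes the discontinuity at the boundary of the zero set. You correctly flag this as the main obstacle, but it is not bookkeeping --- it is the crux. The paper's fix is to extend $V_{t+1}^u$ homogeneously to the cone $\bar{\mathcal{S}}_{t+1}=\{\alpha\pi:\alpha\in[0,1],\ \pi\in\mathcal{S}_{t+1}\}$ so that $V_{t+1}^u(\alpha\pi)=\alpha V_{t+1}^u(\pi)$ (Lemma~\ref{scalinglemma}); the continuation term then rewrites as $\sum_{z_{t+1}}V_{t+1}^u\bigl(P_t^j(\pi_t,\gamma_t^{1:2};z_{t+1},\cdot)\bigr)$, i.e., the normalization is absorbed and one composes the continuous $V_{t+1}^u$ with the \emph{trilinear} unnormalized map $P_t^j$, from which equicontinuity in each argument follows (Lemmas~\ref{equicontlemma1} and~\ref{equicontlemma}). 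An equivalent direct repair, which you could use instead of the homogeneity trick, is to argue that although $F_t$ is discontinuous, the product $P_t^m\cdot V_{t+1}^u(F_t)$ is continuous: on $\{P_t^m>0\}$ it is a composition of continuous maps, and as $P_t^m\to 0$ the product is squeezed to zero since $F_t$ stays in the compact simplex where $V_{t+1}^u$ is bounded.
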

\begin{proof}
See Appendix \ref{equiexistlemmaproof}.
\end{proof}

{\begin{definition}\label{stratdef} Define strategies $\tilde{\chi}^{1*}$ and $\tilde{\chi}^{2*}$ for virtual players 1 and 2 respectively as follows: for each instance of common information $c_t$ and prescription history $\gamma_{1:t-1}^{1:2}$, let
\begin{align}
\tilde{\chi}^{1*}_t(c_t,\gamma_{1:t-1}^{1:2}) &:= \Xi_t^1(\pi_t)\\
\tilde{\chi}^{2*}_t(c_t,\gamma_{1:t-1}^{1:2}) &:= \Xi_2^1(\pi_t),
\end{align}
where $\Xi_t^1$ and $\Xi_t^2$ are the mappings defined in Lemma \ref{equiexistlemma} and $\pi_t$ (which is a function of $c_t,\gamma_{1:t-1}^{1:2}$) is obtained in a forward inductive manner using the relation
\begin{align}
\pi_1(x_1,p_1^1,p_1^2) &= \Py[X_1 = x_1,P_1^1 = p_1^1,P_1^2 = p_1^2 \mid C_1 = c_1] ~ \forall\; x_1,p_1^1,p_1^2,\\
\pi_{\tau + 1} &= F_\tau(\pi_\tau, \gamma_\tau^1,\gamma_\tau^2,z_{\tau+1}), ~  1 \leq \tau < t.
\end{align}
Note that $F_\tau$ is the common information belief update function defined in Lemma \ref{infstate}.
\end{definition}}


The following theorem establishes that the two dynamic programs described above characterize the upper and lower values of game $\gm{G}_e$.
\begin{theorem}\label{dp}
The upper and lower values of the expanded virtual game $\gm{G}_e$ are given by 
\begin{align}
S^u(\gm{G}_e) &= \E[V_1^u(\Pi_1)],\\
S^l(\gm{G}_e) &= \E[V_1^l(\Pi_1)].
\end{align}
Further, the strategies $\tilde{\chi}^{1*}$ and $\tilde{\chi}^{2*}$ as defined in Definition \ref{stratdef} are, respectively, min-max and max-min strategies in the expanded virtual game $\gm{G}_e$.
\end{theorem}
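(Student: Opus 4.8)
The plan is to recognize, via Lemma~\ref{infstate}, that the expanded game $\gm{G}_e$ is equivalent to a zero-sum stochastic game of \emph{symmetric} information in which the ``state'' is the common information belief $\Pi_t$, the two virtual players simultaneously choose prescriptions $\Gamma_t^{1:2}$, the per-stage cost is $\tilde{c}_t(\Pi_t,\Gamma_t^{1:2})$, and $\Pi_t$ evolves as the controlled Markov chain \eqref{eq:pieq}. The recursions \eqref{minequa} and \eqref{maxequa} are then the natural backward dynamic programs for this game, and I would prove the theorem by establishing, for the min-max program, the two matching bounds $S^u(\gm{G}_e)\le\E[V_1^u(\Pi_1)]$ and $S^u(\gm{G}_e)\ge\E[V_1^u(\Pi_1)]$; the max-min program is handled symmetrically.

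\emph{Upper bound.} Let virtual player~1 use the CIB-based strategy $\tilde\chi^{1*}$ from Lemma~\ref{equiexistlemma}, and let $\tilde{\vct{\chi}}^2$ be an arbitrary strategy of player~2. I would show by backward induction on $t$ (base case $t=T+1$ trivial) that, almost surely,
\begin{align*}
\E^{(\tilde\chi^{1*},\tilde{\vct{\chi}}^2)}\!\left[\sum_{s=t}^{T}\tilde{c}_s(\Pi_s,\Gamma_s^{1:2})\,\Big|\,\rv{C}_t,\Gamma_{1:t-1}^{1:2}\right]\le V_t^u(\Pi_t).
\end{align*}
For the inductive step, condition on $(\rv{C}_t,\Gamma_{1:t-1}^{1:2})$, peel off the time-$t$ cost with the tower property, apply the induction hypothesis at $t+1$, and use the structure established in the proof of Lemma~\ref{infstate} (together with the choice of $F_t$ in the class $\mathscr{B}$, cf. Remark~\ref{fremark}) to identify the continuation term as $\E[V_{t+1}^u(F_t(\Pi_t,\Gamma_t^{1:2},\rv{Z}_{t+1}))\mid\Pi_t,\Gamma_t^{1:2}]$; the conditional cost-to-go then equals $w_t^u(\Pi_t,\tilde\chi_t^{1*}(\Pi_t),\Gamma_t^2)\le\max_{\gamma_t^2}w_t^u(\Pi_t,\tilde\chi_t^{1*}(\Pi_t),\gamma_t^2)=V_t^u(\Pi_t)$ by Lemma~\ref{equiexistlemma}. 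Taking $t=1$ and then an expectation gives $\mathcal{J}(\tilde\chi^{1*},\tilde{\vct{\chi}}^2)\le\E[V_1^u(\Pi_1)]$ for every $\tilde{\vct{\chi}}^2$, hence $S^u(\gm{G}_e)\le\E[V_1^u(\Pi_1)]$.

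\emph{Lower bound.} The crucial point is that the information $(\rv{C}_t,\Gamma_{1:t-1}^{1:2})$ is common to both virtual players in $\gm{G}_e$, so for any fixed strategy $\tilde{\vct{\chi}}^1$ of player~1, player~2 can reconstruct player~1's current prescription $\Gamma_t^1=\tilde\chi_t^1(\rv{C}_t,\Gamma_{1:t-1}^{1:2})$ from its own information (the same device used for Theorem~1 of \cite{nayyar2017information}). I would therefore let player~2 use the strategy that, from the observed history, computes $\Pi_t$ and $\gamma_t^1=\tilde\chi_t^1(\rv{C}_t,\Gamma_{1:t-1}^{1:2})$ and plays a maximizer of $\gamma_t^2\mapsto w_t^u(\Pi_t,\gamma_t^1,\gamma_t^2)$; such a maximizer exists because $\mathcal{B}_t^2$ is compact and $w_t^u$ is continuous in $\gamma_t^2$ (continuity being part of the proof of Lemma~\ref{equiexistlemma}), and a jointly measurable selection exists by a measurable maximum theorem, so this defines an admissible strategy in $\tilde{\H}^2$. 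A backward induction symmetric to the one above, with the inequalities reversed, gives $\E^{(\tilde{\vct{\chi}}^1,\tilde{\vct{\chi}}^2)}[\sum_{s=t}^T\tilde{c}_s\mid\rv{C}_t,\Gamma_{1:t-1}^{1:2}]\ge w_t^u(\Pi_t,\Gamma_t^1,\Gamma_t^2)=\max_{\gamma_t^2}w_t^u(\Pi_t,\Gamma_t^1,\gamma_t^2)\ge\inf_{\gamma_t^1}\max_{\gamma_t^2}w_t^u(\Pi_t,\gamma_t^1,\gamma_t^2)=V_t^u(\Pi_t)$, so $\mathcal{J}(\tilde{\vct{\chi}}^1,\tilde{\vct{\chi}}^2)\ge\E[V_1^u(\Pi_1)]$ for every $\tilde{\vct{\chi}}^1$, whence $S^u(\gm{G}_e)\ge\E[V_1^u(\Pi_1)]$. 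Combining the two bounds gives $S^u(\gm{G}_e)=\E[V_1^u(\Pi_1)]$; since $\sup_{\tilde{\vct{\chi}}^2}\mathcal{J}(\tilde\chi^{1*},\tilde{\vct{\chi}}^2)\le\E[V_1^u(\Pi_1)]=S^u(\gm{G}_e)$, the strategy $\tilde\chi^{1*}$ is a min-max strategy. The identity $S^l(\gm{G}_e)=\E[V_1^l(\Pi_1)]$ and the max-min optimality of $\tilde\chi^{2*}$ follow by interchanging the roles of the two players and of $\inf$ and $\sup$.

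The step I expect to be the main obstacle is making the lower-bound argument airtight: one must verify that player~2's ``reconstruct-and-best-respond'' rule is genuinely an admissible, jointly measurable element of $\tilde{\H}^2$, and --- more delicately --- justify replacing the conditional expectation given $(\rv{C}_t,\Gamma_{1:t-1}^{1:2},\Gamma_t^{1:2})$ by the ``$\mid\Pi_t,\Gamma_t^{1:2}$'' conditional expectation appearing inside $w_t^u$. This is exactly where the Markov/sufficiency property of the CIB from Lemma~\ref{infstate} must be invoked carefully, on the probability-one event on which \eqref{eq:pieq} holds, and where the nonuniqueness of $F_t$ noted in Remark~\ref{fremark} must be handled consistently throughout the recursion.
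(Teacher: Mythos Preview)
Your proposal is correct and follows essentially the same two-inequality, backward-induction architecture as the paper. The one presentational difference worth noting is in the lower-bound step: rather than directly building player~2's ``reconstruct-and-best-respond'' strategy, the paper first casts player~2's best-response problem against a fixed $\tilde{\vct{\chi}}^1$ as an MDP with state $(C_t,\Gamma_{1:t-1}^{1:2})$, writes down its dynamic-programming value functions $V_t^{\tilde{\chi}^1}(c_t,\gamma_{1:t-1}^{1:2})$, and then proves by backward induction that $V_t^{\tilde{\chi}^1}(c_t,\gamma_{1:t-1}^{1:2}) \ge V_t^u(\tilde\pi_t)$ for all (not just almost all) realizations, where $\tilde\pi_t$ is the strategy-independent CIB defined everywhere via the recursion $\tilde\Pi_{\tau+1}=F_\tau(\tilde\Pi_\tau,\Gamma_\tau^{1:2},Z_{\tau+1})$. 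This device neatly sidesteps both obstacles you flagged: the measurable-selection issue is absorbed into the standard MDP result (they cite Condition~3.3.2 of \cite{hernandez2012discrete}), and the ``$\mid \Pi_t,\Gamma_t^{1:2}$'' conditioning issue disappears because $\tilde\Pi_t$ is defined for every realization and the MDP transition law is exactly $P_t^m(\tilde\pi_t,\gamma_t^{1:2};\cdot)$. Your route reaches the same conclusion, but the paper's MDP packaging is what makes the measurability and null-set bookkeeping automatic.
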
 
\begin{proof}
See Appendix \ref{dpproof}.\qed
\end{proof}

{Theorem \ref{dp} gives us a dynamic programming characterization of the upper and lower values of the expanded game. As mentioned in Theorem \ref{origvirt}, the upper and lower values of the expanded game provide bounds on the corresponding values of the original game. Further, if the original game has a Nash equilibrium, the dynamic programs of Theorem \ref{dp} characterize the value of the game. Note that this applies to any dynamic game of  the form in Section \ref{sec:probform} where the common information is non-decreasing in time and the private information has a ``state-like'' update equation (see Assumption \ref{infevolve}). As noted before, a variety of information structures satisfy this assumption \cite{nayyar2013decentralized}, \cite{nayyar2014common}.   }

{ The computational burden of solving the dynamic programs of Theorem \ref{dp} would depend on the specific  information structure being considered, i.e., on the exact nature of common and private information. At one extreme, we can consider the following instance of the original game $\gm{G}$: $C_t = (X_{1:t}), P^1_t=P^2_t = \emptyset$. It is easy to see that in this case, the common information belief can be replaced by the current state in the dynamic programs and the prescriptions are simply distributions on the players' finite action sets. Also, in this case,  $w^u_t$ and $w^l_t$ are bilinear functions of the prescriptions and the min-max/max-min problems at each stage of the dynamic program can be solved by a linear program \cite{ponssard1980lp}. On the other extreme, we can  consider an instance of game $\gm{G}$ with $C_t=\emptyset, P^i_t=Y^i_{1:t}, i=1,2$. In this case, the common information belief will be on the current state and observation histories of the two players and the prescriptions will take values in a large-dimensional space. Also, the functions $w_t^u$ and $w_t^l$ (for $t < T$) in this case do no have any apparent structure that can be exploited for efficient computation of the min-max and max-min values in the dynamic program. One general approach that can be used for any instance of game $\gm{G}$ is to discretize the CIB belief space and compute approximate value functions $V_t^u$ and $V_t^l$  in a backward inductive manner. However, we believe that significant structural and computational insights can be obtained by specializing the dynamic programs of Theorem \ref{dp} to the specific instance of the game being considered. We demonstrate this in the next section where   
 we discuss an information structure where one player has complete information while the other player has only partial information. We will show that in this case, the functions $w_t^u$ and $w_t^l$ turn out to be identical at all times $t$ and they satisfy some structural properties that can be leveraged for computation. Further, we will show that for this information structure, the dynamic programming characterization in Theorem \ref{dp} allows us to find an equilibrium strategy for player 1 in the original game $\gm{G}$.  }

{\paragraph{Comparison with \cite{ouyang2017dynamic} and \cite{vasal2019systematic}:} In \cite{ouyang2017dynamic}, the authors considered an $n$-player stochastic game model which can potentially be non-zero sum. In this model, each player has a private state that is privately observed by the corresponding player and a public state that is commonly observed by all the players. The model in \cite{ouyang2017dynamic} additionally allows players' private information to be partially revealed in the form of common observations. The actions of all the players in this model are commonly observed. The authors also make the assumption that the evolution of the private states of the players is conditionally independent. The model in \cite{vasal2019systematic} can be viewed as a special case of the model in \cite{ouyang2017dynamic}. For these models, backward inductive algorithms were presented to compute perfect Bayesian equilibria. Consider the case when the number of players in the games of \cite{ouyang2017dynamic} and \cite{vasal2019systematic} is two and the games are zero-sum. Then:
\begin{enumerate}
\item The models in \cite{ouyang2017dynamic} and \cite{vasal2019systematic} can be viewed as special cases of our model in Section \ref{sec:probform}.
\item The players in these games have perfect recall. Hence, we can use Kuhn's theorem to conclude that a Nash equilibrium and, thus, the value exists for these zero-sum games. Therefore, we can use the dynamic programs in Section \ref{dpsec} to the characterize the value of these zero-sum games. This characterization does not make any additional assumptions. The backward inductive algorithms in \cite{ouyang2017dynamic} and \cite{vasal2019systematic}, however, require the existence of a particular kind of \emph{fixed point} solution at each stage. This fixed point solution is not guaranteed to exist in general. Thus, there may be instances where the approaches in \cite{ouyang2017dynamic} and \cite{vasal2019systematic} fail to characterize the value of the game while our dynamic program in Section \ref{dpsec} can always characterize it.
\end{enumerate} 
}

\section{Games with Complete Information on One Side and Partial Information on the Other}\label{sec:incomp}
In this section, we consider a special case of the original game $\gm{G}$ (described in Section \ref{sec:probform})  where player 1 has complete information, that is, it  knows  the entire state history as well as the observation and action histories of both players. On the other hand, player 2 has only partial information on the state and player 1's action history. For  games with this information structure, we show that (i) a Nash equilibrium, and hence the value of the game, exist; (ii) the value can be computed using the methodology proposed in Section \ref{sec:expanded}; (iii) the dynamic program in Section \ref{sec:expanded} also characterizes an equilibrium strategy for player 1; (iv) the value functions of the dynamic program satisfy some key structural properties that may be useful for computational purposes.

\subsection{System Model}

Consider a system with state evolution as in equation (\ref{statevol}).  
At each time $t$, player 1 observes the state perfectly  but player 2 gets an imperfect observation $Y^2_t$ defined as follows:
\begin{equation}
\rv{Y}_t^2 = h_t^2(\rv{X}_t, \rv{U}_{t-1}^{1}, \rv{U}_{t-1}^{2},\rv{W}_t^2). \label{eq:sec5eq1}
\end{equation}
Player 1 has complete information: at each time $t$, it knows the entire state, observation and action histories. Player 2 has partial information: at each time $t$, it knows only the observation history $Y^2_{1:t}$ and its own action history $U^2_{1:t-1}$.
Thus, the total information available to each player at $t$ is as follows:
\begin{align}
\rv{I}_t^1 &= \{\rv{X}_{1:t}, \rv{Y}_{1:t}^{2},\rv{U}_{1:t-1}^{1}, \rv{U}_{1:t-1}^{2}\}\\
\rv{I}_t^2 &= \{\rv{Y}_{1:t}^{2},\rv{U}_{1:t-1}^{2}\}.
\end{align}
Clearly, $\rv{I}_t^2 \subseteq \rv{I}_t^1$, that is, player 1 is more informed than player 2. The  common and private information for this game can be written as follows: $\rv{C}_t = \rv{I}_t^2$,  $\rv{P}_t^1 = \{\rv{X}_{1:t},\rv{U}^1_{1:t-1}\}$ and $\rv{P}_t^2 = \varnothing$. The increment in common information at time $t$ is $\rv{Z}_t = \{\rv{Y}_t^2,\rv{U}_{t-1}^2\}$. The total expected cost is as defined in \eqref{eq:totalcost}. As before, Player 1 wants to minimize the total expected cost, while Player 2 wants to maximize it. Players' strategies are as described in \eqref{eq:stg}. {We refer to this game as Game $\gm{G}^{\mathrm{special}}$.}
{\begin{remark}
Consider a scenario in which player 1's information at time $t$ is $\{X_{1:t},U_{1:t-1}^1,U_{1:t-1}^2\}$. Player 2's information at time $t$ is $\{Y_{1:t}^2,U_{1;t-1}^2\}$ and its observation $Y_t^2$ is of the form
$Y_t^2 = h_t^2(X_t,U_{t-1}^1,U_{t-1}^2).$ That is, player 2 observes a \emph{quantized} version of the current system state and player 1's previous action. At first glance, it might appear that the second player's private information for this scenario is $P_t^2 = Y_{1:t}^2$ and, thus, it does not fit the model in Game $\gm{G}^{\mathrm{special}}$. However, player 1 knows the function $h_t^2$ and the variables $X_t,U_{t-1}^1,U_{t-1}^2$. Thus, it can simply compute $Y_{1:t}^2$ and select its behavior action using $\{X_{1:t}, Y_{1:t}^2,U_{1:t-1}^1,U_{1:t-1}^2\}$. In that case, $P_t^2 = \varnothing$ and, therefore, this scenario fits the model in Game $\gm{G}^{\mathrm{special}}$.
\end{remark}}
\begin{remark}
{The model described above subsumes the system dynamics and information structures in \cite{renault2006value,li2014lp,zheng2013decomposition} and \cite{renault2012value}. In all these works (except \cite{renault2012value}), the less-informed player does not have any state information and can only observe the more-informed player's action. That is, in these works, $Y_t^2 = U_{t-1}^1$ (which is a special case of \eqref{eq:sec5eq1} above) and thus, $\rv{I}_t^2 = \{\rv{U}_{1:t-1}^{1},\rv{U}_{1:t-1}^{2}\}$}. Further, the system dynamics in some of these models are more specialized. In particular, in \cite{renault2006value}, the system is uncontrolled and in \cite{li2014lp} and \cite{renault2012value}, the less-informed player cannot control the state. Note that our model allows for both players to influence the system dynamics (see \eqref{statevol}).
\end{remark}
The following lemmas establish the existence   and a structural property of Nash equilibrium strategies in game $\gm{G}^{\text{special}}$.
\begin{lemma}
A Nash equilibrium exists in game $\gm{G}^{\mathrm{special}}$.
\end{lemma}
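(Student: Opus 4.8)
The plan is to establish the existence of a Nash equilibrium in Game $\gm{G}^{\mathrm{special}}$ by showing that it has perfect recall and then invoking Kuhn's theorem, as hinted at in Remark \ref{kuhnremark}. Concretely, I would first verify that both players have perfect recall in the sense of Remark \ref{kuhnremark}, i.e., that $\{\rv{U}^i_{1:t-1}\} \cup \rv{I}_{t-1}^i \subseteq \rv{I}_t^i$ for $i=1,2$ and all $t$. For player 1, $\rv{I}_t^1 = \{\rv{X}_{1:t}, \rv{Y}_{1:t}^2, \rv{U}_{1:t-1}^1, \rv{U}_{1:t-1}^2\}$, which manifestly contains $\rv{I}_{t-1}^1 = \{\rv{X}_{1:t-1}, \rv{Y}_{1:t-1}^2, \rv{U}_{1:t-2}^1, \rv{U}_{1:t-2}^2\}$ together with $\rv{U}^1_{1:t-1}$; similarly for player 2 with $\rv{I}_t^2 = \{\rv{Y}_{1:t}^2, \rv{U}_{1:t-1}^2\}$. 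Since the state, action, and observation spaces are all finite and the horizon $T$ is finite, the game is a finite extensive-form game with perfect recall. Then, by Kuhn's theorem, behavioral strategies are equivalent to mixed strategies, and the existence of a Nash equilibrium in behavioral strategies follows from the existence of a mixed-strategy Nash equilibrium in the associated finite normal-form game (Nash's theorem). This establishes the claim for the general zero-sum game; since Game $\gm{G}^{\mathrm{special}}$ is zero-sum, the equilibrium payoff is the value.

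Alternatively, and perhaps more in keeping with the paper's machinery, I would argue via the virtual and expanded games. By Theorem \ref{origvirt}, it suffices to show $S^l(\gm{G}_e) = S^u(\gm{G}_e)$ for this information structure, or to exhibit an explicit saddle point. Here the key simplification is that $\rv{P}_t^2 = \varnothing$, so virtual player 2's prescriptions $\gamma_t^2$ are simply elements of $\Delta\mathcal{U}_t^2$ (not functions of private information), and the common information belief $\Pi_t$ is a distribution on $\mathcal{X}_t \times \mathcal{P}_t^1$ only. In the dynamic programs \eqref{minequa} and \eqref{maxequa}, one would then check that $w_t^u = w_t^l$ at every stage — this is plausibly where the bulk of the work lies, since it requires showing that the inner min-max equals max-min at each stage. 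Because $\rv{P}_t^2$ is trivial, $\gamma_t^2$ lives in the convex compact set $\Delta\mathcal{U}_t^2$, and one would want $w_t^u(\pi_t, \gamma_t^1, \gamma_t^2)$ to be appropriately convex in $\gamma_t^1$ and concave (indeed linear) in $\gamma_t^2$ so that Sion's minimax theorem applies; linearity in $\gamma_t^2$ follows from the bilinear structure of $\tilde{c}_t$ in \eqref{tildec} together with the fact (to be verified) that the belief update $F_t$ and the distribution of $\rv{Z}_{t+1}$ depend on $\gamma_t^2$ only through a mixture, making $\E[V_{t+1}(\cdots)]$ linear in $\gamma_t^2$. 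The convexity/concavity of $V_{t+1}^u$ and $V_{t+1}^l$ in $\pi_t$ (the piecewise-linear-convex structure mentioned in the introduction) would be carried along as an inductive hypothesis.

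I expect the main obstacle to be establishing the interchange of inf and sup at each stage of the dynamic program, i.e., proving $V_t^u = V_t^l$ for this information structure. This rests on (a) verifying that $w_t^u$ is linear (or at least concave) in $\gamma_t^2 \in \Delta\mathcal{U}_t^2$, which requires a careful look at how $\gamma_t^2$ enters the transition of $\Pi_t$ and the law of $\rv{Z}_{t+1}$, and (b) verifying suitable convexity of $w_t^u$ in $\gamma_t^1$, which is more delicate because $\gamma_t^1$ is a function from the (possibly large) set $\mathcal{P}_t^1 = \{\mathcal{X}_{1:t} \times \mathcal{U}_{1:t-1}^1\}$ into $\Delta\mathcal{U}_t^1$, and the composition with a convex value function $V_{t+1}^u$ need not obviously preserve convexity. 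One clean way around (b) is to observe that for the \emph{more-informed} player the relevant quantity after fixing $\pi_t$ can be re-parametrized: since player 1 knows the full state and history, randomizing over prescriptions is equivalent to randomizing over actions conditioned on each realization, and the value function's convexity in $\pi_t$ together with the affine dependence of the posterior split on $\gamma_t^1$ yields the needed structure. Given the finiteness of all spaces and the perfect-recall argument of the first paragraph, however, the existence claim itself is secure regardless; the dynamic-programming route is needed only for the sharper statements (value computable via the DP, equilibrium strategy structure) established in the subsequent lemmas.
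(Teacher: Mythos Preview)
Your first paragraph is correct and is essentially the paper's proof: the paper simply observes that both players have perfect recall and invokes Remark \ref{kuhnremark} (Kuhn's theorem) to conclude existence. Your alternative dynamic-programming route via Sion's theorem is unnecessary for this lemma, though it is precisely the mechanism the paper uses later (in Lemma \ref{piecelemma}) to show $V_t^u = V_t^l$ once player 1's private information has been reduced to $X_t$.
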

\begin{proof}
Notice that the players have perfect recall. Therefore, as discussed in Remark \ref{kuhnremark}, the existence of a behavioral strategy Nash equilibrium is guaranteed.\qed
\end{proof}
\begin{lemma}\label{structlemma}
There exists a Nash equilibrium $(g^{1*},g^{2*})$ such that the control law $g^{1*}_t$ selects player 1's behavioral action using only $\rv{X}_t$ and  $\rv{I}_t^2$, i.e,
\[\delta U^1_t = g^{1*}_t(X_t,I^2_t).\]
\end{lemma}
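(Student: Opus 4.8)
The plan is to show that in Game $\gm{G}^{\mathrm{special}}$, player 1 has a best response to every strategy of player 2 that depends only on $(\rv{X}_t, \rv{I}^2_t)$, and then to invoke the existence of a Nash equilibrium to conclude. Concretely, fix an arbitrary behavioral strategy $\vct{g}^2$ for player 2. Then player 1 faces a Markov decision problem: with player 2's randomization being exogenous (its behavioral actions are functions of $\rv{I}^2_t = \rv{C}_t$ and its private coins), player 1 controls the evolution of the state and seeks to minimize $\E^{(\vct{g}^1,\vct{g}^2)}[\sum_t c_t(\rv{X}_t,\rv{U}^1_t,\rv{U}^2_t)]$. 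First I would argue that in this MDP, $(\rv{X}_t,\rv{I}^2_t)$ is an information state for player 1: given $(\rv{X}_t,\rv{I}^2_t)$ and player 1's action $\rv{U}^1_t$, the conditional distribution of $(\rv{X}_{t+1},\rv{I}^2_{t+1})$ does not depend on the rest of player 1's information $\rv{P}^1_t = \{\rv{X}_{1:t-1},\rv{U}^1_{1:t-1}\}$, nor on the control law used to generate $\rv{U}^1_t$. This follows from the Markov state evolution \eqref{statevol}, the observation equation \eqref{eq:sec5eq1}, the update $\rv{Z}_{t+1}=\{\rv{Y}^2_{t+1},\rv{U}^2_t\}$, and the fact that $\rv{U}^2_t$ is itself a function of $\rv{I}^2_t$ and player 2's independent coins; the primitive noises $\rv{W}^s_t,\rv{W}^2_{t+1}$ are independent of everything in the past. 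The per-stage cost $c_t(\rv{X}_t,\rv{U}^1_t,\rv{U}^2_t)$ likewise has conditional expectation (given $(\rv{X}_t,\rv{I}^2_t,\rv{U}^1_t)$) depending only on $(\rv{X}_t,\rv{I}^2_t,\rv{U}^1_t)$ since $\rv{U}^2_t$'s conditional law given $\rv{I}^2_t$ is determined by $\vct{g}^2$.

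Given this, a standard backward-induction / dynamic-programming argument on the MDP shows that player 1 has an optimal strategy of the form $\delta\rv{U}^1_t = \tilde{g}^1_t(\rv{X}_t,\rv{I}^2_t)$: define value functions backward in time over the state $(\rv{X}_t,\rv{I}^2_t)$, and at each stage pick a minimizing action distribution as a function of $(\rv{X}_t,\rv{I}^2_t)$ only. (Finiteness of all the state, action and observation spaces and of the horizon makes the minima attained and the argument routine.) Hence
\[
\inf_{\vct{g}^1\in\G^1} J(\vct{g}^1,\vct{g}^2) \;=\; \inf_{\vct{g}^1\in\G^{1,\mathrm{Markov}}} J(\vct{g}^1,\vct{g}^2)
\]
for every $\vct{g}^2$, where $\G^{1,\mathrm{Markov}}$ denotes strategies of player 1 depending only on $(\rv{X}_t,\rv{I}^2_t)$.

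Now let $(\vct{g}^{1\dagger},\vct{g}^{2*})$ be a Nash equilibrium of $\gm{G}^{\mathrm{special}}$, which exists by the previous lemma. Then $\vct{g}^{1\dagger}$ is a best response to $\vct{g}^{2*}$, so by the claim above there is a Markov strategy $\vct{g}^{1*}\in\G^{1,\mathrm{Markov}}$ with $J(\vct{g}^{1*},\vct{g}^{2*}) = J(\vct{g}^{1\dagger},\vct{g}^{2*}) = S(\gm{G}^{\mathrm{special}})$. Since player 1's action set is unchanged, $\vct{g}^{1*}$ still guarantees $J(\vct{g}^{1*},\vct{g}^2)\ge S(\gm{G}^{\mathrm{special}})$ against every $\vct{g}^2$ is not what we need; rather we need $J(\vct{g}^{1*},\vct{g}^2)\le S(\gm{G}^{\mathrm{special}})$ for all $\vct{g}^2$ and $J(\vct{g}^1,\vct{g}^{2*})\ge S(\gm{G}^{\mathrm{special}})$ for all $\vct{g}^1$. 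The second holds because $\vct{g}^{2*}$ was already an equilibrium strategy (unchanged). For the first, note $\sup_{\vct{g}^2} J(\vct{g}^{1*},\vct{g}^2) \le \sup_{\vct{g}^2} J(\vct{g}^{1\dagger},\vct{g}^2)$ need not hold directly, so instead I would verify the equilibrium inequalities directly: $J(\vct{g}^{1*},\vct{g}^{2*}) = S(\gm{G}^{\mathrm{special}}) = S^u \le \sup_{\vct{g}^2}J(\vct{g}^{1*},\vct{g}^2)$, and conversely, since $\vct{g}^{1*}$ is a best response to $\vct{g}^{2*}$ with the same value as $\vct{g}^{1\dagger}$, the pair $(\vct{g}^{1*},\vct{g}^{2*})$ satisfies both saddle inequalities (the right one because $\vct{g}^{1*}$ achieves the min against $\vct{g}^{2*}$; the left one because $J(\vct{g}^{1*},\vct{g}^{2*})=S(\gm{G}^{\mathrm{special}})=S^l\le\sup_{\vct{g}^2}J(\vct{g}^{1*},\vct{g}^2)$ combined with $\vct{g}^{1*}$ being a best response forces equality, so no $\vct{g}^2$ can do better). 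Thus $(\vct{g}^{1*},\vct{g}^{2*})$ is a Nash equilibrium with $\vct{g}^{1*}$ of the desired form.

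The main obstacle I anticipate is the careful verification that $(\rv{X}_t,\rv{I}^2_t)$ is genuinely a controlled-Markov state for player 1's best-response problem — in particular that conditioning on player 1's extra private information $\rv{P}^1_t$ adds nothing to the prediction of $(\rv{X}_{t+1},\rv{I}^2_{t+1})$ once the current action is fixed. This requires tracking the independence structure of the primitive variables and player 2's coins through the update equations \eqref{statevol}, \eqref{eq:sec5eq1}, and $\rv{Z}_{t+1}=\{\rv{Y}^2_{t+1},\rv{U}^2_t\}$, and is the place where Assumption \ref{infevolve} (specifically that $\rv{P}^2_t=\varnothing$ and the common-information increment is a fixed function of observations and actions) is essential. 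The dynamic-programming step itself is then standard for finite-horizon finite MDPs.
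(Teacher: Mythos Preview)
Your information-state reduction for player 1's best-response problem is correct and useful: for any fixed $\vct{g}^2$, the pair $(\rv{X}_t,\rv{I}^2_t)$ is indeed a sufficient statistic, and dynamic programming produces a best response $\vct{g}^{1*}\in\G^{1,\mathrm{Markov}}$ to $\vct{g}^{2*}$. The gap is in the last step, where you try to conclude that $(\vct{g}^{1*},\vct{g}^{2*})$ is a Nash equilibrium. You correctly establish the inequality $J(\vct{g}^{1*},\vct{g}^{2*}) \le J(\vct{g}^1,\vct{g}^{2*})$ for all $\vct{g}^1$, but the other saddle inequality, $J(\vct{g}^{1*},\vct{g}^2) \le J(\vct{g}^{1*},\vct{g}^{2*})$ for all $\vct{g}^2$, does \emph{not} follow. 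A best response to an equilibrium strategy need not itself be an equilibrium strategy: in matching pennies, every pure strategy of player 1 is a best response to player 2's equilibrium mix, yet none is minimax. Your argument that ``$J(\vct{g}^{1*},\vct{g}^{2*})=S^l\le\sup_{\vct{g}^2}J(\vct{g}^{1*},\vct{g}^2)$ combined with $\vct{g}^{1*}$ being a best response forces equality'' is circular --- you have only shown $\sup_{\vct{g}^2}J(\vct{g}^{1*},\vct{g}^2)\ge S$, which is the wrong direction. The DP-produced $\vct{g}^{1*}$ is tailored to $\vct{g}^{2*}$ and may be exploitable by other $\vct{g}^2$.

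The paper closes this gap by proving something stronger than a best-response statement: for \emph{any} strategy $\vct{g}^1$ of player 1, it constructs a strategy $\bar{\vct{g}}^1$ depending only on $(\rv{X}_t,\rv{I}^2_t)$ such that $J(\bar{\vct{g}}^1,\vct{g}^2)=J(\vct{g}^1,\vct{g}^2)$ for \emph{every} $\vct{g}^2$ simultaneously. The construction sets $\bar{g}^1_t(x_t,\iota^2_t;u^1_t)$ equal to the conditional probability $\Py^{\vct{g}^1}[U^1_t=u^1_t\mid X_t=x_t,I^2_t=\iota^2_t]$, which is shown to be independent of $\vct{g}^2$ via a POMDP argument from player 2's viewpoint; an induction then shows the joint law of $(X_t,U^1_t,U^2_t,I^2_t)$ is preserved for all $t$ and all $\vct{g}^2$. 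This immediately gives $\sup_{\vct{g}^2}J(\bar{\vct{g}}^1,\vct{g}^2)=\sup_{\vct{g}^2}J(\vct{g}^1,\vct{g}^2)$, so applying it to an equilibrium $\vct{g}^{1\dagger}$ yields a Markov minimax strategy directly. Your approach can be repaired, but it requires this strategy-compression idea (or an equivalent argument that the restricted game with $I^1_t=\{X_t,I^2_t\}$ still has upper value $S$), not merely the existence of a Markov best response.
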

\begin{proof}
See Appendix \ref{structlemmaproof}.\qed
\end{proof}
The lemma above implies that, for the purpose of characterizing the value of the game and a min-max strategy for player 1, we can restrict player 1's information structure to be $I^1_t = \{X_t,I^2_t\}$.
Thus, the common and private information become:  $\rv{C}_t = \rv{I}_t^2$, $\rv{P}_t^1 = \{\rv{X}_{t}\}$ and $\rv{P}_t^2 = \varnothing$. 
We refer to this game with reduced private information as Game $\gm{H}$. The corresponding virtual game and expanded virtual game are denoted by $\gm{H}_v$ and $\gm{H}_e$ respectively.
\subsection{Expanded Game and the Dynamic Program}\label{dponesided}

We can now proceed with the construction of the expanded virtual game $\gm{H}_e$ as outlined in Section \ref{sec:expanded}. Recall that in the expanded virtual game, virtual players select prescriptions based on common information and the prescription history. 

{\subsubsection{Belief update}\label{beliefoneside}
{In Game $\gm{H}$, player 2 has no private information and player 1's private information $P_t^1$ at time $t$ is just the state $X_t$. Therefore, in the expanded virtual game $\gm{H}_e$, the common information belief on $(X_t,P_t^1,P_t^2)$ is equivalent to the common information belief on $X_t$ alone. Thus, in game $\gm{H}_e$, the common information belief $\Pi_t$ is a vector of size $|\mathcal{X}_t|$.} As noted in Remark \ref{fremark}, the common information belief update rule $F_t$ is not unique and there is an entire class $\mathscr{B}$ of update rules for which Lemma \ref{infstate} holds. When this class $\mathscr{B}$ is specialized to the Game $\gm{H}_e$, there exists an update rule $F_t$ in this class that does not use virtual player 2's prescription $\gamma_t^2$. The following lemma makes this precise.
\begin{lemma}\label{beliefupdateone}
For any strategy profile $(\tilde{\vct{\chi}}^{1}, \tilde{\vct{\chi}}^{2})$, the common information belief $\Pi_t$ evolves almost surely as
\begin{equation}
\Pi_{t+1} = F_t(\Pi_t, \Gamma_t^1, Z_{t+1}), t \geq 1,
\end{equation}
where $F_t$ is a fixed transformation that does not depend on the virtual players' strategies.
\end{lemma}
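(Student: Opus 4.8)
The plan is to revisit the general belief-update construction of Lemma~\ref{infstate} (and the class $\mathscr{B}$ of transformations identified in its proof in Appendix~\ref{infstateproof}) and show that, in the special structure of Game~$\gm{H}_e$, one can pick a representative in $\mathscr{B}$ that does not depend on $\gamma_t^2$. Recall that in Game~$\gm{H}$ we have $\rv{P}_t^1 = \{\rv{X}_t\}$, $\rv{P}_t^2 = \varnothing$, and $\rv{Z}_{t+1} = \{\rv{Y}_{t+1}^2, \rv{U}_t^2\}$. Hence the common information belief $\Pi_t$ is just a distribution on $\mathcal{X}_t$ (since $\mathcal{P}_t^{1:2}$ collapses to $\mathcal{X}_t$), and the prescription $\gamma_t^2$ is simply a fixed distribution $\delta\rv{U}_t^2$ on $\mathcal{U}_t^2$, independent of any private information.

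The key steps, in order, are: (i) Start from the generic Bayesian update that defines $F_t$ in Appendix~\ref{infstateproof}, written for Game~$\gm{H}_e$: given $\Pi_t$, $\gamma_t^{1:2}$, and the realization $\rv{Z}_{t+1} = (y_{t+1}^2, u_t^2)$, the updated belief $\Pi_{t+1}(x_{t+1})$ is obtained by (a) forming the joint $\Py(x_t, u_t^1, u_t^2)$ via $\Pi_t(x_t)\gamma_t^1(x_t; u_t^1)\gamma_t^2(u_t^2)$, (b) propagating through the dynamics $f_t$ and observation map $h_{t+1}^2$ to get a joint law on $(x_{t+1}, y_{t+1}^2, u_t^2)$, and (c) conditioning on the observed $(y_{t+1}^2, u_t^2)$. (ii) Observe that in this conditioning step the factor $\gamma_t^2(u_t^2)$ appears multiplicatively in both the numerator and the denominator (it multiplies every term of the joint law restricted to the observed $u_t^2$), and therefore cancels. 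Consequently the conditional law of $x_{t+1}$ given $(y_{t+1}^2, u_t^2)$ does not depend on $\gamma_t^2$. (iii) Define $F_t(\pi, \gamma^1, z)$ accordingly (with $z = (y^2, u^2)$), argue measurability of this map exactly as in Appendix~\ref{infstateproof}, and verify it lies in the class $\mathscr{B}$, so that Lemma~\ref{infstate} applies verbatim and yields $\Pi_{t+1} = F_t(\Pi_t, \Gamma_t^1, \rv{Z}_{t+1})$ almost surely.

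One subtlety to handle carefully is the almost-sure qualifier: the cancellation of $\gamma_t^2(u_t^2)$ is only valid on the event where the denominator is positive, i.e.\ where $(y_{t+1}^2, u_t^2)$ has positive probability under the strategy profile; on the null set we may define $F_t$ arbitrarily (say, as any fixed element of $\mathcal{S}_{t+1}$), which is why the statement is an almost-sure equality and $F_t$ is not unique. The other point requiring a line of justification is that $\rv{U}_t^2$ itself, appearing in $\rv{Z}_{t+1}$, is generated from $\gamma_t^2$ via $\rv{U}_t^2 = \kappa(\Gamma_t^2(\rv{P}_t^2), \rv{V}_t^2) = \kappa(\Gamma_t^2, \rv{V}_t^2)$; but once we condition on the \emph{realized} value $u_t^2$ inside $\rv{Z}_{t+1}$, the dependence on $\gamma_t^2$ is confined to the cancelling factor, so the posterior on $\rv{X}_{t+1}$ is unaffected. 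I expect step~(ii)—making the cancellation argument precise while correctly tracking the conditioning event and the role of $\kappa$—to be the main (though modest) obstacle; the rest is a specialization of machinery already developed in Appendix~\ref{infstateproof}.
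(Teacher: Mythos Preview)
Your proposal is correct and is essentially the paper's own argument: specialize $P_t^j$ and $P_t^m$ from Appendix~\ref{infstateproof} to Game~$\gm{H}_e$, factor $\gamma_t^2(u_t^2)$ out of both (possible since $P_t^2=\varnothing$ makes $\gamma_t^2$ a plain distribution on $\mathcal{U}_t^2$), cancel, and define $F_t$ via the remaining ratio $Q_t/R_t$. The one place the paper is more explicit than your sketch is the null-set choice: because the event $\{P_t^m=0\}$ itself depends on $\gamma_t^2$ (it contains $\{\gamma_t^2(u_t^2)=0,\ R_t>0\}$), the paper sets $G_t:=Q_t/R_t$ on that sub-event rather than an arbitrary constant, so that the resulting $F_t$ is $\gamma_t^2$-independent \emph{as a function on its whole domain} and not merely almost surely---so make sure your ``define arbitrarily on the null set'' is phrased in terms of $\{R_t=0\}$, not $\{P_t^m=0\}$.
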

\begin{proof}
See Appendix \ref{beliefupdateoneproof}.\qed
\end{proof}}
\begin{remark}
{In \cite{renault2006value}, \cite{renault2012value} and \cite{li2014lp}, it was assumed that {the less informed player cannot} control the state evolution. In that case, Lemma \ref{beliefupdateone} is easy to establish. We would like to emphasize that Lemma \ref{beliefupdateone} is true for our more general model where both players control the state evolution.}
\end{remark}

\subsubsection{Dynamic program}\label{dp:oneside}
{Since a Nash equilibrium exists in game $\gm{H}$, the upper and lower values of the corresponding expanded virtual game $\gm{H}_e$ are both equal to the value of game $\gm{H}$ (see Theorem \ref{origvirt}). According to Theorem \ref{dp}, this value can be obtained by solving either the min-max dynamic program or the max-min dynamic program as discussed in Section \ref{dpsec}. The dynamic programs in Section \ref{dpsec}, when specialized to the game $\gm{H}_e$, are as follows.}

{\paragraph{The min-max dynamic program:}
The minimizing virtual player (virtual player 1) in game $\gm{H}_e$ solves the following dynamic program. Define $V^u_{T+1}(\pi_{T+1}) = 0$ for every  $\pi_{T+1}$. In a backward inductive manner, at each time $t \leq T$ and for each possible  common information belief $\pi_t$ and prescriptions $\gamma^1_t, \gamma^2_t$, define 
\begin{align}
w^u_t(\pi_t,\gamma_t^1,\gamma_t^2) &:= \tilde{c}_t(\pi_t,\gamma_t^1,\gamma_t^2) + \E[V^u_{t+1}(F_t(\pi_{t},\gamma_t^{1},\rv{Z}_{t+1}))\mid \pi_t,\gamma_t^{1:2}]\\
\label{minequa:one}V_t^u(\pi_t) &:= \inf_{{\gamma}_t^1} \sup_{\gamma_t^2}w_t^u(\pi_t,\gamma_t^1,\gamma_t^2).
\end{align}
{Let $\Xi_t^1(\pi_t)$ be the min-maximizer in \eqref{minequa:one}. Note that the existence of a min-maximizer and a measurable mapping $\Xi_t^1$ is guaranteed by Lemma \ref{equiexistlemma}.}}

{\paragraph{The max-min dynamic program:}
The maximizing virtual player (virtual player 2) in game $\gm{H}_e$ solves the following dynamic program. Define $V^l_{T+1}(\pi_{T+1}) = 0$ for every  $\pi_{T+1}$. In a backward inductive manner, at each time $t \leq T$ and for each possible  common information belief $\pi_t$ and prescriptions $\gamma^1_t, \gamma^2_t$, define 
\begin{align}
w^l_t(\pi_t,\gamma_t^1,\gamma_t^2) &:= \tilde{c}_t(\pi_t,\gamma_t^1,\gamma_t^2)  + \E[V^l_{t+1}(F_t(\pi_{t},\gamma_t^{1},\rv{Z}_{t+1}))\mid \pi_t,\gamma_t^{1:2}]\\
\label{maxequa:one}V_t^l(\pi_t) &:= \sup_{\gamma_t^2}\inf_{{\gamma}_t^1}w_t^l(\pi_t,\gamma_t^1,\gamma_t^2).
\end{align}
{Let $\Xi_t^2(\pi_t)$ be the max-minimizer in \eqref{maxequa:one}. Note that the existence of a max-minimizer and a measurable mapping $\Xi_t^2$ is guaranteed by Lemma \ref{equiexistlemma}.}}

{\begin{definition}\label{stratdefone} Define strategies $\tilde{\chi}^{1*}$ and $\tilde{\chi}^{2*}$ for virtual players 1 and 2, respectively, as follows: for each instance of common information $c_t$ and prescription history $\gamma_{1:t-1}^{1:2}$, let
\begin{align}
\tilde{\chi}^{1*}_t(c_t,\gamma_{1:t-1}^{1:2}) &:= \Xi_t^1(\pi_t)\\
\tilde{\chi}^{2*}_t(c_t,\gamma_{1:t-1}^{1:2}) &:= \Xi_2^1(\pi_t),
\end{align}
where $\Xi_t^1$ and $\Xi_t^2$ are respectively the mappings obtained from the min-max and the max-min dynamic programs stated above and, $\pi_t$ (which is a function of $c_t,\gamma_{1:t-1}^1$) is obtained in a forward inductive manner using the relation
\begin{align}
\pi_1(x_1) &= \Py[X_1 = x_1 \mid C_1 = c_1] ~ \forall \; x_1,\\
\pi_{\tau + 1} &= F_\tau(\pi_\tau, \gamma_\tau^1,z_{\tau+1}), ~ 1 \leq \tau < t.
\end{align}
Note that $F_\tau$ is the common information belief update function defined in Lemma \ref{beliefupdateone}.
\end{definition}}

{\begin{lemma}\label{valthmone}
The value of the game $\gm{H}$ is given by
\begin{align}
S(\gm{H}) = \E[V_1^u[\Pi_1]] = \E[V_1^l[\Pi_1]],
\end{align}
where $V_1^u$ and $V_1^l$ are respectively the value functions at time $t=1$ in the min-max and the max-min dynamic program described above.{ Further, the strategies $\tilde{\chi}^{1*}$ and $\tilde{\chi}^{2*}$ as defined in Definition \ref{stratdefone} are, respectively, min-max and max-min strategies in the expanded virtual game $\gm{H}_e$.}
\end{lemma}
\begin{proof}
This is a direct consequence of Theorem \ref{dp}. \qed
\end{proof}}

{Thus, the dynamic program for Game $\gm{H}$ can be obtained by simply specializing the approach used for the more general Game $\gm{G}$. In the subsequent sections, we will derive some additional properties of the dynamic program in Section \ref{dp:oneside} that are specific to the model of Game $\gm{H}$.}

\subsection{An Equilibrium Strategy for the More-informed Player}



The following result shows that the mapping $\Xi_t^1$ obtained by solving the min-max dynamic program in Section \ref{dp:oneside} can be used to construct an equilibrium strategy for Player 1 in the original game $\gm{H}$.
{\begin{theorem}\label{strategy}
Define $\vct{g}^{1*}$ for player 1 in game $\gm{H}$ such that for every $t \leq T$ and $\vct{p}_t^1,\vct{c}_t$, 
\begin{align}
g_t^{1*}(\vct{p}_t^{1},\vct{c}_t) = [\Xi_t^1(\pi_t)](p_t^1),
\end{align}
where $\Xi_t^1$ is as defined in the min-max dynamic program in Section \ref{dp:oneside} and, $\pi_t$ is computed in a forward inductive manner using the following relation
\begin{align}
\pi_1(x_1) &= \Py[X_1 = x_1 \mid C_1 = c_1] ~ \forall \; x_1\\
\pi_{\tau + 1} &= F_\tau(\pi_\tau, \Xi_\tau^1(\pi_\tau),z_{\tau+1}), ~ 1 \leq \tau < t.
\end{align}
Then $\vct{g}^{1*}$ is an equilibrium (minimax) strategy for player 1 in the original game $\gm{H}$.
\end{theorem}}
\begin{proof}
See Appendix \ref{strategyproof}.\qed
\end{proof}
{Theorem \ref{strategy} proves the existence of a Nash equilibrium $(g^{1*},g^{2*})$ in the game $\gm{H}$, where player 1's behavioral action at time $t$ is a function of its private information $P_t^1 = X_t$ and the common information belief $\Pi_t$.}

\subsection{Computational Aspects}
The value functions in the dynamic programs in Section \ref{dp:oneside} admit the following structural property.
\begin{lemma}[Piecewise linearity and convexity]\label{piecelemma}
At each time $t$, 
\begin{enumerate}
\item The min-max value function $V_t^u$ and the max-min value function $V_t^l$ are identical. Consequently, the functions $w_t^u$ and $w_t^l$ are also identical. Let $V_t := V_t^l = V_t^u$ and $w_t := w_t^u = w_t^l$ for each $t$.
\item There exists a \emph{finite} collection $\mathcal{A}_t$ of vectors of size $|\mathcal{X}_t|$ such that
\begin{align}
V_t(\pi_t) = \max_{\ell \in \mathcal{A}_t}\langle \ell, \pi_t \rangle.
\end{align}
Consequently, the value functions are piecewise linear and convex functions of the belief $\pi_t$.
\item {The function $w_t(\pi_t,\gamma_t^1,\gamma_t^2)$ is linear in $\gamma_t^2$ for any given $\pi_t,\gamma_t^1$; and it is convex in $\gamma_t^1$ for any given $\pi_t,\gamma_t^2$.}
\end{enumerate}
\end{lemma}
\begin{proof}
See Appendix \ref{piecelemmaproof}.\qed
\end{proof}
The above structural property of value functions allows us to approximately solve the min-max or the max-min dynamic program using a method described below. This method is inspired by the function approximation based methods for dynamic programming in \cite{bertsekas1996neuro}.

At each time $t$, let $\hat{V}_t(\pi_t,\theta_t)$ be a representation of the value function $V_t(\pi_t)$ in parametric form, where $\theta_t$ is a vector representing the parameters. Note that this representation is only an approximation of the true value function $V_t(\pi_t)$ and may not be exactly equal to it. Since we know that the value function $V_t$ is piecewise linear and convex, we can enforce the parametric representation $\hat{V}_t$ to be piecewise linear and convex in $\pi_t$ for any $\theta_t$. One such representation is as follows:
\begin{align}\label{valrep}
\hat{V}_t(\pi_t,\theta_t) = \max_{\ell_t \in \hat{\mathcal{A}_t}}\langle \ell_t,\pi_t \rangle
\end{align}
where $\hat{\mathcal{A}_t}$ is a finite collection of vectors of size $|\X_t|$. Here, $\theta_t$ is the column vector obtained by stacking all the vectors $\ell \in \hat{\mathcal{A}_t}$.

 The parameters $\theta_t$ are chosen in a backward inductive manner such that the function $\hat{V}_t^l(\cdot, \theta_t)$ is close to the true value function $V_t$. More precisely, at time $t$, we uniformly (either randomly or deterministically) sample belief vectors from the simplex $\mathcal{S}_t$ to obtain a finite collection of belief vectors $\pi_t$. Let this collection be $\mathscr{S}_t$. For each belief vector $\pi_t \in \mathscr{S}_t$, we will compute $\bar{V}_t(\pi_t)$ which is given by
{\begin{align} 
\hat{w}_t(\pi_t,\gamma_t^1,\gamma_t^2) &:= \tilde{c}_t(\pi_t,\gamma_t^1,\gamma_t^2) +\E[\hat{V}_{t+1}(F_t(\pi_{t},\gamma_t^{1},\rv{Z}_{t+1}),\theta_{t+1})\mid \pi_t,\gamma_t^{1:2}]\\
\bar{V}_t(\pi_t) &:= \min_{{\gamma}_t^1}\max_{\gamma_t^2}\hat{w}_t(\pi_t,\gamma_t^1,\gamma_t^2).\label{barv}
\end{align}}
{Since $\hat{V}_{t+1}$ is piecewise linear and convex in $\pi_{t+1}$ by construction, we can use the same arguments as in the proof (see Appendix \ref{pieceproof}) of Lemma \ref{piecelemma} to conclude that $\hat{w}_t(\pi_t,\gamma_t^1,\gamma_t^2)$ is linear in $\gamma_t^2$ and convex in $\gamma_t^1$. Exploiting these properties, the min-max problem in \eqref{barv} can be formulated as
\begin{equation*}
\begin{aligned}
& \underset{\gamma_t^1 \in \mathcal{B}_t^1, \nu \in \R}{\text{min}}
 & & \nu \\
& \text{subject to }
 & &\hat{w}_t(\pi_t,\gamma_t^1, \mathbbm{1}(u_t^2))  \leq \nu, \; \forall u_t^2 \in \mathcal{U}_t^2,
\end{aligned}
\end{equation*}
where $\mathbbm{1}(u_t^2)$ denotes the prescription with $\gamma_t^2(u_t^2) = 1$. The optimization problem stated above is a convex program and, since we have a closed-form representation of the functions $\hat{w}_t$, this convex program can be solved using standard convex optimization algorithms \cite{ben2001lectures,boyd2004convex}.} 

We can then solve the following regression (least-squares) problem to obtain a piecewise linear and convex representation of $\hat{V}_t$
\begin{align}\label{regression}
\min_{\theta_t}\sum_{\pi_t \in \mathscr{S}_t} (\hat{V}(\pi_t,\theta_t) - \bar{V}_t(\pi_t))^2.
\end{align}
Thus, at each time $t$, we have an approximate representation of the value function $V_t$.

{\begin{remark}
Our methodology for solving the dynamic program is similar to the neural network based approximate dynamic programming methods in \cite{bertsekas1996neuro}. In this neural network based approach, an approximate version of the value function is represented as a neural network. A finite collection of \emph{samples} of the true value function is obtained and the neural network is then \emph{trained} on these samples, that is, the weights of the neural network are adjusted such that the parametric representation is close to the true value function. This training is achieved via the regression problem described in \eqref{regression}. Note that by designing the architecture appropriately, it is possible to ensure that the output of the neural network is a piecewise linear and convex function of the input. In fact, our representation in \eqref{valrep} can be viewed as a single layer of neurons with linear activation followed by a max-pooling layer \cite{schmidhuber2015deep}.
\end{remark}}

In general, there may be a trade-off between the computational burden and approximation error associated with this method. That is, to achieve lower approximation error, we may have to sample a very large number of beliefs. This in turn increases the computational complexity of performing the regression in the approach described above. Understanding the precise relationship between approximation error and computational complexity is a problem for future work and beyond the scope of this paper.



\section{Conclusion}\label{sec:conc}
In this paper, we considered a general model of zero-sum stochastic games with asymetric information. For this general model, we provided a dynamic programming approach for
characterizing the value (if it exists). This dynamic programming characterization of value relies on our construction of two {virtual games} that have the same value as our original game. If the value does not exist in the original game, then our dynamic program provides bounds on the upper and lower values of the original game. We then focused on game models in which one player has complete game information and the other has partial information. For such games, we showed that the value exists and used our dynamic programming approach to characterize the value. Further, we proved the existence of a Nash equilibrium in such games where the more-informed player plays a common information belief based strategy. We discussed a dynamic programming approach for computing a common information belief based equilibrium strategy for the more-informed player.

\begin{appendix}

\section{Proof of Lemma \ref{virtlemma}}\label{virtlemmaproof}
It was shown in \cite{nayyar2014common} that there exist \emph{bijective} mappings $\mathcal{M}^i: \mathcal{G}^i \rightarrow \mathcal{H}^i$, $i =1,2,$ such that for every $g^1 \in \G^1$ and $g^2 \in \G^2$, we have
\begin{align}
J(g^1,g^2) = \mathcal{J}(\mathcal{M}^1(g^1),\mathcal{M}^2(g^2)).
\end{align}
Therefore, for any strategy $g^1 \in \G^1$, we have
\begin{align}
\sup_{g^2\in \G^2}J(g^1,g^2) &= \sup_{g^2\in \G^2}\mathcal{J}(\mathcal{M}^1(g^1),\mathcal{M}^2(g^2))\\
&= \sup_{\chi^2\in \H^2}\mathcal{J}(\mathcal{M}^1(g^1),\chi^2).
\end{align}
Consequently,
\begin{align}
\inf_{g^1\in\G^1}\sup_{g^2\in \G^2}J(g^1,g^2) &= \inf_{g^1\in\G^1}\sup_{\chi^2\in \H^2}\mathcal{J}(\mathcal{M}^1(g^1),\chi^2)\\
&=  \inf_{\chi^1\in\H^1}\sup_{\chi^2\in \H^2}\mathcal{J}(\chi^1,\chi^2).
\end{align}
This implies that $S^u(\gm{G}) = S^u(\gm{G}_v)$. We can similarly prove that $S^l(\gm{G}) = S^l(\gm{G}_v)$.
{\begin{remark}
We can also show that a strategy profile $(g^1,g^2)$ is a Nash equilibrium in game $\gm{G}$ if and only if $(\mathcal{M}^1(g^1),\mathcal{M}^2(g^2))$ is a Nash equilibrium in game $\gm{G}_v$.
\end{remark}}

\section{Proof of Lemma \ref{evolequi}}
\label{evolequiproof}

Let us consider the evolution of the virtual game $\gm{G}_v$ under the strategy profile $(\chi^1,\chi^2)$ and the expanded virtual game $\gm{G}_e$ under the strategy profile $(\tilde{\chi}^1,\tilde{\chi}^2)$. Let the primitive variables and the randomization variables $K_t^i$ in both games be identical. The variables such as the state, action and information variables in the expanded game $\gm{G}_e$ are distinguished from those in the virtual game $\gm{G}_v$ by means of a tilde. For instance, $X_t$ is the state in game $\gm{G}_v$ and $\tilde{X}_t$ is the state in game $\gm{G}_e$.

We will prove by induction that the system evolution in both these games is identical over the entire horizon. This is clearly true at the end of time $t=1$ because the state, observations and the common and private information variables are identical in both games. Moreover, since $\vct{\chi}^i = \varrho^i(\vct{\tilde{\chi}}^1,\vct{\tilde{\chi}}^2)$, $i=1,2$, the strategies $\chi^i_1$ and $\tilde{\chi}^i_1$ are identical by definition (see Definition \ref{def:rho}). Thus, the prescriptions and actions at $t=1$ are also identical.

For induction, assume that the system evolution in both games is identical until the end of time $t$. Then, $$\rv{X}_{t+1} = f_t(\rv{X}_t, \rv{U}_t^{1:2},\rv{W}_t^s) = f_t(\tilde{\rv{X}}_t, \tilde{\rv{U}}_t^{1:2},\rv{W}_t^s) = \tilde{\rv{X}}_{t+1}.$$ Using equations (\ref{obseq}), (\ref{privevol}) and (\ref{commonevol}), we can similarly argue that $\rv{Y}_{t+1}^i = \tilde{\rv{Y}}_{t+1}^i$, $\rv{P}_{t+1}^i = \tilde{\rv{P}}_{t+1}^i$ and $\rv{C}_{t+1} = \tilde{\rv{C}}_{t+1}$. Since $\vct{\chi}^i = \varrho^i(\vct{\tilde{\chi}}^1,\vct{\tilde{\chi}}^2)$, we also have 
\begin{align}
\tilde{\Gamma}_{t+1}^i &= \tilde{\chi}_{t+1}^i(\tilde{\rv{C}}_{t+1},\tilde{\Gamma}_{1:t}^{1:2}) \stackrel{a}{=} \chi_{t+1}^i(\tilde{\rv{C}}_{t+1}) \stackrel{b}{=} \Gamma_{t+1}^i.
\end{align}
Here, equality $(a)$ follows from the construction of the mapping $\varrho^i$ (see Definition \ref{def:rho}) and equality $(b)$ follows from the fact that $C_{t+1} = \tilde{C}_{t+1}$. Further, 
\begin{align}
\rv{U}_{t+1}^i = \kappa(\Gamma_{t+1}^i(\rv{P}_{t+1}^i),\rv{K}_{t+1}^i) &= \kappa(\tilde{\Gamma}_{t+1}^i(\tilde{\rv{P}}_{t+1}^i),\rv{K}_{t+1}^i) \\
&= \tilde{\rv{U}}_{t+1}^i.
\end{align}
Thus, by induction, the hypothesis is true for every $1\leq t \leq T$. This proves that the virtual and expanded games have identical dynamics under strategy profiles $(\vct{\chi}^1,\chi^2)$ and $(\tilde{\vct{\chi}}^1,\tilde{\chi}^2)$.

Since the virtual and expanded games have the same cost structure, having identical dynamics ensures that strategy profiles $(\vct{\chi}^1,\chi^2)$ and $(\tilde{\vct{\chi}}^1,\tilde{\chi}^2)$ have the same expected cost in games $\gm{G}_v$ and $\gm{G}_e$, respectively. Therefore, $\mathcal{J}(\vct{\chi}^1,\chi^2) = {\mathcal{J}}(\tilde{\vct{\chi}}^1,\tilde{\chi}^2)$.

\section{Proof of Theorem \ref{origvirt}}\label{origvirtproof}

For any strategy $\chi^1 \in \H^1$, we have 
\begin{align}
\sup_{\tilde{\chi}^2 \in \tilde{\H}^2}\mathcal{J}({\chi}^1,\tilde{\chi}^2) \geq \sup_{{\chi}^2 \in {\H}^2}\mathcal{J}({\chi}^1,{\chi}^2), \label{eq:thm1c}
\end{align}
because $\H^2 \subseteq \tilde{\H}^2$. Further,
\begin{align}
\sup_{\tilde{\chi}^2 \in \tilde{\H}^2}\mathcal{J}({\chi}^1,\tilde{\chi}^2) &=\sup_{\tilde{\chi}^2 \in \tilde{\H}^2}\mathcal{J}(\varrho^1(\chi^1,\tilde{\chi}^2),\varrho^2(\chi^1,\tilde{\chi}^2)). \label{eq:thm1a} \\
&= \sup_{\tilde{\chi}^2 \in \tilde{\H}^2}\mathcal{J}({\chi}^1,\varrho^2(\chi^1,\tilde{\chi}^2))\\
&\leq \sup_{{\chi}^2 \in {\H}^2}\mathcal{J}({\chi}^1,{\chi}^2), \label{eq:thm1b}
\end{align} 
where the first equality is due to Lemma \ref{evolequi}, the second equality is because $\varrho^1(\chi^1,\tilde{\chi}^2) = \chi^1$ and the last inequality is due to the fact that $\varrho^2(\chi^1,\tilde{\chi}^2) \in {\H}^2$ for any $\tilde{\chi}^2 \in \tilde{\H}^2$.

Combining \eqref{eq:thm1c} and \eqref{eq:thm1b}, we obtain that
\begin{align}
\sup_{{\chi}^2 \in {\H}^2}\mathcal{J}({\chi}^1,{\chi}^2) = \sup_{\tilde{\chi}^2 \in \tilde{\H}^2}\mathcal{J}({\chi}^1,\tilde{\chi}^2). \label{eq:thm1d}
\end{align}
Now, 
\begin{align}
S^u(\gm{G}_e) &:=\inf_{\tilde{\chi}^1 \in \tilde{\H}^1}\sup_{\tilde{\chi}^2 \in \tilde{\H}^2}\mathcal{J}(\tilde{\chi}^1,\tilde{\chi}^2) \\
\label{infsupineq} &\leq \inf_{{\chi}^1 \in {\H}^1}\sup_{\tilde{\chi}^2 \in \tilde{\H}^2}\mathcal{J}({\chi}^1,\tilde{\chi}^2)\\
&= \inf_{{\chi}^1 \in {\H}^1}\sup_{{\chi}^2 \in {\H}^2}\mathcal{J}({\chi}^1,{\chi}^2), \label{eq:thm1e}\\
&=: S^u(\gm{G}_v).
\end{align}
where the inequality (\ref{infsupineq}) is true since $\H^1 \subseteq \tilde{\H}^1$ and the   equality  in \eqref{eq:thm1e}  follows from \eqref{eq:thm1d}. 
Therefore, $S^u(\gm{G}_e) \leq S^u(\gm{G}_v)$. We can use similar arguments to show that $S^l(\gm{G}_v) \leq S^l(\gm{G}_e).$

\section{Proof of Lemma \ref{infstate}}\label{infstateproof}
We begin with defining the following transformations for each time $t$. Recall that $\mathcal{S}_t$ is the set of all possible common information beliefs at time $t$ and $\mathcal{B}_t^i$ is the prescription space for virtual player $i$  at time $t$.
\begin{definition}\label{fdef}
\begin{enumerate}[(i)]
\item Let $P_t^j: \mathcal{S}_t \times \mathcal{B}_t^1 \times \mathcal{B}_t^2 \to \Delta(\mathcal{Z}_{t+1} \times \X_{t+1} \times \P_{t+1}^1 \times \P_{t+1}^2)$ be defined as
\begin{align}
P_t^j(\pi_t,&\gamma_t^{1:2}; z_{t+1}, x_{t+1},p_{t+1}^{1:2}) \label{jointprob}\\
&:= \sum_{\vct{x}_t,\vct{p}^{1:2}_t,\vct{u}_t^{1:2}}\pi_t(\vct{x}_t,\vct{p}_t^{1:2})\gamma_t^1(\vct{p}_t^1;u_t^1)\gamma_t^2( \vct{p}_t^2; u_t^2)\Py[\vct{x}_{t+1}, \vct{p}_{t+1}^{1:2},\vct{z}_{t+1} \mid \vct{x}_t,\vct{p}_t^{1:2},\vct{u}_t^{1:2}].\label{condindepeq}
\end{align}
We will use $P_t^j(\pi_t,\gamma_t^{1:2})$ as a shorthand for the probability distribution $P_t^j(\pi_t,\gamma_t^{1:2}; \cdot, \cdot, \cdot)$.
The distribution $P_t^j(\pi_t,\gamma_t^{1:2})$ can be viewed as a joint distribution over the variables $Z_{t+1},X_{t+1}, P_{t+1}^{1:2}$ if the distribution on $X_t, P^{1:2}_t$ is $\pi_t$ and prescriptions $\gamma^{1:2}_t$ are chosen by the virtual players at time $t$.
\item Let $P_t^m: \mathcal{S}_t \times \mathcal{B}_t^1 \times \mathcal{B}_t^2 \to \Delta\mathcal{Z}_{t+1}$ be defined as
\begin{align}
P_t^m(\pi_t,\gamma_t^{1:2}; z_{t+1}) = \sum_{x_{t+1},p_{t+1}^{1:2}} P_t^j(\pi_t,\gamma_t^{1:2}; z_{t+1}, x_{t+1},p_{t+1}^{1:2}). \label{marginalprob}
\end{align}
{The distribution $P_t^m(\pi_t,\gamma_t^{1:2})$ is the marginal distribution of the variable $Z_{t+1}$ obtained from the joint distribution $P_t^j(\pi_t,\gamma_t^{1:2})$ defined above.}
\item Let $F_t: \mathcal{S}_t \times \mathcal{B}_t^1 \times \mathcal{B}_t^2 \times \mathcal{Z}_{t+1} \to \mathcal{S}_{t+1}$ be defined as
\begin{align}
F_t(\pi_t,\gamma_t^{1:2},z_{t+1})= 
\begin{cases}
\frac{P_t^j(\pi_t,\gamma_t^{1:2};z_{t+1},\cdot,\cdot)}{P_t^m(\pi_t,\gamma_t^{1:2};z_{t+1})} &\text{if } P_t^m(\pi_t,\gamma_t^{1:2};z_{t+1}) > 0\\
G_t(\pi_t,\gamma_t^{1:2},z_{t+1}) & \text{otherwise},
\end{cases}
\end{align}
where $G_t: \mathcal{S}_t \times \mathcal{B}_t^1 \times \mathcal{B}_t^2 \times \mathcal{Z}_{t+1} \to \mathcal{S}_{t+1}$ can be any arbitrary measurable mapping. {One such mapping is the one that maps every element $\pi_t,\gamma_t^{1:2},z_{t+1}$ to the uniform distribution over the finite space $\mathcal{X}_{t+1} \times \P_{t+1}^1 \times \P_{t+1}^2$.}
\end{enumerate}
\end{definition}
Let the collection of transformations $F_t$ that can be constructed using the method described in Definition \ref{fdef} be denoted by $\mathscr{B}$. Note that the transformations $P_t^j, P_t^m$ and $F_t$ do not depend on the strategy profile $(\tilde{\chi}^1, \tilde{\chi}^2)$ because the term $\Py[\vct{x}_{t+1}, \vct{p}_{t+1}^{1:2},\vct{z}_{t+1} \mid \vct{x}_t,\vct{p}_t^{1:2},\vct{u}_t^{1:2}]$ in (\ref{condindepeq}) depends only on the system dynamics (see equations (\ref{virdyn1}) -- (\ref{virdyn5})) and not on the strategy profile $(\tilde{\chi}^1,\tilde{\chi}^2)$.

{Consider a strategy profile $(\tilde{\chi}^1, \tilde{\chi}^2)$.  Note that the number of possible realizations of common information and prescription history  under $(\tilde{\chi}^1, \tilde{\chi}^2)$  is finite. } Let $c_{t+1},\gamma_{1:t}^{1:2}$ be a realization of the common information and prescription history at time $t+1$ with non-zero probability of occurrence under  $(\tilde{\chi}^1, \tilde{\chi}^2)$. For this realization of virtual players' information, the common information based belief on the state and private information at time $t+1$ is given by
\begin{align}
\nonumber&\pi_{t+1}(x_{t+1},p_{t+1}^{1:2}) \\
&= \nonumber\Py^{(\tilde{\chi}^1,\tilde{\chi}^2)}[X_{t+1} = \vct{x}_{t+1}, P_{t+1}^{1:2} = \vct{p}_{t+1}^{1:2} \mid \vct{c}_{t+1},\gamma_{1:t}^{1:2}]\\
&= \nonumber\Py^{(\tilde{\chi}^1,\tilde{\chi}^2)}[X_{t+1} = \vct{x}_{t+1}, P_{t+1}^{1:2} = \vct{p}_{t+1}^{1:2}\mid \vct{c}_{t},\gamma_{1:t-1}^{1:2},\vct{z}_{t+1},\gamma_t^{1:2}]\\
\label{b1}&= \frac{\Py^{(\tilde{\chi}^1,\tilde{\chi}^2)}[X_{t+1} = \vct{x}_{t+1}, P_{t+1}^{1:2} = \vct{p}_{t+1}^{1:2},Z_{t+1} = \vct{z}_{t+1} \mid \vct{c}_{t},\gamma_{1:t}^{1:2}]}{\Py^{(\tilde{\chi}^1,\tilde{\chi}^2)}[Z_{t+1} = \vct{z}_{t+1}\mid\vct{c}_t,\gamma_{1:t}^{1:2}]}.
\end{align}

Notice that the expression (\ref{b1}) is well-defined, that is, the denominator is non-zero because of our assumption that the realization $c_{t+1},\gamma_{1:t}^{1:2}$ has non-zero probability of occurrence. Let us consider the numerator in the expression (\ref{b1}). For convenience, we will denote it with $\Py^{(\tilde{\chi}^1,\tilde{\chi}^2)}[\vct{x}_{t+1}, \vct{p}_{t+1}^{1:2},\vct{z}_{t+1} \mid \vct{c}_{t},\gamma_{1:t}^{1:2}]$. We have
\begin{align}
\nonumber&\Py^{(\tilde{\chi}^1,\tilde{\chi}^2)}[\vct{x}_{t+1}, \vct{p}_{t+1}^{1:2},\vct{z}_{t+1} \mid \vct{c}_{t},\gamma_{1:t}^{1:2}]\\
&= \sum_{\vct{x}_t,\vct{p}^{1:2}_t,\vct{u}_t^{1:2}}\pi_t(\vct{x}_t,\vct{p}_t^{1:2})\gamma_t^1( \vct{p}_t^1;\vct{u}_t^1)\gamma_t^2( \vct{p}_t^2; {u}_t^2)\Py^{(\tilde{\chi}^1,\tilde{\chi}^2)}[\vct{x}_{t+1}, \vct{p}_{t+1}^{1:2},\vct{z}_{t+1} \mid \vct{c}_{t},\gamma_{1:t}^{1:2},\vct{x}_t,\vct{p}_t^{1:2},\vct{u}_t^{1:2}]\\
&= \label{condindepeq1}\sum_{\vct{x}_t,\vct{p}^{1:2}_t,\vct{u}_t^{1:2}}\pi_t(\vct{x}_t,\vct{p}_t^{1:2})\gamma_t^1(\vct{p}_t^1;u_t^1)\gamma_t^2( \vct{p}_t^2; u_t^2)\Py[\vct{x}_{t+1}, \vct{p}_{t+1}^{1:2},\vct{z}_{t+1} \mid \vct{x}_t,\vct{p}_t^{1:2},\vct{u}_t^{1:2}]\\
&= P_t^j(\pi_t,\gamma_t^{1:2};z_{t+1}, x_{t+1},p_{t+1}^{1:2}),
\end{align}
{where $\pi_t$ is the common information belief on $X_t, P_t^1,P_t^2$ at time $t$ given the realization\footnote{Note that the belief $\Py^{(\tilde{\chi}^1,\tilde{\chi}^2)}[x_t, p_t^{1:2} \mid c_t,\gamma_{1:t-1}^{1:2}] = \Py^{(\tilde{\chi}^1,\tilde{\chi}^2)}[x_t, p_t^{1:2} \mid c_t,\gamma_{1:t}^{1:2}]$ because $\gamma_t^i = \tilde{\chi}_t^i( c_t,\gamma_{1:t-1}^{1:2})$, $i=1,2$. } $c_t,\gamma_{1:t-1}^{1:2}$ and $P_t^j$ is as defined in Definition \ref{fdef}.}
The equality in (\ref{condindepeq1}) is due to the structure of the system dynamics in game $\gm{G}_e$ described by equations (\ref{virdyn1}) -- (\ref{virdyn5}). Similarly, the denominator in (\ref{b1}) satisfies
\begin{align}
0 < \nonumber\Py^{(\tilde{\chi}^1,\tilde{\chi}^2)}[\vct{z}_{t+1} \mid \vct{c}_{t},\gamma_{1:t}^{1:2}] &= \sum_{x_{t+1},p_{t+1}^{1:2}} P_t^j(\pi_t,\gamma_t^{1:2};z_{t+1}, x_{t+1},p_{t+1}^{1:2})\\
&= P_t^m(\pi_t,\gamma_t^{1:2};z_{t+1}), \label{marginalprob}
\end{align}
where $P_t^m$ is as defined is Definition \ref{fdef}. Thus, from equation (\ref{b1}), we have
\begin{equation}\label{beltrans}
\pi_{t+1} = \frac{\vct{P}_t^j(\pi_t,\gamma_t^{1:2};z_{t+1},\cdot,\cdot)}{P_t^m(\pi_t,\gamma_t^{1:2},z_{t+1})} = F_t(\pi_t, \gamma_t^{1:2};\vct{z}_{t+1}),
\end{equation}
where $F_t$ is as defined in Definition \ref{fdef}. {Since the relation (\ref{beltrans}) holds for every realization $c_{t+1}, \gamma_{1:t}^{1:2}$ that has non-zero probability of occurrence under $(\tilde{\chi}^1,\tilde{\chi}^2)$,} we can conclude that  the common information belief $\Pi_t$ evolves \emph{almost surely} as
\begin{align}
\Pi_{t+1} = F_t(\Pi_t, \Gamma_t^{1:2},\vct{Z}_{t+1}), ~~ t \geq 1,
\end{align}
under the strategy profile $(\tilde{\chi}^1,\tilde{\chi}^2)$.

The expected cost at time $t$ can be expressed as follows
\begin{align}
\E^{(\tilde{\vct{\chi}}^1,\tilde{\vct{\chi}}^2)}[c_t(X_t,U_t^1,U_t^2)]
& = \E^{(\tilde{\vct{\chi}}^1,\tilde{\vct{\chi}}^2)}[\E[c_t(X_t,U_t^1,U_t^2) \mid C_t,\Gamma_{1:t}^{1:2}]]\\
&=  \E^{(\tilde{\vct{\chi}}^1,\tilde{\vct{\chi}}^2)}[\tilde{c}_t(\Pi_t,\Gamma_t^1,\Gamma_t^2)],
\end{align}
where the function $\tilde{c}_t$ is as defined in equation (\ref{tildec}).
Therefore, the total cost can be expressed as
\begin{align}
\E^{(\tilde{\vct{\chi}}^1,\tilde{\vct{\chi}}^2)}&\left[\sum_{t=1}^T c_t(\rv{X}_t,\rv{U}_t^{1},\rv{U}_t^2)\right]
 = \E^{(\tilde{\vct{\chi}}^1,\tilde{\vct{\chi}}^2)}\left[\sum_{t=1}^T \tilde{c}_t(\Pi_t,\Gamma_t^1,\Gamma_t^2)\right].
\end{align}

\section{Some Continuity Results}\label{sec:cont}

In this section, we will state and prove some technical results that will be useful for proving Lemma \ref{equiexistlemma}.

 Let $\mathcal{S}_t$ denote the set of all probability distributions over the finite set $\mathcal{X}_{t} \times \mathcal{P}_{t}^1 \times \mathcal{P}_{t}^2$. Thus, $\mathcal{S}_t$ is the set of all possible common information based beliefs at time $t$.
Define 
\begin{align}
\bar{\mathcal{S}}_t := \{\alpha\pi_t : 0\leq\alpha\leq 1, \pi_t \in \mathcal{S}_t\}.
\end{align} 
The functions $\tilde{c}_t$ in (\ref{tildec}), $P_t^j$ in (\ref{jointprob}), $P_t^m$ in (\ref{marginalprob}) and $F_t$ in (\ref{beltrans}) were defined for any $\pi_t \in \mathcal{S}_t$. We will extend the domain of the argument $\pi_t$ in these functions to $\bar{\mathcal{S}}_t$ as follows. For any {$\gamma_t^i \in \mathcal{B}_t^i, i = 1,2$, $z_{t+1} \in \mathcal{Z}_{t+1}$,} $0 \leq \alpha \leq 1$ and $\pi_t \in {\mathcal{S}}_t$, let
\begin{enumerate}[(i)]
\item $\tilde{c}_t(\alpha\pi_t,\gamma_t^1,\gamma_t^2) := \alpha\tilde{c}_t(\pi_t,\gamma_t^1,\gamma_t^2)$
\item $P^j_t(\alpha\pi_t,\gamma_t^{1:2}) := \alpha P^j_t(\pi_t,\gamma_t^{1:2})$
\item $P^m_t(\alpha\pi_t,\gamma_t^{1:2}) := \alpha P^m_t(\pi_t,\gamma_t^{1:2})$
\item $
F_t(\alpha\pi_t,\gamma_t^{1:2},z_{t+1}) := 
\begin{cases}
F_t(\pi_t,\gamma_t^{1:2},z_{t+1}) &\text{if } \alpha > 0\\
\bm{0} & \text{if } \alpha = 0,
\end{cases}
$
\end{enumerate}
where $\bm{0}$ is a zero-vector of size $|\mathcal{X}_{t} \times \mathcal{P}_{t}^1 \times \mathcal{P}_{t}^2|$.


Having extended the domain of the above  functions, we can also extend the domain of the argument $\pi_t$ in the functions $w_t^u(\cdot), w_t^l(\cdot), V_t^u(\cdot), V_t^l(\cdot)$ defined in the dynamic programs of  Section \ref{dpsec}. First, for any  $0 \leq \alpha \leq 1$ and  $\pi_{T+1} \in {\mathcal{S}}_{T+1}$, define $V^u_{T+1}(\alpha\pi_{T+1}) :=0$.  We can then define the following functions for every  $t \leq T$ in a backward inductive manner:  For any {$\gamma_t^i \in \mathcal{B}_t^i, i = 1,2$, } $0 \leq \alpha \leq 1$ and $\pi_t \in {\mathcal{S}}_t$, let
\begin{align}\label{eq:w_ext}
w^u_t (\alpha\pi_t,\gamma_t^1,\gamma_t^2) &:= \tilde{c}_t(\alpha\pi_t,\gamma_t^1,\gamma_t^2) + \sum_{z_{t+1}}\big[P^m_t(\alpha\pi_t,\gamma_t^{1:2};z_{t+1})V^u_{t+1}(F_t(\alpha\pi_t,\gamma_t^{1:2},z_{t+1}))\big] \\
V^u_t(\alpha\pi_t) &:= \inf_{{\gamma}_t^1} \sup_{\gamma_t^2}w_t^u(\alpha\pi_t,\gamma_t^1,\gamma_t^2).
\end{align}
Note that when $\alpha=1$, the above definition of $w^u_t$ is equal to the definition of $w^u_t$ in equation (\ref{minequa}) of the dynamic program. We can similarly extend $w^l_t$ and $V^l_t$. 
 These extended value functions satisfy the following homogeneity property. A similar result  was shown in \cite[Lemma III.1]{li2014lp} for a special case of our model.


\begin{lemma}\label{scalinglemma}
For any constant $0 \leq \alpha \leq 1$ and any $\pi_t \in \bar{\mathcal{S}}_t$, we have $\alpha V_t^u(\pi_t) = V_t^u(\alpha \pi_t)$ and $\alpha V_t^l(\pi_t) = V_t^l(\alpha \pi_t)$.
\end{lemma}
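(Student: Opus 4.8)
The plan is to prove both identities simultaneously by backward induction on $t$. The crux is that the extended stage functions $w_t^u$ and $w_t^l$ are positively homogeneous of degree one in their (extended) belief argument, and that this homogeneity is preserved by the operations $\inf_{\gamma_t^1}\sup_{\gamma_t^2}$ and $\sup_{\gamma_t^2}\inf_{\gamma_t^1}$ because $\alpha\geq 0$. Throughout, observe that any $\mu\in\bar{\mathcal{S}}_t$ can be written as $\mu=\beta\sigma$ with $\sigma\in\mathcal{S}_t$ and $\beta\in[0,1]$, so it suffices to establish $V_t^u(\alpha\pi_t)=\alpha V_t^u(\pi_t)$ (and likewise for $V_t^l$) for $\pi_t\in\mathcal{S}_t$; the general case then follows by composing scalars, since $\alpha\beta\in[0,1]$.

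\emph{Base case.} At the terminal stage the extended value functions are identically zero on $\bar{\mathcal{S}}_{T+1}$, so $\alpha V_{T+1}^u(\pi_{T+1})=0=V_{T+1}^u(\alpha\pi_{T+1})$, and similarly for $V_{T+1}^l$, for every $\alpha\in[0,1]$.

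\emph{Inductive step.} Fix $t\leq T$, assume the claim at time $t+1$, and let $\pi_t\in\mathcal{S}_t$. First suppose $\alpha>0$. By the definitions of the extended $\tilde{c}_t$, $P_t^m$ and $F_t$ (rules (i), (iii), (iv)), we have $\tilde{c}_t(\alpha\pi_t,\gamma_t^{1:2})=\alpha\tilde{c}_t(\pi_t,\gamma_t^{1:2})$, $P_t^m(\alpha\pi_t,\gamma_t^{1:2};z_{t+1})=\alpha P_t^m(\pi_t,\gamma_t^{1:2};z_{t+1})$, and, crucially, $F_t(\alpha\pi_t,\gamma_t^{1:2},z_{t+1})=F_t(\pi_t,\gamma_t^{1:2},z_{t+1})$ (the update is \emph{not} rescaled). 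Substituting these into the definition of the extended $w_t^u$ in \eqref{eq:w_ext} gives, for all prescriptions $\gamma_t^1,\gamma_t^2$,
\begin{align*}
w_t^u(\alpha\pi_t,\gamma_t^1,\gamma_t^2)&=\alpha\,\tilde{c}_t(\pi_t,\gamma_t^{1:2})+\sum_{z_{t+1}}\alpha\,P_t^m(\pi_t,\gamma_t^{1:2};z_{t+1})\,V_{t+1}^u\big(F_t(\pi_t,\gamma_t^{1:2},z_{t+1})\big)\\
&=\alpha\,w_t^u(\pi_t,\gamma_t^1,\gamma_t^2).
\end{align*}
Applying $\inf_{\gamma_t^1}\sup_{\gamma_t^2}$ to both sides and using $\alpha\geq 0$ to pull the constant out of the $\inf$ and the $\sup$ yields $V_t^u(\alpha\pi_t)=\alpha V_t^u(\pi_t)$. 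Now suppose $\alpha=0$, so $\alpha\pi_t=\bm{0}$. Then $\tilde{c}_t(\bm{0},\gamma_t^{1:2})=0$ and $P_t^m(\bm{0},\gamma_t^{1:2};z_{t+1})=0$ for every $z_{t+1}$, so each summand of $w_t^u(\bm{0},\gamma_t^1,\gamma_t^2)$ is $0$ times a finite number ($V_{t+1}^u$ is finite since the state and private-information spaces, the horizon, and the cost are all finite), whence $w_t^u(\bm{0},\gamma_t^1,\gamma_t^2)=0$ for all $\gamma_t^1,\gamma_t^2$ and therefore $V_t^u(\bm{0})=0=\alpha V_t^u(\pi_t)$. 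Finally, for a general $\mu\in\bar{\mathcal{S}}_t$, write $\mu=\beta\sigma$ with $\sigma\in\mathcal{S}_t$, $\beta\in[0,1]$; then $V_t^u(\alpha\mu)=V_t^u((\alpha\beta)\sigma)=\alpha\beta\,V_t^u(\sigma)=\alpha V_t^u(\beta\sigma)=\alpha V_t^u(\mu)$. The argument for $V_t^l$ and $w_t^l$ is verbatim the same, with $\sup_{\gamma_t^2}\inf_{\gamma_t^1}$ in place of $\inf_{\gamma_t^1}\sup_{\gamma_t^2}$; this operation also commutes with multiplication by the nonnegative constant $\alpha$.

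\emph{Main obstacle.} There is no genuine difficulty here: the argument is essentially a bookkeeping verification built on the homogeneity of the extended primitives $\tilde{c}_t$, $P_t^m$, $F_t$. The one point requiring slight care is the boundary case $\alpha=0$, where one cannot ``divide by $\alpha$'' and must instead check directly that the stage function vanishes at the zero vector, which uses boundedness of the next-stage value function. One should also note that extension rule (iv) leaves $F_t(\alpha\pi_t,\cdot)$ \emph{unscaled} for $\alpha>0$ — this is exactly what makes $w_t^u$ homogeneous of degree one (rather than degree two) in the belief.
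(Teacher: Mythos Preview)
Your proof is correct and supplies exactly the verification the paper leaves implicit (the paper's proof reads only ``The proof can be easily obtained from the above definitions of the extended functions''). One small remark: your backward-induction framing is harmless but unnecessary for the case $\alpha>0$, since extension rule (iv) gives $F_t(\alpha\pi_t,\cdot)=F_t(\pi_t,\cdot)$ outright, so the identity $w_t^u(\alpha\pi_t,\cdot)=\alpha\,w_t^u(\pi_t,\cdot)$ follows directly from (i), (iii), (iv) without invoking the hypothesis at $t+1$; only the $\alpha=0$ boundary case needs the finiteness of $V_{t+1}^u$, which you handle correctly.
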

\begin{proof}
{The proof can be easily obtained from the above definitions of the extended functions. }
\end{proof}
The following lemmas will be used in Appendix \ref{equiexistlemmaproof} to establish some useful properties of the extended functions.

\begin{lemma}\label{equicontlemma1}
Let $V: \bar{\mathcal{S}}_{t+1} \to \R$ be a continuous function satisfying $V(\alpha\pi) = \alpha V(\pi)$ for every $0 \leq \alpha \leq 1$ and $\pi \in \bar{\mathcal{S}}_{t+1}$.
Define 
\begin{align*}
V'(\pi_t, \gamma_t^1,\gamma_t^2) := \sum_{z_{t+1}}P_t^m( \pi_t,\gamma_{t}^{1:2};z_{t+1})[V(F_t(\pi_{t},\gamma_t^{1:2},{z}_{t+1}))]. 
\end{align*}
For a fixed $\gamma_t^1,\gamma_t^2$, $V'(\cdot,\gamma_t^1,\gamma_t^2)$ is a function from $\bar{\mathcal{S}}_{t+1}$ to $\R$.  
Then, the family of functions 
\begin{align}
\mathscr{F}_1&:=\{V'(\cdot,\gamma_t^1,\gamma_t^2): \gamma_t^i \in \mathcal{B}_t^i,i=1,2\}
\end{align}
is equicontinuous. Similarly,  the following families of functions
\begin{align}
\mathscr{F}_2&:=\{V'(\pi_t,\cdot,\gamma_t^2): \gamma_t^2 \in \mathcal{B}_t^2, \pi_t \in \bar{\mathcal{S}}_t\}\\
\mathscr{F}_3&:=\{V'(\pi_t,\gamma_t^1,\cdot): \gamma_t^1 \in \mathcal{B}_t^1, \pi_t \in \bar{\mathcal{S}}_t\}
\end{align}
are  equicontinuous in their respective arguments.
\end{lemma}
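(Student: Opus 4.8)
The plan is to prove equicontinuity of each of the three families by reducing it to the equicontinuity (indeed, uniform continuity) of a single fixed function on a compact set, together with the fact that each member of the family is built by pre-composing that function with affine or prescription-parametrized maps whose relevant moduli of continuity are uniformly controlled.

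First I would record the building blocks. The map $(\pi_t,\gamma_t^{1:2})\mapsto P_t^j(\pi_t,\gamma_t^{1:2};\cdot,\cdot,\cdot)$ is, by \eqref{jointprob}, a bilinear-type polynomial in the entries of $\pi_t$ and of $\gamma_t^1,\gamma_t^2$, hence jointly continuous on $\bar{\mathcal{S}}_t\times\mathcal{B}_t^1\times\mathcal{B}_t^2$, which is compact; similarly for the marginal $P_t^m$ via \eqref{marginalprob}. The key subtlety is $F_t$, which is a normalized ratio and is only defined piecewise; but in the expression for $V'$ the term with a given $z_{t+1}$ is multiplied by $P_t^m(\pi_t,\gamma_t^{1:2};z_{t+1})$, and on the set where this marginal is positive, $P_t^m(\pi_t,\gamma_t^{1:2};z_{t+1})F_t(\pi_t,\gamma_t^{1:2},z_{t+1}) = P_t^j(\pi_t,\gamma_t^{1:2};z_{t+1},\cdot,\cdot)$, while on the set where it vanishes the product is $0$ by the $\alpha=0$ convention and by the extended definitions. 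So I would first rewrite
\[
V'(\pi_t,\gamma_t^1,\gamma_t^2) = \sum_{z_{t+1}} V\big(P_t^j(\pi_t,\gamma_t^{1:2};z_{t+1},\cdot,\cdot)\big),
\]
using the homogeneity $V(\alpha\pi)=\alpha V(\pi)$ to absorb the scalar $P_t^m(\pi_t,\gamma_t^{1:2};z_{t+1})$ into the argument: here the argument $P_t^j(\pi_t,\gamma_t^{1:2};z_{t+1},\cdot,\cdot)$ is a subprobability vector, i.e. an element of $\bar{\mathcal{S}}_{t+1}$ (its total mass is $P_t^m(\pi_t,\gamma_t^{1:2};z_{t+1})\le 1$), so $V$ is well-defined and continuous on it. This rewriting eliminates the troublesome ratio entirely and expresses $V'$ as a finite sum of compositions $V\circ(\text{polynomial map})$.

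Next, for family $\mathscr{F}_1$ I would argue as follows. The function $V$ is continuous on the compact set $\bar{\mathcal{S}}_{t+1}$, hence uniformly continuous: for every $\epsilon>0$ there is $\eta>0$ such that $\|\mu-\mu'\|<\eta$ implies $|V(\mu)-V(\mu')|<\epsilon/|\mathcal{Z}_{t+1}|$. The map $\pi_t\mapsto P_t^j(\pi_t,\gamma_t^{1:2};z_{t+1},\cdot,\cdot)$ is linear in $\pi_t$ with coefficients (the products $\gamma_t^1(\cdot;\cdot)\gamma_t^2(\cdot;\cdot)\Py[\cdots]$) bounded uniformly in $z_{t+1}$ and in the prescriptions by $1$; hence there is a single Lipschitz constant $L$ (depending only on the cardinalities of the relevant finite sets) such that $\|P_t^j(\pi_t,\gamma_t^{1:2};z_{t+1},\cdot,\cdot)-P_t^j(\pi_t',\gamma_t^{1:2};z_{t+1},\cdot,\cdot)\|\le L\|\pi_t-\pi_t'\|$ for all $\gamma_t^{1:2}$ and all $z_{t+1}$. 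Combining, $\|\pi_t-\pi_t'\|<\eta/L$ implies $|V'(\pi_t,\gamma_t^1,\gamma_t^2)-V'(\pi_t',\gamma_t^1,\gamma_t^2)|<\epsilon$ for every choice of $\gamma_t^1,\gamma_t^2$, which is exactly equicontinuity of $\mathscr{F}_1$. For $\mathscr{F}_2$ and $\mathscr{F}_3$ the argument is identical after swapping roles: one fixes the variable being perturbed (say $\gamma_t^1$) and observes that $\gamma_t^1\mapsto P_t^j(\pi_t,\gamma_t^{1:2};z_{t+1},\cdot,\cdot)$ is again affine with uniformly bounded coefficients (bounded by $\|\pi_t\|\cdot\|\gamma_t^2\|\cdot 1\le 1$), so the same uniform-continuity-plus-uniform-Lipschitz composition gives a modulus of continuity independent of the parameters $\gamma_t^2,\pi_t$ (resp. $\gamma_t^1,\pi_t$).

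The main obstacle I anticipate is purely bookkeeping around the degenerate case $P_t^m(\pi_t,\gamma_t^{1:2};z_{t+1})=0$: one must be careful that the rewriting $\sum_{z_{t+1}}V(P_t^j(\cdots))$ genuinely agrees with the original definition of $V'$ on all of $\bar{\mathcal{S}}_t$, including boundary points, and that the arbitrary extension $G_t$ inside $F_t$ never actually contributes because it is always killed by a vanishing marginal (this is where $V(\bm 0)=0$, which follows from homogeneity with $\alpha=0$, is used). Once that identity is verified, the equicontinuity is a routine "uniformly continuous function composed with uniformly Lipschitz family of maps" argument on compact domains, so no genuine analytic difficulty remains.
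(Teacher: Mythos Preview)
Your proposal is correct and follows essentially the same approach as the paper: both use the homogeneity of $V$ to rewrite $V'(\pi_t,\gamma_t^1,\gamma_t^2)=\sum_{z_{t+1}}V\big(P_t^j(\pi_t,\gamma_t^{1:2};z_{t+1},\cdot,\cdot)\big)$, then combine the uniform continuity of $V$ on the compact set $\bar{\mathcal{S}}_{t+1}$ with the fact that $P_t^j$ is trilinear with uniformly bounded coefficients to deduce equicontinuity in each argument. Your treatment of the degenerate case $P_t^m=0$ (using $V(\bm 0)=0$) is in fact more explicit than the paper's.
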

\begin{proof}
A continuous function is bounded and uniformly continuous over a compact domain (see Theorem 4.19 in \cite{rudin1964principles}). Therefore, $V$ is bounded and uniformly continuous over $\bar{\mathcal{S}}_{t+1}$.

Using the fact  that $V(\alpha\pi) = \alpha V(\pi)$ and the definition of $F_t$ in Definition \ref{fdef}, the function $V'$ can be written as
\begin{align}\label{vsum}
V'(&\pi_t,\gamma_t^1,\gamma_t^2) = \sum_{\vct{z}_{t+1}}V\left({\vct{P}_t^j(\pi_t,\gamma_t^{1:2};\vct{z}_{t+1},\cdot,\cdot)}\right).
\end{align}
Recall that $P_t^j$ is trilinear in $\pi_t,\gamma_t^1$ and $\gamma_t^2$ with bounded coefficients for a fixed value of $z_{t+1}$ (see (\ref{jointprob})). Therefore, for each $\vct{z}_{t+1}$, $\{P^j_t(\cdot, \gamma^1_t,\gamma^2_t,\vct{z}_{t+1})\}$ is an equicontinuous family of functions in the argument $\pi_t$, where $P^j_t(\pi_t, \gamma^1_t,\gamma^2_t,\vct{z}_{t+1})$ is a short hand notation for the measure $P^j_t(\pi_t, \gamma^1_t,\gamma^2_t,\vct{z}_{t+1},\cdot,\cdot)$ over the space $\X_{t+1} \times \P_{t+1}^1 \times \P_{t+1}^2$. 

Also, since $V$ is uniformly continuous, the family $\left\{V\left({\vct{P}_t^j(\cdot,\gamma_t^{1:2},\vct{z}_{t+1})}\right)\right\}$ is equicontinuous in $\pi_t$ for each $z_{t+1}$. This is because composition with a uniformly continuous function preserves equicontinuity. 
Therefore, the family of functions $\mathscr{F}_1$ is equicontinuous in $\pi_t$. We can use similar arguments to prove equicontinuity of the other two families.
\end{proof}

\begin{lemma}\label{supcont}
{Let $w: \mathcal{B}^1_t \times \mathcal{B}^2_t \to \R$ be a function such that (i) the family of functions $\{w(\cdot,\gamma^2): \gamma^2 \in \mathcal{B}^2_t\}$ is equicontinuous in  the first argument; (ii) the family of functions $\{w(\gamma^1, \cdot): \gamma^1 \in \mathcal{B}^1_t\}$ is equicontinuous in  the second argument. Then $\sup_{\gamma^2}w(\gamma^1,\gamma^2)$ is a  continuous  function of  $\gamma^1$ and, similarly, $\inf_{\gamma^1}w(\gamma^1,\gamma^2)$ is a continuous function of  $\gamma^2$.  }
\end{lemma}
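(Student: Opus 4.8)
The plan is a standard ``uniform $\epsilon$-wiggle'' argument that exploits precisely the uniformity that equicontinuity provides over the parameter. First I would check that the two quantities in the statement are well defined and finite. For each fixed $\gamma^1 \in \mathcal{B}_t^1$, the map $\gamma^2 \mapsto w(\gamma^1,\gamma^2)$ is a member of the equicontinuous family in hypothesis (ii), hence continuous; since $\mathcal{B}_t^2$ is compact (a finite product of probability simplices over finite action sets), $w(\gamma^1,\cdot)$ is bounded and $\phi(\gamma^1):=\sup_{\gamma^2}w(\gamma^1,\gamma^2)$ is a finite real number. Symmetrically, using hypothesis (i) and compactness of $\mathcal{B}_t^1$, $\psi(\gamma^2):=\inf_{\gamma^1}w(\gamma^1,\gamma^2)$ is finite.

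Next I would fix $\gamma^1 \in \mathcal{B}_t^1$ and $\epsilon>0$ and invoke equicontinuity of the family $\{w(\cdot,\gamma^2):\gamma^2\in\mathcal{B}_t^2\}$ at the point $\gamma^1$: there is a $\delta>0$ such that $|w(\tilde\gamma^1,\gamma^2)-w(\gamma^1,\gamma^2)|<\epsilon$ for every $\gamma^2\in\mathcal{B}_t^2$ and every $\tilde\gamma^1\in\mathcal{B}_t^1$ with $\|\tilde\gamma^1-\gamma^1\|<\delta$, where the key point is that a single $\delta$ works simultaneously for all $\gamma^2$. For such $\tilde\gamma^1$ and any $\gamma^2$ we then have $w(\tilde\gamma^1,\gamma^2)\le w(\gamma^1,\gamma^2)+\epsilon\le \phi(\gamma^1)+\epsilon$; taking the supremum over $\gamma^2$ gives $\phi(\tilde\gamma^1)\le\phi(\gamma^1)+\epsilon$. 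Since the bound is symmetric in $\gamma^1$ and $\tilde\gamma^1$, exchanging their roles yields $\phi(\gamma^1)\le\phi(\tilde\gamma^1)+\epsilon$, so $|\phi(\tilde\gamma^1)-\phi(\gamma^1)|\le\epsilon$ whenever $\|\tilde\gamma^1-\gamma^1\|<\delta$. As $\gamma^1$ and $\epsilon$ were arbitrary, $\phi$ is continuous on $\mathcal{B}_t^1$.

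Finally, the claim for $\psi$ follows by the same argument with the two coordinates and $\sup/\inf$ interchanged (now using hypothesis (i) for the uniform wiggle); equivalently, apply the result just proved to $-w$, which inherits both equicontinuity properties, and note $\inf_{\gamma^1}w(\gamma^1,\gamma^2)=-\sup_{\gamma^1}\bigl(-w(\gamma^1,\gamma^2)\bigr)$. I do not expect a genuine obstacle in this argument; the only subtlety worth stating explicitly is the finiteness/well-definedness of the $\sup$ and $\inf$, which is exactly where compactness of $\mathcal{B}_t^1$ and $\mathcal{B}_t^2$ is used, together with the fact that each individual section of $w$ is continuous.
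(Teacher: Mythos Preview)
Your argument is correct and follows essentially the same route as the paper: invoke equicontinuity of $\{w(\cdot,\gamma^2)\}$ at $\gamma^1$ to get a single $\delta$ that works for all $\gamma^2$, then pass to the supremum on both sides of the resulting two-sided inequality. The paper's version differs only cosmetically in that it picks explicit maximizers $\bar\gamma^2,\bar\gamma'^2$ (using hypothesis~(ii) and compactness) and compares $w(\gamma^1,\bar\gamma^2)$ with $w(\gamma'^1,\bar\gamma'^2)$, whereas you work directly with $\sup$'s; your presentation is arguably cleaner since it avoids the need for attained maxima. One small slip: in your final paragraph you write ``now using hypothesis~(i) for the uniform wiggle'' for the $\psi$ part, but continuity of $\psi(\gamma^2)=\inf_{\gamma^1}w(\gamma^1,\gamma^2)$ in $\gamma^2$ requires the uniform bound over $\gamma^1$ when wiggling $\gamma^2$, which is hypothesis~(ii); correspondingly, your ``apply the result just proved to $-w$'' shortcut needs the coordinates swapped as well.
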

\begin{proof}
Let $\epsilon > 0$. For a given $\gamma^1$, there exists a $\delta > 0$ such that
\begin{align}
|w(\gamma^1,\gamma^2) - w(\gamma'^1,\gamma^2)| \leq \epsilon \quad \forall \gamma^2, \forall ||\gamma^1 - \gamma'^1|| \leq \delta. \label{eq:cont1}
\end{align}
Let $\bar{\gamma}^2$ be a prescription such that  
\begin{equation}\label{eq:sup1}
w(\gamma^1,\bar{\gamma}^2) = \sup_{\gamma^2}w(\gamma^1,\gamma^2).
\end{equation}
Note that the existence of $\bar{\gamma}^2$  is guaranteed because of continuity of $w(\gamma^1, \cdot)$ in the second argument and compactness of $\mathcal{B}^2_t$. Pick any $\gamma'^1$ satisfying $||\gamma^1 - \gamma'^1|| \leq \delta$.  
 Let $\bar{\gamma}'^2$ be a prescription such that  
 \begin{equation}\label{eq:sup2}
 w(\gamma'^1,\bar{\gamma}'^2) = \sup_{\gamma^2}w(\gamma'^1,\gamma^2).
 \end{equation}
   Using \eqref{eq:cont1}, we have
\begin{align}
(i)~~w(\gamma^1,\bar{\gamma}^2) - w(\gamma'^{1},\bar{\gamma}'^2) &\geq w(\gamma^1,\bar{\gamma}'^2) - w(\gamma'^{1},\bar{\gamma}'^2)\notag \\
&\geq  -\epsilon, \label{eq:eps1}\\
(ii)~~w(\gamma^1,\bar{\gamma}^2) - w(\gamma'^{1},\bar{\gamma}'^2) &\leq w(\gamma^1,\bar{\gamma}^2) - w(\gamma'^{1},\bar{\gamma}^2)\notag\\
&\leq \epsilon \label{eq:eps2}.
\end{align}
Equations \eqref{eq:sup1} - \eqref{eq:eps2} imply that 
  $\sup_{\gamma^2}w(\gamma^1,\gamma^2)$ is a continuous function of  $\gamma^1$. We can use a similar argument for showing continuity of  $\inf_{\gamma^1}w(\gamma^1,\gamma^2)$ in $\gamma^2$. 
\end{proof}

\section{Proof of Lemma \ref{equiexistlemma}}\label{equiexistlemmaproof}


We first use the definitions of extensions of $w^u_t,w^l_t,V^u_t,V^l_t$ in Appendix \ref{sec:cont} and Lemmas \ref{scalinglemma} and  \ref{equicontlemma1} to establish the following equicontinuity result.
\begin{lemma}\label{equicontlemma}
The families of functions
\begin{align}
\label{eqcontpi}\mathscr{F}_t^a&:=\{w_t^u(\cdot,\gamma_t^1,\gamma_t^2): \gamma_t^i \in \mathcal{B}_t^i,i=1,2\}\\
\label{eqcont1}\mathscr{F}_t^b&:=\{w_t^u(\pi_t,\cdot,\gamma_t^2): \gamma_t^2 \in \mathcal{B}_t^2, \pi_t \in \bar{\mathcal{S}}_t\}\\
\label{eqcont2}\mathscr{F}_t^c&:=\{w_t^u(\pi_t,\gamma_t^1,\cdot): \gamma_t^1 \in \mathcal{B}_t^1, \pi_t \in \bar{\mathcal{S}}_t\}
\end{align}
are all equicontinuous in their arguments for every $t \leq T$. A similar statement holds for $w_t^l$.
\end{lemma}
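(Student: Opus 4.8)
The plan is to prove Lemma \ref{equicontlemma} by backward induction on $t$, leveraging the recursive structure of $w_t^u$ together with the continuity/homogeneity machinery established in Appendix \ref{sec:cont}. The base case concerns the final stage: at $t = T$ we have $V_{T+1}^u \equiv 0$, so $w_T^u(\pi_T, \gamma_T^1, \gamma_T^2) = \tilde{c}_T(\pi_T, \gamma_T^1, \gamma_T^2)$, which by its definition in \eqref{tildec} (extended to $\bar{\mathcal{S}}_T$ as in Appendix \ref{sec:cont}) is trilinear in $(\pi_T, \gamma_T^1, \gamma_T^2)$ with bounded coefficients. A trilinear map with bounded coefficients on a compact domain is jointly Lipschitz, and fixing any two of the three arguments yields a family of maps in the remaining argument with a common Lipschitz constant; hence $\mathscr{F}_T^a, \mathscr{F}_T^b, \mathscr{F}_T^c$ are all equicontinuous.

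For the inductive step, I would assume that $w_{t+1}^u$ generates equicontinuous families in each of its three arguments. First I would use this to show that $V_{t+1}^u$ is a continuous function on $\bar{\mathcal{S}}_{t+1}$: since $V_{t+1}^u(\pi_{t+1}) = \inf_{\gamma^1_{t+1}} \sup_{\gamma^2_{t+1}} w_{t+1}^u(\pi_{t+1}, \gamma^1_{t+1}, \gamma^2_{t+1})$, I apply Lemma \ref{supcont} (with the equicontinuity of $\mathscr{F}_{t+1}^b$ and $\mathscr{F}_{t+1}^c$) to conclude the inner $\sup$ is continuous in $\gamma^1_{t+1}$, and then — combining the equicontinuity of $\mathscr{F}_{t+1}^a$ in $\pi_{t+1}$ with a standard argument that an infimum over a compact parameter set of a jointly continuous (indeed equicontinuous-in-$\pi$) function is continuous in $\pi$ — I obtain continuity of $V_{t+1}^u$ on $\bar{\mathcal{S}}_{t+1}$. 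By Lemma \ref{scalinglemma}, $V_{t+1}^u$ also satisfies the homogeneity property $V_{t+1}^u(\alpha \pi) = \alpha V_{t+1}^u(\pi)$, so it fits the hypotheses of Lemma \ref{equicontlemma1} with $V = V_{t+1}^u$.

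Now I write $w_t^u(\pi_t, \gamma_t^1, \gamma_t^2) = \tilde{c}_t(\pi_t, \gamma_t^1, \gamma_t^2) + V'(\pi_t, \gamma_t^1, \gamma_t^2)$, where $V'$ is the operator defined in Lemma \ref{equicontlemma1} built from $V = V_{t+1}^u$. The term $\tilde{c}_t$ is trilinear with bounded coefficients, so it contributes equicontinuous families in each argument exactly as in the base case. By Lemma \ref{equicontlemma1}, the families $\mathscr{F}_1$, $\mathscr{F}_2$, $\mathscr{F}_3$ associated with $V'$ are equicontinuous in $\pi_t$, $\gamma_t^1$, and $\gamma_t^2$ respectively. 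Since a sum of two families, each equicontinuous in a given argument, is equicontinuous in that argument (take the smaller of the two $\delta$'s for a given $\epsilon$), it follows that $\mathscr{F}_t^a$, $\mathscr{F}_t^b$, $\mathscr{F}_t^c$ are equicontinuous, completing the induction. The argument for $w_t^l$ (with $\inf$ and $\sup$ interchanged, and $V_{t+1}^l$ in place of $V_{t+1}^u$) is identical.

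The main obstacle I anticipate is the step showing continuity of $V_{t+1}^u$ on $\bar{\mathcal{S}}_{t+1}$ from the inductive hypothesis — in particular, carefully handling the interaction of the two optimizations (the outer $\inf$ over $\gamma^1$ and inner $\sup$ over $\gamma^2$) with the $\pi$-dependence. Lemma \ref{supcont} cleanly handles continuity of $\sup_{\gamma^2} w_{t+1}^u(\pi, \gamma^1, \gamma^2)$ in $\gamma^1$ for fixed $\pi$, but one must additionally argue uniformity over $\pi$ — i.e., that the family $\{\pi \mapsto \sup_{\gamma^2} w_{t+1}^u(\pi, \gamma^1, \gamma^2) : \gamma^1 \in \mathcal{B}^1_{t+1}\}$ is equicontinuous in $\pi$ — so that taking $\inf_{\gamma^1}$ preserves continuity in $\pi$. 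This follows from the equicontinuity of $\mathscr{F}_{t+1}^a$ via the same $\sup$/$\inf$-preserves-equicontinuity reasoning as in the proof of Lemma \ref{supcont}, but it is the place where the bookkeeping must be done most carefully; everything else is routine composition and summation of (equi)continuous families on compact sets.
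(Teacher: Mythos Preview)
Your proposal is correct and follows essentially the same approach as the paper: backward induction combining the trilinearity of $\tilde{c}_t$ with Lemma \ref{equicontlemma1} (applied to $V_{t+1}^u$, which is continuous by induction and homogeneous by Lemma \ref{scalinglemma}) to obtain equicontinuity of the $w_t^u$ families, and then passing equicontinuity through the $\inf\sup$ to get continuity of $V_t^u$. The only cosmetic difference is that the paper takes ``$V_{t+1}^u$ is continuous'' as the induction hypothesis rather than ``the $w_{t+1}^u$ families are equicontinuous,'' which lets it skip your intermediate derivation of continuity of $V_{t+1}^u$; the paper also handles the continuity-of-$V_t^u$ step directly from the equicontinuity of $\mathscr{F}_t^a$ via a single $\epsilon$-$\delta$ argument (exactly the ``$\sup/\inf$ preserves equicontinuity'' reasoning you flagged as the main obstacle), rather than invoking Lemma \ref{supcont} for the inner optimization.
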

\begin{proof}
We use a backward induction argument for the proof.  For induction, assume that $V_{t+1}^u$ is a continuous function for some $t \leq T$. This is clearly true for $t=T$. Using the continuity of $V^u_{t+1}$ we will establish the statement of the lemma for time $t$ and also prove the continuity of $V^u_t$. This establishes the lemma for all $t \leq T$.

\emph{Equicontinuity of $w^u_t$:} Since $\tilde{c}_t(\pi_t,\gamma_t^1,\gamma_t^2)$ is linear in $\pi_t$ with uniformly bounded coefficients for any given $\gamma_t^{1:2}$ (see (\ref{tildec})), it is equicontinuous in the argument $\pi_t$. In Lemma \ref{scalinglemma}, we showed that the value functions $V_t^u$ satisfy the condition $V_t^u(\alpha\pi) = \alpha V_t^u(\pi)$ for every $0 \leq \alpha \leq 1$, $\pi \in \mathcal{S}_{t}$. Further, due to our induction hypothesis, $V_{t+1}^u$ is continuous. Thus, using Lemma \ref{equicontlemma1}, the second term of $w_t^u$,
$$
 \sum_{z_{t+1}}P_t^m( \pi_t,\gamma_{t}^{1:2};z_{t+1})V_{t+1}^u(F_t(\pi_{t},\gamma_t^{1:2},{z}_{t+1})),
$$
is also equicontinuous in $\pi_t$. Hence, the family $\mathscr{F}_t^a$ is equicontinuous in $\pi_t$.

\emph{Continuity of $V^u_t$:}  Due to the equicontinuity of the family $\mathscr{F}_t^a$, we have the following. For any given $\epsilon > 0$ and $\pi_t \in \bar{\mathcal{S}}_t$, there exists a $\delta > 0$ such that
\begin{equation}
|w^u_t(\pi_t,\gamma_t^1,\gamma_t^2)-w^u_t(\pi'_t,\gamma_t^1,\gamma_t^2)| < \epsilon
\end{equation}
for every $\gamma_t^1, \gamma_t^2$ and $\pi_t'$ satisfying $||\pi_t - \pi'_t|| < \delta$. Therefore,
\begin{align}
&w^u_t(\pi_t,\gamma_t^1,\gamma_t^2) < w^u_t(\pi'_t,\gamma_t^1,\gamma_t^2) + \epsilon \; \forall \gamma_t^1,\gamma_t^2\\
\implies &\sup_{\gamma_t^2}w^u_t(\pi_t,\gamma_t^1,\gamma_t^2) \leq \sup_{\gamma_t^2}w^u_t(\pi'_t,\gamma_t^1,\gamma_t^2) + \epsilon \;\forall \gamma_t^1\\
\implies &\nonumber\inf_{\gamma_t^1}\sup_{\gamma_t^2}w^u_t(\pi_t,\gamma_t^1,\gamma_t^2) \leq \inf_{\gamma_t^1}\sup_{\gamma_t^2}w^u_t(\pi'_t,\gamma_t^1,\gamma_t^2) + \epsilon\\
\implies &V^u_t(\pi_t) \leq V^u_t(\pi'_t) + \epsilon,
\end{align}
for every $\pi_t'$ that satisfies $||\pi_t - \pi'_t|| < \delta$. Similarly, $V^u_t(\pi_t) \geq V^u_t(\pi'_t) -\epsilon$ for every $\pi_t'$ that satisfies $||\pi_t - \pi'_t|| < \delta$. Therefore, $V^u_t(\pi_t)$ is continuous at $\pi_t$. 

Hence, by induction, we can say that the family $\mathscr{F}_t^a$ is equicontinuous in $\pi_t$ for every $t \leq T$. We can use similar arguments to prove the equicontinuity of the other families. 
\end{proof}

The continuity of $w^u_t$ established above implies that $\sup_{\gamma_t^2}w_t^u(\pi_t,\gamma_t^1,\gamma_t^2)$ is achieved for every $\pi_t, \gamma^1_t.$ Further,  Lemma \ref{equicontlemma} implies that  $w_t^u$ and $w_t^l$ satisfy the equicontinuity conditions in Lemma \ref{supcont} for any given realization of belief $\pi_t$. Therefore, we can use Lemma \ref{supcont} to argue that $\sup_{\gamma_t^2}w_t^u(\pi_t,\gamma_t^1,\gamma_t^2)$ is continuous in $\gamma_t^1$. And since $\gamma_t^1$ lies in the compact space $\mathcal{B}_t^1$, a minmaximizer exists for the function $w_t^u$. Further, we can use the measurable selection condition (see Condition 3.3.2 in \cite{hernandez2012discrete}) to prove the existence of measurable mapping $\Xi_t^1(\pi_t)$ as defined in Lemma \ref{equiexistlemma}. A similar argument can be made to establish the existence of a maxminimizer and a measurable mapping $\Xi_t^2(\pi_t)$ as defined in Lemma \ref{equiexistlemma}. This concludes the proof of Lemma \ref{equiexistlemma}.

\section{Proof of Theorem \ref{dp}}
\label{dpproof}
 Let us first define a distribution $\tilde{\Pi}_t$ over the space $\mathcal{X}_t \times \P_t^1 \times \P_t^2$ in the following manner. The distribution $\tilde{\Pi}_t$, given $C_t,\Gamma_{1:t-1}^{1:2}$, is recursively obtained using the following relation
\begin{align}
\tilde{\Pi}_1(x_1,p_1^1,p_1^2) &= \Py[X_1 = x_1,P_1^1 = p_1^1,P_1^2 = p_1^2 \mid C_1 ] ~ \forall\; x_1,p_1^1,p_1^2,\\
\tilde{\Pi}_{\tau+1} &= F_\tau(\tilde{\Pi}_\tau, \Gamma_\tau^{1},\Gamma_\tau^2,\vct{Z}_{\tau+1}), ~~ \tau \geq 1,
\end{align}
where $F_\tau$ is as defined in Definition \ref{fdef} in Appendix \ref{infstateproof}. We refer to this distribution as the strategy-independent common information belief (SI-CIB).

Let $\tilde{\chi}^1 \in \tilde{\mathcal{H}}^1$ be any strategy for virtual player 1 in game $\gm{G}_e$. Consider the problem of obtaining virtual player 2's best response to the strategy $\tilde{\chi}^1$ with respect to the cost $\mathcal{J}(\tilde{\chi}^1 ,\tilde{\chi}^2)$ defined in \eqref{eq:virtualJ}. This problem can be formulated as a Markov decision process (MDP) with common information and prescription history $C_t,\Gamma_{1:t-1}^{1:2}$ as the state. The control action at time $t$ in this MDP is $\Gamma_t^2$, which is selected based on the information $C_t,\Gamma_{1:t-1}^{1:2}$ using strategy $\tilde{\chi}^2 \in \mathcal{H}^2$.
The evolution of the state $C_t,\Gamma_{1:t-1}^{1:2}$ of this MDP is as follows
\begin{align}
\{C_{t+1},\Gamma_{1:t}^{1:2}\} = \{C_t,Z_{t+1},\Gamma_{1:t-1}^{1:2}, \tilde{\chi}^1_t(C_t,\Gamma_{1:t-1}^{1:2}),\Gamma_t^2\},
\end{align}
where 
\begin{align}\label{zstatevol}
\Py^{(\tilde{\vct{\chi}}^1,\tilde{\vct{\chi}}^2)}[Z_{t+1} = z_{t+1} \mid C_t,\Gamma_{1:t-1}^{1:2}, \Gamma_t^2] = P_t^m[\tilde{\Pi}_t,\Gamma_t^1,\Gamma_t^2;z_{t+1}],
\end{align}
almost surely. Here, $\Gamma_t^1 = \tilde{\chi}^1_t(C_t,\Gamma_{1:t-1}^{1:2})$ and the transformation $P_t^m$ is as defined in Definition \ref{fdef} in Appendix \ref{infstateproof}.  Notice that due to Lemma \ref{infstate}, the common information belief $\Pi_t$ associated with any strategy profile ${(\tilde{\vct{\chi}}^1,\tilde{\vct{\chi}}^2)}$ is equal to $\tilde{\Pi_t}$ almost surely. This results in the state evolution equation in \eqref{zstatevol}.
The objective of this MDP is to maximize, for a given $\tilde{\chi}^1$, the following cost
\begin{align}
\E^{(\tilde{\vct{\chi}}^1,\tilde{\vct{\chi}}^2)}\left[\sum_{t=1}^T \tilde{c}_t(\tilde{\Pi}_t,\Gamma_t^1,\Gamma_t^2)\right],
\end{align}
where $\tilde{c}_t$ is as defined in equation (\ref{tildec}). Due to Lemma \ref{infstate}, the total expected cost defined above is equal to the cost ${\mathcal{J}}(\tilde{\vct{\chi}}^1,\tilde{\vct{\chi}}^2)$ defined in \eqref{eq:virtualJ}.

The MDP described above can be solved using the following dynamic program. For every realization of virtual players' information $c_{T+1},\gamma_{1:T}^{1:2}$, define $$V^{\tilde{\chi}^1}_{T+1}(c_{T+1},\gamma_{1:T}^{1:2}) := 0.$$
In a backward inductive manner, for each time $t \leq T$ and each realization $c_{t},\gamma_{1:t-1}^{1:2}$, define
\begin{align}
V^{\tilde{\chi}^1}_{t}(c_{t},\gamma_{1:t-1}^{1:2}) &:= \sup_{\gamma_t^2}[\tilde{c}_t(\tilde{\pi}_t,\gamma_t^1,\gamma_t^2)+ \E[V^{\tilde{\chi}^1}_{t+1}(c_{t},Z_{t+1},\gamma_{1:t}^{1:2}) \mid c_t,\gamma_{1:t}^{1:2}]]\label{thm3e1},
\end{align}
where $\gamma_t^1 = \tilde{\chi}^1_t(c_{t},\gamma_{1:t-1}^{1:2})$ and {$\tilde{\pi}_t$ is the SI-CIB associated with the information $c_{t},\gamma_{1:t-1}^{1:2}$}. Note that the measurable selection condition (see condition 3.3.2 in \cite{hernandez2012discrete}) holds for the dynamic program described above. Thus, the value functions $V^{\tilde{\chi}^1}_{t}(\cdot)$ are measurable and there exists a measurable best-response strategy for player 2 which is a solution to the dynamic program described above. Therefore, we have
\begin{align}
\label{dpcommon}\sup_{\tilde{\chi}^2}\mathcal{J}(\tilde{\chi}^1,\tilde{\chi}^2) = \E V^{\tilde{\chi}^1}_{1}(C_{1}).
\end{align}
\begin{claim}\label{upperclaim}
For any strategy $\tilde{\chi}^1 \in \tilde{\mathcal{H}}^1$ and for any realization of virtual players' information $c_{t},\gamma_{1:t-1}^{1:2}$, we have
\begin{align}
\label{uppervalineq}V^{\tilde{\chi}^1}_{t}(c_{t},\gamma_{1:t-1}^{1:2}) \geq V_t^u(\tilde{\pi}_t),
\end{align}
where $V_t^u$ is as defined in (\ref{minequa}) and $\tilde{\pi}_t$ is the SI-CIB belief associated with the instance $c_{t},\gamma_{1:t-1}^{1:2}$. As a consequence, we have
\begin{align}\label{upineq}
\sup_{\tilde{\chi}^2}\mathcal{J}(\tilde{\chi}^1,\tilde{\chi}^2) \geq \E V^{u}_{1}(\Pi_{1}).
\end{align}
\end{claim}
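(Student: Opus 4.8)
The plan is to establish the pointwise bound \eqref{uppervalineq} by backward induction on $t$, matching the best-response dynamic program \eqref{thm3e1} against the min-max dynamic program \eqref{minequa} term by term, and then to integrate over $C_1$ to obtain \eqref{upineq}.

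First I would line up the transition mechanisms of the two recursions. Fix $\tilde{\chi}^1 \in \tilde{\mathcal{H}}^1$, a realization $(c_t,\gamma_{1:t-1}^{1:2})$ with associated SI-CIB $\tilde{\pi}_t$, and set $\gamma_t^1 = \tilde{\chi}^1_t(c_t,\gamma_{1:t-1}^{1:2})$. By \eqref{zstatevol}, the conditional law of $Z_{t+1}$ given $(c_t,\gamma_{1:t-1}^{1:2},\gamma_t^2)$ (equivalently, given $(c_t,\gamma_{1:t}^{1:2})$, since $\gamma_t^1$ is determined by $(c_t,\gamma_{1:t-1}^{1:2})$) is $P_t^m(\tilde{\pi}_t,\gamma_t^1,\gamma_t^2;\cdot)$; and by the recursive definition of the SI-CIB (via Definition \ref{fdef}), the SI-CIB attached to the successor realization $(c_t,z_{t+1},\gamma_{1:t}^{1:2})$ is exactly $F_t(\tilde{\pi}_t,\gamma_t^{1:2},z_{t+1})$. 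Hence the conditional expectation inside \eqref{thm3e1} equals $\sum_{z_{t+1}} P_t^m(\tilde{\pi}_t,\gamma_t^{1:2};z_{t+1})\, V^{\tilde{\chi}^1}_{t+1}(c_t,z_{t+1},\gamma_{1:t}^{1:2})$, which carries the same weights $P_t^m(\tilde{\pi}_t,\gamma_t^{1:2};\cdot)$ as the expectation appearing in the definition of $w_t^u$ in \eqref{eq:w_ext}. This identification is the only nontrivial ingredient; everything else is bookkeeping.

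The induction itself is routine. The base case $t=T+1$ holds with equality since both value functions vanish. Assuming $V^{\tilde{\chi}^1}_{t+1}(c_{t+1},\gamma_{1:t}^{1:2}) \geq V^u_{t+1}(\tilde{\pi}_{t+1})$ for every successor realization (with $\tilde{\pi}_{t+1}$ its SI-CIB), I would write
\begin{align*}
V^{\tilde{\chi}^1}_{t}(c_{t},\gamma_{1:t-1}^{1:2}) &= \sup_{\gamma_t^2}\Big[\tilde{c}_t(\tilde{\pi}_t,\gamma_t^1,\gamma_t^2) + \sum_{z_{t+1}} P_t^m(\tilde{\pi}_t,\gamma_t^{1:2};z_{t+1})\, V^{\tilde{\chi}^1}_{t+1}(c_t,z_{t+1},\gamma_{1:t}^{1:2})\Big]\\
&\geq \sup_{\gamma_t^2}\Big[\tilde{c}_t(\tilde{\pi}_t,\gamma_t^1,\gamma_t^2) + \sum_{z_{t+1}} P_t^m(\tilde{\pi}_t,\gamma_t^{1:2};z_{t+1})\, V^u_{t+1}\big(F_t(\tilde{\pi}_t,\gamma_t^{1:2},z_{t+1})\big)\Big]\\
&= \sup_{\gamma_t^2} w_t^u(\tilde{\pi}_t,\gamma_t^1,\gamma_t^2) \geq \inf_{\gamma_t^1}\sup_{\gamma_t^2} w_t^u(\tilde{\pi}_t,\gamma_t^1,\gamma_t^2) = V_t^u(\tilde{\pi}_t),
\end{align*}
where the first inequality uses the induction hypothesis together with $P_t^m \geq 0$ and the fact that $F_t(\tilde{\pi}_t,\gamma_t^{1:2},z_{t+1})$ is the SI-CIB of the successor realization, and the last inequality holds because $\gamma_t^1 = \tilde{\chi}^1_t(c_t,\gamma_{1:t-1}^{1:2})$ is one feasible choice for the outer infimum. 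This yields \eqref{uppervalineq} at time $t$ and closes the induction.

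Finally, to deduce \eqref{upineq}, I would invoke the identity $\sup_{\tilde{\chi}^2}\mathcal{J}(\tilde{\chi}^1,\tilde{\chi}^2) = \E\, V^{\tilde{\chi}^1}_{1}(C_{1})$ established just above the claim, apply \eqref{uppervalineq} at $t=1$ (where the prescription history is empty and the SI-CIB $\tilde{\pi}_1$ is just the prior $\Py(X_1,P_1^{1:2}\mid c_1)$, which coincides with $\Pi_1$ for every strategy profile), and take expectations. I do not expect a deep obstacle here, since the argument is a standard dynamic-program comparison; the one step genuinely demanding care is verifying that the $Z_{t+1}$-kernel and the belief update appearing in the best-response MDP are literally $P_t^m$ and $F_t$ of Definition \ref{fdef} — this is what makes the term-by-term comparison exact — and it follows from \eqref{zstatevol} and Lemma \ref{infstate}.
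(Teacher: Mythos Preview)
Your proposal is correct and follows essentially the same backward-induction argument as the paper: start from the trivial base case at $T+1$, use the induction hypothesis to replace $V^{\tilde{\chi}^1}_{t+1}$ by $V^u_{t+1}\circ F_t$ inside the $\sup_{\gamma_t^2}$, and then bound by the $\inf_{\gamma_t^1}\sup_{\gamma_t^2}$ defining $V_t^u$. The only difference is that you spell out more carefully the identification of the $Z_{t+1}$-kernel with $P_t^m$ and of the successor SI-CIB with $F_t$, which the paper leaves implicit in its conditional-expectation notation.
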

\begin{proof}
The proof is by backward induction. Clearly, the claim is true at time $t = T+1$. Assume that the claim is true for all times greater than $t$. Then we have
\begin{align*}
V^{\tilde{\chi}^1}_{t}(c_{t},\gamma_{1:t-1}^{1:2}) 
&= \sup_{\gamma_t^2}[\tilde{c}_t(\tilde{\pi}_t,\gamma_t^1,\gamma_t^2) + \E[V^{\tilde{\chi}^1}_{t+1}(c_{t},Z_{t+1},\gamma_{1:t}^{1:2}) \mid c_t,\gamma_{1:t}^{1:2}]]\\
&\geq \sup_{\gamma_t^2}[\tilde{c}_t(\tilde{\pi}_t,\gamma_t^1,\gamma_t^2)
 + \E[V^{u}_{t+1}(F_t(\tilde{\pi}_t,\gamma_t^{1:2},Z_{t+1})) \mid c_t,\gamma_{1:t}^{1:2}]]\\
&\geq V_t^u(\tilde{\pi}_t).
\end{align*}
The first equality follows from the definition in (\ref{thm3e1}) and the inequality after that follows from the induction hypothesis. The last inequality is a consequence of the definition of the value function $V_t^u$. This completes the induction argument.
Further, using Claim \ref{upperclaim} and the result in \eqref{dpcommon}, we have
\begin{align*}
\sup_{\tilde{\chi}^2}\mathcal{J}(\tilde{\chi}^1,\tilde{\chi}^2) = \E V^{\tilde{\chi}^1}_{1}(C_{1}) \geq \E V^{u}_{1}(\tilde{\Pi}_{1}) = \E V^{u}_{1}(\Pi_{1}).
\end{align*}
\end{proof}

We can therefore say that
\begin{align}\label{suone}
S^u(\gm{G}_e) &= \inf_{\tilde{\chi}^1}\sup_{\tilde{\chi}^2}\mathcal{J}(\tilde{\chi}^1,\tilde{\chi}^2) \geq \inf_{\tilde{\chi}^1}\E V^{u}_{1}(\Pi_{1}) = \E V^{u}_{1}(\Pi_{1}).
\end{align}
Further, for the strategy $\tilde{\chi}^{1*}$ defined in Definition \ref{stratdef}, the inequalities (\ref{uppervalineq}) and (\ref{upineq}) hold with equality. We can prove this using an inductive argument similar to the one used to prove Claim \ref{upperclaim}. Therefore, we have
\begin{align}\label{sutwo}
S^u(\gm{G}_e) &= \inf_{\tilde{\chi}^1}\sup_{\tilde{\chi}^2}\mathcal{J}(\tilde{\chi}^1,\tilde{\chi}^2) \leq \sup_{\tilde{\chi}^2}\mathcal{J}(\tilde{\chi}^{1*},\tilde{\chi}^2)  = \E V^{\tilde{\chi}^{1*}}_{1}(C_{1})  = \E V^{u}_{1}(\Pi_{1}).
\end{align}
Combining \eqref{suone} and \eqref{sutwo}, we have
\begin{align*}
S^u(\gm{G}_e) = \E V^{u}_{1}(\Pi_{1}).
\end{align*}
Thus, the inequality in \eqref{sutwo} holds with equality which leads us to the result that the strategy $\tilde{\chi}^{1*}$ is a min-max strategy in game $\gm{G}_e$.
A similar argument can be used to show that
\begin{align*}
S^l(\gm{G}_e) = \E V^{l}_{1}(\Pi_{1}),
\end{align*}
and that the strategy $\tilde{\chi}^{2*}$ defined in Definition \ref{stratdef} is a max-min strategy in game $\gm{G}_e$.

\section{Proof of Lemma \ref{structlemma}}\label{structlemmaproof}


We will prove the lemma using the following claim.
\begin{claim}\label{claim:one}
Consider any arbitrary strategy $g^1$ for player 1. Then, there exists a strategy $\bar{g}^1$ for player 1 such that, for each $t$, $\bar{g}^1_t$ is a function  of $X_t$ and $I^2_t$ and 
\[
J(\bar{g}^1,g^2) = J(g^1, g^2), ~~\forall g^2 \in \mathcal{G}^2.
\]
\end{claim}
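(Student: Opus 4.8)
The plan is to build $\bar{g}^1$ from $g^1$ by averaging out the part of player~$1$'s information $I^1_t$ that is not already carried by the pair $(X_t,I^2_t)$, and then to show that the averaging weights are the same no matter what strategy $g^2$ player~$2$ uses. Recall that $I^1_t=\{X_{1:t},Y^2_{1:t},U^1_{1:t-1},U^2_{1:t-1}\}$ while $(X_t,I^2_t)$ amounts to $X_t,Y^2_{1:t},U^2_{1:t-1}$, so the ``extra'' coordinates of $I^1_t$ are $X_{1:t-1},U^1_{1:t-1}$. Fix $g^1$ once and for all. For every $t$ and every realization $(x_t,i^2_t)$ that occurs with positive probability under $(g^1,g^2)$ for some $g^2$, let $\mu_t(\cdot\mid x_t,i^2_t)$ be the conditional distribution of $I^1_t$ given $(X_t,I^2_t)=(x_t,i^2_t)$, and define
\[
\bar{g}^1_t(x_t,i^2_t)\ :=\ \sum_{i^1_t}\mu_t(i^1_t\mid x_t,i^2_t)\,g^1_t(i^1_t)\ \in\ \Delta\mathcal{U}_t^1 ,
\]
i.e.\ the corresponding mixture of the behavioral actions prescribed by $g^1$; on the remaining realizations set $\bar{g}^1_t$ arbitrarily. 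By construction $\bar{g}^1_t$ is a function of $(x_t,i^2_t)$ only, so it defines a strategy $\bar{g}^1\in\mathcal{G}^1$ of the required form.

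The key step is that $\mu_t$ does not depend on $g^2$. Expand the joint pmf of $(X_{1:t},Y^2_{1:t},U^1_{1:t-1},U^2_{1:t-1})$ under $(g^1,g^2)$ as a product of the primitive kernels (the law of $X_1$, the kernel of $Y^2_{s+1}$ given $(X_{s+1},U_s^{1:2})$, and the kernel of $X_{s+1}$ given $(X_s,U_s^{1:2})$), the player-$1$ factors giving the probability that $g^1_s$ assigns to $U^1_s$ at $I^1_s$, and the player-$2$ factors giving the probability that $g^2_s$ assigns to $U^2_s$ at $I^2_s$. Each player-$2$ factor is determined by $(Y^2_{1:s},U^2_{1:s})$ alone, hence is a constant once we condition on $(X_t,I^2_t)=(x_t,i^2_t)$, and therefore cancels in the ratio that produces the conditional law of the free coordinates $(X_{1:t-1},U^1_{1:t-1})$. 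Thus $\mu_t$ is a function of $g^1$ and the primitive distributions only; in particular $\Py^{g^1,g^2}(I^1_t=\cdot\mid X_t,I^2_t)=\mu_t$ for every $g^2$ under which the conditioning event is reachable.

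Next, fix an arbitrary $g^2$ and show by induction on $t$ that the joint law of $(X_t,I^2_t,U^1_t,U^2_t)$ is identical under $(g^1,g^2)$ and $(\bar{g}^1,g^2)$; the base case $t=1$ is immediate since $(X_1,I^2_1)$ is primitive and $I^1_1=(X_1,I^2_1)$. For the step: $(X_{t+1},I^2_{t+1})$ is a fixed function of $(X_t,I^2_t,U^1_t,U^2_t)$ and fresh independent primitive noise, so its law is preserved; given $(X_{t+1},I^2_{t+1})$ the conditional law of $U^2_{t+1}$ is $g^2_{t+1}(I^2_{t+1})$ under both profiles; and given $(X_{t+1},I^2_{t+1})$ the conditional law of $U^1_{t+1}$ is $\bar{g}^1_{t+1}(X_{t+1},I^2_{t+1})$ under $(\bar{g}^1,g^2)$ by definition, while under $(g^1,g^2)$ it equals $\sum_{i^1_{t+1}}\mu_{t+1}(i^1_{t+1}\mid\cdot)\,g^1_{t+1}(i^1_{t+1})$ by the key step, and these coincide; finally $U^1_{t+1}$ and $U^2_{t+1}$ are conditionally independent given $(X_{t+1},I^2_{t+1})$ under both profiles since they are produced from the independent randomization variables $V^1_{t+1},V^2_{t+1}$. (Incidentally this also shows the two profiles induce the same reachable set of $(X_{t+1},I^2_{t+1})$, so the arbitrary choices on unreachable realizations are harmless.) Since $c_t$ depends only on $(X_t,U^1_t,U^2_t)$, summing the matched per-stage expectations gives $J(\bar{g}^1,g^2)=J(g^1,g^2)$ for every $g^2\in\mathcal{G}^2$.

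I expect the main obstacle to be the key step — verifying that the marginalization weights $\mu_t$ are independent of $g^2$. Beyond the factor-cancellation computation, the delicate point is keeping the single construction of $\bar{g}^1$ coherent across all of player~$2$'s strategies at once, even though the set of reachable $(x_t,i^2_t)$ depends on $g^2$; this is resolved inside the induction, which forces the reachable sets under $(g^1,g^2)$ and $(\bar{g}^1,g^2)$ to agree for each fixed $g^2$.
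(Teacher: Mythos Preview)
Your proposal is correct and follows essentially the same route as the paper: construct $\bar{g}^1_t$ as the conditional law of $U^1_t$ given $(X_t,I^2_t)$ under $g^1$, show this conditional is independent of $g^2$ (the paper phrases this as a POMDP belief--independence statement where you argue directly by cancellation of the $g^2$ factors), and then carry out the same forward induction on the joint law of $(X_t,I^2_t,U^1_t,U^2_t)$. The only cosmetic difference is that the paper first proves strategy-independence of the belief on $(X_{1:t},U^1_{1:t})$ given $I^2_t$ alone and then marginalizes, whereas you condition on $(X_t,I^2_t)$ from the outset.
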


Suppose that the above claim is true. Let $h^1$ be an equilibrium strategy for player 1 (we know one exists due to Proposition \ref{kuhnremark}). Since we are dealing with a zero-sum game, we know that (a)  $h^1$ achieves the infimum in $\inf_{g^1 \in \mathcal{G}^1}\sup_{g^2 \in \mathcal{G}^2}J(g^1,g^2)$, (b) any strategy achieving the infimum in the above inf-sup problem will be an equilibrium strategy for player 1 \cite{osborne1994course}. 

Due to Claim \ref{claim:one}, there exists a strategy $\bar{h}^1$ for player 1 such that, for each $t$, $\bar{h}^1_t$ is a function only of $X_t$ and $I^2_t$ and 
\[
J(\bar{h}^1,g^2) = J(h^1, g^2),
\]
\emph{for every strategy $g^2 \in \mathcal{G}^2$.} Therefore, we have that 
\begin{align*}
\sup_{g^2 \in \mathcal{G}^2}J(\bar{h}^1,g^2) = \sup_{g^2 \in \mathcal{G}^2}J(h^1,g^2) = \inf_{g^1 \in \mathcal{G}^1}\sup_{g^2 \in \mathcal{G}^2}J(g^1,g^2).
\end{align*} 
Thus, $\bar{h}^1$ is an equilibrium strategy for player 1 that uses only the current state and player 2's information.

\textbf{\emph{Proof of Claim \ref{claim:one}:}} We now proceed to prove Claim \ref{claim:one}. 
Consider any arbitrary strategy $g^1$ for player 1.  Let $\iota_t^2 = \{u_{1:t-1}^2,y_{1:t}^2\}$ be a realization of player 2's information $I_t^2$. Define the distribution $\Psi_t(\iota_t^2)$ over the space $\prod_{\tau = 1}^t(\X_{\tau} \times \U_{\tau}^1)$ as follows:
\begin{align*}
\Psi_t(\iota_t^2; x_{1:t},u_{1:t}^1) := \Py^{g^1,h^2}[X_{1:t},U^1_{1:t} = (x_{1:t},u_{1:t}^1) \mid I_t^2 = \iota_t^2],
\end{align*}
if $\iota_t^2$ is \emph{feasible}, that is $\Py^{g^1,h^2}[I_t^2 = \iota_t^2] > 0$, under the \emph{open-loop} strategy $h^2 = (u^2_{1:t-1})$ for player 2. Otherwise, define $\Psi_t(\iota_t^2; x_{1:t},u_{1:t}^1)$ to be the uniform
 distribution over the space $\prod_{\tau = 1}^t(\X_{\tau} \times \U_{\tau}^1)$.

\begin{lemma}\label{claimJ1}
Let $g^1$ be player 1's strategy and let $g^2$ be an arbitrary strategy for player 2. Then for any realization $x_{1:t},u_{1:t}^1$ of the variables $X_{1:t},U^1_{1:t}$, we have
\begin{align*}
\Py^{g^1,g^2}[X_{1:t},U^1_{1:t} = (x_{1:t},u_{1:t}^1) \mid I_t^2] = \Psi_t(I_t^2; x_{1:t},u_{1:t}^1),
\end{align*}
almost surely.
\end{lemma}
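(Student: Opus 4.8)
The plan is to prove this by induction on $t$, exploiting the fact that player 2's action history is a deterministic function of its own information $I_t^2$ (so that once we condition on $I_t^2$, the only remaining randomness in the evolution of $(X_{1:t}, U^1_{1:t})$ comes from player 1's strategy and the primitive variables, not from player 2's strategy $g^2$). The key structural observations are: (i) $U^2_{1:t-1}$ is part of $I_t^2$, hence conditioning on $I_t^2$ pins down player 2's past actions regardless of whether $g^2$ or the open-loop strategy $h^2 = (u^2_{1:t-1})$ generated them; (ii) given the realized values of $U^2_{1:t-1}$, the conditional law of $(X_{1:t}, U^1_{1:t}, Y^2_{1:t})$ depends only on $g^1$ and the primitive random variables, because of the system equations \eqref{statevol}, \eqref{eq:sec5eq1} and the fact that $U^1_\tau = \kappa(g^1_\tau(X_{1:\tau}, Y^2_{1:\tau}, U^1_{1:\tau-1}, U^2_{1:\tau-1}), V^1_\tau)$.

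First I would set up the base case $t=1$: $I_1^2 = \{Y_1^2\}$, and both $\Py^{g^1,g^2}[X_1, U^1_1 = (x_1,u_1^1) \mid Y_1^2]$ and $\Psi_1(Y_1^2; x_1, u_1^1)$ are determined by the law of $X_1$, the observation channel $h_1^2$, and $g^1_1$ — none of which involve $g^2$ — so the two agree on every feasible $y_1^2$, and on infeasible $y_1^2$ (which occur with probability zero under any $g^2$, since feasibility of $Y_1^2$ under $h^2$ does not depend on player 2's strategy) the claimed almost-sure equality holds vacuously. For the inductive step, assume the statement holds at time $t$. I would write, for a feasible realization $\iota_{t+1}^2 = \{u_{1:t}^2, y_{1:t+1}^2\}$,
\begin{align*}
\Py^{g^1,g^2}&[X_{1:t+1}, U^1_{1:t+1} = (x_{1:t+1}, u^1_{1:t+1}) \mid I_{t+1}^2 = \iota_{t+1}^2] \\
&= \frac{\Py^{g^1,g^2}[X_{1:t+1}, U^1_{1:t+1} = (x_{1:t+1},u^1_{1:t+1}), U^2_{1:t} = u^2_{1:t}, Y^2_{1:t+1} = y^2_{1:t+1}]}{\Py^{g^1,g^2}[U^2_{1:t} = u^2_{1:t}, Y^2_{1:t+1} = y^2_{1:t+1}]},
\end{align*}
factor the numerator through time $t$ using the chain rule, and substitute the inductive hypothesis for the time-$t$ conditional. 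The extra factors introduced in passing from $t$ to $t+1$ — namely the transition of $X_{t+1}$ via $f_t$, the generation of $U^1_{t+1}$ via $g^1_{t+1}$ and $\kappa$, the generation of $Y^2_{t+1}$ via $h^2_{t+1}$, and the generation of $U^2_t = u^2_t$ — all depend on $g^2$ only through the last one, and that factor is $\Py^{g^2}[U^2_t = u^2_t \mid \text{player 2's info}]$, which appears identically in numerator and denominator and cancels. What remains is exactly the recursive description of $\Psi_{t+1}$ in terms of $\Psi_t$, so the quotient equals $\Psi_{t+1}(\iota_{t+1}^2; x_{1:t+1}, u^1_{1:t+1})$.

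The main obstacle I anticipate is the bookkeeping around the cancellation of the player-2 factors: one has to be careful that the events being conditioned on for player 2's action at each step are themselves functions of $I_\tau^2$ (which they are, since player 2 uses only $I_\tau^2$), so that the $g^2$-dependent conditional probabilities genuinely match between numerator and denominator and are not entangled with $(X_{1:t+1}, U^1_{1:t+1})$. A clean way to handle this rigorously is to condition first on the entire realized action sequence $U^2_{1:t} = u^2_{1:t}$ as an intermediate step: under that conditioning the joint law of $(X_{1:t+1}, U^1_{1:t+1}, Y^2_{1:t+1})$ is completely $g^2$-free (it is a function only of $g^1$ and the primitives), and one then only needs that the \emph{feasibility} of $\iota_{t+1}^2$ — i.e. whether the conditioning event has positive probability — is also $g^2$-free, which follows because $h^2 = (u^2_{1:t})$ is a legitimate (open-loop) choice in $\mathcal{G}^2$ and feasibility under it is implied by positive probability under any $g^2$. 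The secondary technical point is the treatment of infeasible $\iota_t^2$: these have $\Py^{g^1,g^2}$-probability zero for every $g^2$ (feasibility depending only on $g^1$ and the primitives, not on $g^2$), so redefining $\Psi_t$ arbitrarily there does not affect the almost-sure claim.
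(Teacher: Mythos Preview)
Your proof is correct. The paper, however, takes a more compressed route: it observes that from player~2's perspective the system is a partially observable MDP with state $S_t = (X_{1:t}, U^1_{1:t}, I^2_t)$, observation $Y^2_t$, and control $U^2_t$ (the state dynamics depending on the fixed $g^1$), and then simply invokes the standard POMDP fact that the controller's posterior belief on the state given its information history does not depend on the controller's own policy. Your inductive Bayes-rule computation with cancellation of the $g^2$-dependent factors is effectively a self-contained proof of that very POMDP fact specialized to this model. The paper's version is shorter and makes the structural reason transparent in one stroke; your version is more elementary in that it does not appeal to an external result, and it makes explicit both the role of the open-loop comparison strategy $h^2$ and the handling of zero-probability realizations of $I_t^2$, points the paper leaves implicit.
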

\begin{proof}
{From player 2's perspective, the system evolution can be seen in the following manner. The system state at time $t$ is $S_t = (X_{1:t},U_{1:t}^1,I^2_t)$. Player 2 obtains a partial observation $Y_{t}^2$ of the state at time $t$. Using information $\{Y_{1:t}^2,U_{1:t-1}^2\}$, player 2 then selects an action $U_t^2$. The state then evolves in a controlled Markovian manner (with dynamics that depend on $g^1$). Thus, from player 2's perspective,  this system is a partially observable Markov decision process (POMDP). The claim then follows from the standard result in POMDPs that the belief on the state given the player's information does not depend on the player's strategy \cite{kumar2015stochastic}.}\qed
\end{proof}

For any instance $\iota_t^2$ of player 2's information $I_t^2$, define the distribution $\Phi_t(\iota_t^2)$ over the space $\X_t \times \U_t^1$ as follows
\begin{align}
\Phi_t(\iota_t^2; x_t, u_t^1) = {\sum_{x_{1:t-1},u_{1:t-1}^1}\Psi_t(\iota_t^2;x_{1:t},u_{1:t}^1)}.\label{eq:phi}
\end{align}
Define strategy $\bar{g}^1$ for player 1 such that for any realization $x_t, \iota_t^2$ of state $X_t$ and player 2's information $I_t^2$ at time $t$, the probability of selecting an action $u_t^1$ at time $t$ is
\begin{align}\label{eq:defgbar}
\bar{g}_t^1(x_t,\iota_{t}^2;u_t^1)= 
\begin{cases}
\frac{\Phi_t(\iota_t^2; x_t,u_t^1)}{\sum_{u_t^{1'}}\Phi_t(\iota_t^2; x_t,u_t^{1'})} & \text{if } \sum_{u_t^{1'}}\Phi_t(\iota_t^2; x_t,u_t^{1'}) > 0\\
\mathscr{U}(\cdot) & \text{otherwise},
\end{cases}
\end{align}
where $\mathscr{U}(\cdot)$ denotes the uniform distribution over the action space $\U_t^1$. Notice that the construction of the strategy $\bar{g}^1$ does not involve player 2's strategy.

\begin{lemma}\label{aseqstrat}
For any strategy $g^2$ for player 2, we have
\begin{align*}
\Py^{(g^1,g^2)}[U_t^1 = u_t^1 \mid X_t, I_t^2] = \bar{g}_t^1(X_t,I_{t}^2;u_t^1)
\end{align*}
with almost surely for every $u_t^1 \in \U_t^1$.
\end{lemma}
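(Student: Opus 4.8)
The plan is to identify $\bar{g}_t^1$, as defined in \eqref{eq:defgbar}, with the conditional law of $U_t^1$ given $(X_t, I_t^2)$ under $(g^1,g^2)$, for an arbitrary player~2 strategy $g^2$. Since the sets $\X_t, \U_t^i, \Y_t^i$ are finite, it suffices to verify the claimed identity on every realization $(x_t,\iota_t^2)$ of $(X_t,I_t^2)$ that occurs with positive probability under $(g^1,g^2)$; the union of such realizations has probability one.

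First I would invoke Lemma~\ref{claimJ1}: for the fixed $g^1$, the given $g^2$, and any realization $\iota_t^2$ of $I_t^2$ with $\Py^{g^1,g^2}[I_t^2 = \iota_t^2] > 0$, we have $\Py^{g^1,g^2}[X_{1:t}, U_{1:t}^1 = (x_{1:t}, u_{1:t}^1) \mid I_t^2 = \iota_t^2] = \Psi_t(\iota_t^2; x_{1:t}, u_{1:t}^1)$ for every $x_{1:t}, u_{1:t}^1$. Summing both sides over $(x_{1:t-1}, u_{1:t-1}^1)$ and using the definition \eqref{eq:phi} of $\Phi_t$ gives
\[
\Py^{g^1,g^2}[X_t = x_t, U_t^1 = u_t^1 \mid I_t^2 = \iota_t^2] = \Phi_t(\iota_t^2; x_t, u_t^1),
\]
and a further summation over $u_t^1$ yields $\Py^{g^1,g^2}[X_t = x_t \mid I_t^2 = \iota_t^2] = \sum_{u_t^{1'}} \Phi_t(\iota_t^2; x_t, u_t^{1'})$.

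Next I would fix a realization $(x_t,\iota_t^2)$ with $\Py^{g^1,g^2}[X_t = x_t, I_t^2 = \iota_t^2] > 0$. Then $\Py^{g^1,g^2}[I_t^2 = \iota_t^2] > 0$ and $\Py^{g^1,g^2}[X_t = x_t \mid I_t^2 = \iota_t^2] > 0$, so by the last identity the denominator $\sum_{u_t^{1'}} \Phi_t(\iota_t^2; x_t, u_t^{1'})$ is strictly positive, which means the first branch of \eqref{eq:defgbar} is the active one. Applying the chain rule (Bayes' rule) for conditional probabilities,
\[
\bar{g}_t^1(x_t, \iota_t^2; u_t^1) = \frac{\Phi_t(\iota_t^2; x_t, u_t^1)}{\sum_{u_t^{1'}} \Phi_t(\iota_t^2; x_t, u_t^{1'})} = \frac{\Py^{g^1,g^2}[X_t = x_t, U_t^1 = u_t^1 \mid I_t^2 = \iota_t^2]}{\Py^{g^1,g^2}[X_t = x_t \mid I_t^2 = \iota_t^2]} = \Py^{g^1,g^2}[U_t^1 = u_t^1 \mid X_t = x_t, I_t^2 = \iota_t^2].
\]
Since this holds for every such $(x_t,\iota_t^2)$ and every $u_t^1$ in the finite set $\U_t^1$, the random variables $\Py^{(g^1,g^2)}[U_t^1 = u_t^1 \mid X_t, I_t^2]$ and $\bar{g}_t^1(X_t, I_t^2; u_t^1)$ agree with probability one, as claimed.

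I expect the only subtle point to be the null-set bookkeeping: one must make sure the ``otherwise'' branch of \eqref{eq:defgbar} --- the fallback uniform law used when $\Psi_t$, and hence $\Phi_t$, was set to be uniform because $\iota_t^2$ is infeasible under the open-loop strategy --- can only be triggered on a set of $(X_t, I_t^2)$-realizations of probability zero. This is precisely what Lemma~\ref{claimJ1} guarantees: on any positive-probability realization $\iota_t^2$, the conditional law of $(X_{1:t}, U_{1:t}^1)$ equals the genuine $\Psi_t(\iota_t^2; \cdot)$ rather than the fallback, and the marginalization step then forces $\sum_{u_t^{1'}} \Phi_t(\iota_t^2; x_t, u_t^{1'})$ to be strictly positive exactly on the event $\{X_t = x_t, I_t^2 = \iota_t^2\}$ that we condition on. Everything else is routine manipulation of conditional probabilities on finite spaces.
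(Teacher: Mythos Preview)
Your proposal is correct and follows essentially the same approach as the paper's own proof: invoke Lemma~\ref{claimJ1} on a positive-probability realization $(x_t,\iota_t^2)$, marginalize $\Psi_t$ to obtain $\Phi_t$ and the conditional law of $X_t$, verify the denominator in \eqref{eq:defgbar} is positive, and conclude by Bayes' rule. Your additional paragraph on the null-set bookkeeping is more explicit than the paper's treatment but does not change the argument.
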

\begin{proof}
Let $x_t, \iota_t^2$ be a realization that has a non-zero probability of occurrence under the strategy profile $(g^1,g^2)$. Then using Lemma \ref{claimJ1}, we have
\begin{align}
\Py^{(g^1,g^2)}[X_{1:t},U^1_{1:t} = (x_{1:t},u_{1:t}^1) \mid \iota_t^2] = \Psi_t(\iota_t^2; x_{1:t},u_{1:t}^1), \label{eq:lemma14}
\end{align}
for every realization $x_{1:t-1}$ of states $X_{1:t-1}$ and $u_{1:t}^1$ of actions $U_{1:t}^1$. Summing over all $x_{1:t-1},u^1_{1:t}$ and using \eqref{eq:phi} and \eqref{eq:lemma14},  we have
\begin{align}
 \Py^{(g^1,g^2)}[X_t = x_t \mid I_t^2 = \iota_t^2] =\sum_{u_t^{1}}\Phi_t(\iota_t^2; x_t,u_t^{1}).
\end{align}
The left hand side of the above equation is positive since $x_t,i^2_t$ is a realization of positive probability under the strategy profile $(g^1,g^2)$.

Using Bayes' rule, \eqref{eq:lemma14}, \eqref{eq:phi} and \eqref{eq:defgbar}, we obtain
\begin{align}
\nonumber\Py^{g^1,g^2}[U_t^1 = u_t^1 \mid X_t = x_t, I_t^2 = \iota_t^2]
&= \frac{\Phi_t(\iota_t^2; x_t,u_t^1)}{\sum_{u_t^{1'}}\Phi_t(\iota_t^2; x_t,u_t^{1'})}= \bar{g}_t^1(x_t,\iota_{t}^2;u_t^1) .
\end{align}
This concludes the proof of the lemma.\qed
\end{proof}

We can now show that the strategy $\bar{g}^1$ defined in \eqref{eq:defgbar} satisfies
\[
J(\bar{g}^1,g^2) = J(g^1, g^2),
\]
for every strategy $g^2 \in \mathcal{G}^2$. 
Because of the structure of the cost function in \eqref{eq:totalcost}, it is sufficient to show that for each time $t$, the random variables   $(X_t,U_t^1,U_t^2,I^2_t)$ have the same joint distribution under strategy profiles $(g^1,g^2)$ and $(\bar{g}^1,g^2)$. We prove this by induction. It is easy to verify that at time $t=1$, $(X_1,U_1^1,U_1^2,I^2_1)$ have the same joint distribution under strategy profiles $(g^1,g^2)$ and $(\bar{g}^1,g^2)$.

Now assume that at time $t$, 
\begin{align}
\label{jointeq}\Py^{g^1,g^2}[x_t,u_t^1,u_t^2,\iota_t^2] = \Py^{\bar{g}^1,g^2}[x_t,u_t^1,u_t^2,\iota_t^2],
\end{align}
 for any realization of state, actions and player 2's information $x_t,u_t^1,u_t^2, \iota_t^2$. Let $\iota_{t+1}^2 = (\iota_t^2,u_t^2,y_{t+1}^2)$. Then we have
\begin{align}
\Py^{g^1,g^2}[x_{t+1},\iota_{t+1}^2] &= \nonumber\sum_{\bar{x}_t,\bar{u}_t^1}\Py[x_{t+1},y_{t+1}^2 \mid \bar{x}_t,\bar{u}_t^1,u_t^2,\iota_t^2]\Py^{g^1,g^2}[\bar{x}_t,\bar{u}_t^1,u_t^2,\iota_t^2]\\
&= \label{indhyp}\sum_{\bar{x}_t,\bar{u}_t^1}\Py[x_{t+1},y_{t+1}^2 \mid \bar{x}_t,\bar{u}_t^1,u_t^2,\iota_t^2]\Py^{\bar{g}^1,g^2}[\bar{x}_t,\bar{u}_t^1,u_t^2,\iota_t^2]\\
&=\Py^{\bar{g}^1,g^2}[x_{t+1},\iota_{t+1}^2].\label{indhyp2}
\end{align}
The equality in (\ref{indhyp}) is due to the induction hypothesis. Note that the conditional distribution $\Py[x_{t+1},\iota_{t+1}^2 \mid {x}_t,{u}_t^1,u_t^2,\iota_t^2]$ does not depend on players' strategies (see equations (\ref{statevol}) and (\ref{obseq})).
 
At $t+1$, for any realization $x_{t+1},u_{t+1}^1,u_{t+1}^2,\iota_{t+1}^2$ that has non-zero probability of occurrence under the strategy profile $(g^1,g^2)$, we have
\begin{align}
\label{constarg}&\Py^{g^1,g^2}[x_{t+1},u_{t+1}^1,u_{t+1}^2,\iota_{t+1}^2] \\
&= \nonumber\Py^{g^1,g^2}[x_{t+1},\iota_{t+1}^2]g_t^2(\iota_{t+1}^2;u_{t+1}^2)\Py^{g^1,g^2}[u_{t+1}^1\mid x_{t+1},\iota_{t+1}^2]\\
&= \Py^{g^1,g^2}[x_{t+1},\iota_{t+1}^2]g_t^2(\iota_{t+1}^2;u_{t+1}^2)\bar{g}_t^1(x_{t+1},\iota_{t+1}^2;u_{t+1}^1)\label{constarg1}\\
&= \Py^{\bar{g}^1,g^2}[x_{t+1},\iota_{t+1}^2]g_t^2(\iota_{t+1}^2;u_{t+1}^2)\bar{g}_t^1(x_{t+1},\iota_{t+1}^2;u_{t+1}^1)\label{constarg3}\\
&= \Py^{\bar{g}^1,g^2}[x_{t+1},\iota_{t+1}^2]g_t^2(\iota_{t+1}^2;u_{t+1}^2)\Py^{\bar{g}^1,g^2}[u_{t+1}^1\mid x_{t+1},\iota_{t+1}^2]\label{constarg2}\\
&= \Py^{\bar{g}^1,g^2}[x_{t+1},u_{t+1}^1,u_{t+1}^2,\iota_{t+1}^2]\label{constarg4},
\end{align}
where the equality in \eqref{constarg1} follows from Lemma \ref{aseqstrat} and the equality in \eqref{constarg3} follows from the result in \eqref{indhyp2}.
Therefore, by induction, the equality in \eqref{jointeq} holds for all $t$. This concludes the proof of Claim \ref{claim:one}. \qed



\section{Proof of Lemma \ref{beliefupdateone}}
\label{beliefupdateoneproof}
We construct a transformation $F_t$ as described in Definition \ref{fdef} and show that this transformation  does not use virtual player 2's prescription $\gamma_t^2$.

In Game $\mathscr{H}_e$, the corresponding transformation $P_t^j$ (see Definition \ref{fdef}) has the following form
\begin{align}
P_t^j(\pi_t,\gamma_t^{1:2}; z_{t+1}, x_{t+1}) &= \gamma_t^2(\varnothing; \vct{u}_t^2)\sum_{\vct{x}_t,\vct{u}_t^1}\pi_t(\vct{x}_t)\gamma_t^1( \vct{x}_t ; \vct{u}_t^1)\Py[\vct{x}_{t+1},\vct{y}^2_{t+1} \mid \vct{x}_t,\vct{u}_t^1,\vct{u}_t^2]\\
\label{simple1}& =:  \gamma_t^2(\vct{u}_t^2)Q_t(\pi_t,\gamma_t^1, z_{t+1}; x_{t+1}).
\end{align}
Note that $z_{t+1} = \{y_{t+1}^2,u_t^2\}$. The corresponding transformation $P_t^m$ (see Definition \ref{fdef}) has the following form
\begin{align}
\nonumber P_t^m(\pi_t,\gamma_t^{1:2}; z_{t+1})&= \sum_{\vct{x}_{t+1}}\gamma_t^2(\vct{u}_t^2)Q_t(\pi_t,\gamma_t^1,z_{t+1};x_{t+1})\\
\label{simple2}& =: \gamma_t^2(\vct{u}_t^2)R_t(\pi_t,\gamma_t^1,z_{t+1}).
\end{align}
Following the methodology in Definition \ref{fdef}, we define $F_t$ as
\begin{align}
\label{tempfdef}F_t(\pi_t,\gamma_t^{1:2},z_{t+1})&= 
\begin{cases}
\frac{P_t^j(\pi_t,\gamma_t^{1:2},z_{t+1},\cdot)}{P_t^m(\pi_t,\gamma_t^{1:2};z_{t+1})} &\text{if } P_t^m(\pi_t,\gamma_t^{1:2};z_{t+1}) > 0\\
G_t(\pi_t,\gamma_t^{1:2},z_{t+1}) & \text{otherwise},
\end{cases}
\end{align}
where the transformation $G_t$ is chosen to be
\begin{align}
G_t(\pi_t,\gamma_t^{1:2},z_{t+1})&= 
\begin{cases}
\frac{Q_t(\pi_t,\gamma_t^{1},z_{t+1})}{R_t(\pi_t,\gamma_t^{1},z_{t+1})} &\text{if } R_t(\pi_t,\gamma_t^{1},z_{t+1}) > 0\\
\mathscr{U}(\X_{t+1}) & \text{otherwise},
\end{cases}
\end{align}
where $\mathscr{U}(\X_{t+1})$ is the uniform distribution over the space $\X_{t+1}$. Using the results (\ref{simple1}) and (\ref{simple2}), we can simplify the expression for the transformation $F_t$ in (\ref{tempfdef}) to obtain the following
\begin{align}
F_t(\pi_t,\gamma_t^{1:2},z_{t+1})&= 
\begin{cases}
\frac{Q_t(\pi_t,\gamma_t^{1},z_{t+1})}{R_t(\pi_t,\gamma_t^{1},z_{t+1})} &\text{if } R_t(\pi_t,\gamma_t^{1},z_{t+1}) > 0\\
\mathscr{U}(\X_{t+1}) & \text{otherwise}.
\end{cases}
\end{align}
This concludes the construction of an update rule $F_t$ in the class $\mathscr{B}$ that does not use virtual player 2's prescription $\gamma_t^2$.

\section{Proof of Lemma \ref{piecelemma}}\label{piecelemmaproof}

{We prove the lemma for the extensions of min-max and max-min value functions defined over $\bar{\mathcal{S}}_t$. Recall that $\bar{\mathcal{S}}_t$ is the set of all vectors $\alpha\pi_t$ where $0 \leq \alpha \leq 1$ and $\pi_t$ is a probability distribution on $\mathcal{X}_t \times \mathcal{P}^1_t \times \mathcal{P}^2_t$. We refer the reader to Appendix \ref{sec:cont} for the definition and a key property (Lemma \ref{scalinglemma}) of the extended min-max and max-min value functions. }

\emph{Induction hypothesis at time $t$: } Assume that for every $\tau$, such that $T \geq \tau \geq  t$, (i) the value functions $V^u_{\tau+1}(\pi_{\tau +1}) = V^l_{\tau + 1}(\pi_{\tau +1}) =: V_{\tau + 1}(\pi_{\tau +1})$ for every $\pi_{\tau +1} \in \bar{\mathcal{S}}_{\tau + 1}$, and (ii) $V_{\tau + 1}$ is piecewise linear and convex in $\pi_{\tau +1}$, i.e.
\begin{align}
V_{\tau + 1}(\pi_{\tau +1}) = \max_{\ell \in \mathcal{A}_{\tau +1}}\langle \ell, \pi_{\tau+1} \rangle,
\end{align}
where $\mathcal{A}_{\tau + 1}$ is a finite collection of vectors of size $|\X_{\tau + 1}|$. Note that this hypothesis is true for $t = T$ since $V_{T+1}^u(\pi_{T+1}) = V_{T+1}^l(\pi_{T+1}) =  0 =:V_{T+1}(\pi_{T+1})$ for every $\pi_{T+1}$ by definition. 

Using the induction hypothesis at time $t$, we will prove it for time $t-1$. Thus, we need to establish that (i) $V^u_{t}(\pi_{t}) = V^l_{t}(\pi_{t}) =: V_{t}(\pi_{t})$ for every $\pi_{t} \in \bar{\mathcal{S}}_{t}$, and that  (ii) $V_{t}$ is piecewise linear and convex in $\pi_{t}$.

\subsection{Equality of $V^u_t$ and $V^l_t$}
Consider any $\pi_{t} \in \bar{\mathcal{S}}_{t}$. Since the functions $V_{t+1}^u$ and $V_{t+1}^l$ are identical (due to the induction hypothesis), the functions $w_t^u$ and $w_t^l$ defined in Section \ref{dpsec} (see equation \eqref{eq:w_ext} in Appendix \ref{sec:cont}) are also identical. Let $w_t := w_t^u = w_t^l$. The value functions $V_t^u$ and $V_t^l$ at time $t$ are thus given by
\begin{align}
V_t^u(\pi_t) &:= \inf_{{\gamma}_t^1}\sup_{\gamma_t^2}w_t(\pi_t,\gamma_t^1,\gamma_t^2)\\
V_t^l(\pi_t) &:= \sup_{\gamma_t^2}\inf_{{\gamma}_t^1}w_t(\pi_t,\gamma_t^1,\gamma_t^2).
\end{align}
Note that $V_t^u(\pi_t)$ and $V_t^l(\pi_t)$ can be viewed as the upper and lower values of a single stage zero-sum game in which the minimizing player selects a prescription $\gamma_t^1 \in \mathcal{B}_t^1$, the maximizing player selects a prescription $\gamma_t^2 \in \mathcal{B}_t^2$, and the cost function is $w_t(\pi_t,\gamma_t^1,\gamma_t^2)$. Let us denote this single stage game by $SG_t(\pi_t)$.
According to Sion's minimax theorem \cite{sion1958general},  the upper and lower values of game  $SG_t(\pi_t)$ are equal, i.e. $V_t^u(\pi_t) = V_t^l(\pi_t)$, if the cost $w_t(\pi_t,\gamma_t^1,\gamma_t^2)$ and the prescription spaces $\mathcal{B}_t^1$ and $\mathcal{B}_t^2$ satisfy the following conditions:
(i) The sets $\mathcal{B}_t^1$ and $\mathcal{B}_t^2$ are convex and compact in a finite-dimensional Euclidean space.
(ii) The function $w_t(\pi_t,\gamma_t^1,\gamma_t^2)$ is continuous and convex in $\gamma_t^1$ for any given $\gamma_t^2$, and  it is continuous and concave in $\gamma_t^2$ for any given $\gamma_t^1$. It is clear that the first condition holds for $SG_t(\pi_t)$ because the prescriptions are mappings from a finite space to a probability simplex.  We now focus on the second condition. 


From Lemma \ref{equicontlemma} in Appendix \ref{equiexistlemmaproof}, we know that  $w_t(\pi_t, \gamma_t^1, \gamma_t^2)$ is continuous in $\gamma_t^1$ for any fixed $\gamma_t^2$ and it is continuous in $\gamma_t^2$ for any given $\gamma_t^1$. 
Further, from Lemma \ref{beliefupdateone} in Section \ref{beliefoneside}, we know that the belief update function $F_t$ does not depend on the prescription $\gamma_t^2$. Thus, using the notation defined in equations (\ref{simple1}) and (\ref{simple2}) of Appendix \ref{beliefupdateoneproof}, the function $w_t$, as defined in \eqref{eq:w_ext}, can be expressed as
\begin{align}
\nonumber w_t(\pi_t,\gamma_t^1,\gamma_t^2) &= \tilde{c}_t(\pi_t,\gamma_t^1,\gamma_t^2) + \sum_{\vct{z}_{t+1}}\gamma_t^2(\vct{u}_t^2)R_t(\pi_t,\gamma_t^1,\vct{z}_{t+1})V_{t+1}\left(F_t(\pi_t,\gamma_t^1,z_{t+1})\right)\\
&= \tilde{c}_t(\pi_t,\gamma_t^1,\gamma_t^2) + \sum_{\vct{z}_{t+1}}\gamma_t^2(\vct{u}_t^2)V_{t+1}\left({\vct{Q}_{t}(\pi_t,\gamma_t^1,\vct{z}_{t+1})}\right),\label{wtdef}
\end{align}
where the last equality is because of Lemma \ref{scalinglemma} in Appendix \ref{sec:cont}. Convexity (resp. concavity) condition holds for the first term in \eqref{wtdef} because of its bilinear structure  in prescriptions for a fixed $\pi_t$ (see \eqref{tildec}). Thus, we only need to show that the convexity (resp. concavity) condition holds for the second term in \eqref{wtdef}.

For a fixed $\gamma_t^1$, the second term in \eqref{wtdef} is linear in $\gamma_t^2$. Thus, it is concave in $\gamma_t^2$. Recall that the measure $\vct{Q}_{t}(\pi_t,\gamma_t^1,\vct{z}_{t+1})$ is linear in $\gamma_t^1$ for a fixed $\pi_t$ (see \eqref{simple1}) and $V_{t+1}$ is convex due to our \emph{induction hypothesis}. Therefore, for a fixed $\gamma_t^2$, the composition $V_{t+1}\left({\vct{Q}_{t}(\pi_t,\gamma_t^1,z_{t+1})}\right) $ is convex in $\gamma_t^1$. Hence, the second term in \eqref{wtdef} is convex in $\gamma_t^1$ for a fixed $\gamma_t^2$. Thus, both conditions of Sion's theorem are valid for the single state game $SG_t(\pi_t)$. Hence,  $V^u_{t}(\pi_t) = V^l_{t}(\pi_t) =: V_{t}(\pi_t)$ for every $\pi_t \in \bar{\mathcal{S}}_{t}$.

\subsection{Piecewise Linearity and Convexity of $V_t$}\label{pieceproof}

We  first prove the following claims.
\begin{claim}\label{piececlaim1}
For each $u^2 \in \U_t^2$, there exists a finite set $\mathcal{D}(u^2)$ of vectors of size $|\mathcal{X}_t \times \mathcal{U}_t^1|$ such that
\begin{align}
w_t(\pi_t, \gamma_t^1, \gamma_t^2) &= \sum_{u^2}\gamma_t^2(u^2)\max_{d \in \mathcal{D}(u^2)}\sum_{x_t,u_t^1}d(x_t,u_t^1)\pi_t(x_t)\gamma_t^1(x_t;u_t^1)\\
&=: \sum_{u^2}\gamma_t^2(u^2)\max_{d \in \mathcal{D}(u^2)}\langle d, \pi_t\gamma_t^1 \rangle,
\end{align}
{where $\pi_t\gamma_t^1$ denotes the vector representation of the measure $\pi_t(x_t)\gamma_t^1(x_t;u_t^1)$ over the space $\mathcal{X}_t \times \mathcal{U}_t^1$.}
\end{claim}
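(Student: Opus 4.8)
The plan is to start from the simplified expression for $w_t$ derived in \eqref{wtdef},
\[
w_t(\pi_t,\gamma_t^1,\gamma_t^2) = \tilde{c}_t(\pi_t,\gamma_t^1,\gamma_t^2) + \sum_{\vct{z}_{t+1}}\gamma_t^2(\vct{u}_t^2)\,V_{t+1}\!\left(\vct{Q}_t(\pi_t,\gamma_t^1,\vct{z}_{t+1})\right),
\]
and to exploit three features specific to Game $\gm{H}$. First, $\pi_t$ is supported on $\mathcal{X}_t$ alone and player $2$ has no private information, so $\gamma_t^2$ is just a distribution on $\mathcal{U}_t^2$ and $\tilde{c}_t$ in \eqref{tildec} reduces to $\sum_{u_t^2}\gamma_t^2(u_t^2)\langle c^{u_t^2},\pi_t\gamma_t^1\rangle$ with $c^{u_t^2}(x_t,u_t^1):=c_t(x_t,u_t^1,u_t^2)$. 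Second, the common-information increment is $z_{t+1}=(y_{t+1}^2,u_t^2)$, so the outer sum over $z_{t+1}$ splits into a sum over $u_t^2$ and a sum over $y_{t+1}^2$, and by \eqref{simple1} the factor $\vct{Q}_t$ carries no dependence on $\gamma_t^2$. Third, $\vct{Q}_t(\pi_t,\gamma_t^1,z_{t+1};x_{t+1}) = \sum_{x_t,u_t^1}\pi_t(x_t)\gamma_t^1(x_t;u_t^1)\,\Py[x_{t+1},y_{t+1}^2\mid x_t,u_t^1,u_t^2]$ is a \emph{sub}-probability measure (its mass $R_t(\pi_t,\gamma_t^1,z_{t+1})\le 1$) that is \emph{linear} in the vector $\pi_t\gamma_t^1$ over $\mathcal{X}_t\times\mathcal{U}_t^1$.

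Next I would invoke the induction hypothesis, which in its extended form over $\bar{\mathcal{S}}_{t+1}$ asserts $V_{t+1}(\pi_{t+1}) = \max_{\ell\in\mathcal{A}_{t+1}}\langle\ell,\pi_{t+1}\rangle$ for a finite set $\mathcal{A}_{t+1}$. Since $\vct{Q}_t(\cdot)\in\bar{\mathcal{S}}_{t+1}$ by the mass bound, this yields
\[
V_{t+1}\!\left(\vct{Q}_t(\pi_t,\gamma_t^1,z_{t+1})\right) = \max_{\ell\in\mathcal{A}_{t+1}}\big\langle \ell,\ \vct{Q}_t(\pi_t,\gamma_t^1,z_{t+1})\big\rangle = \max_{\ell\in\mathcal{A}_{t+1}}\big\langle d_{\ell,y_{t+1}^2,u_t^2},\ \pi_t\gamma_t^1\big\rangle,
\]
where for each $\ell,y_{t+1}^2,u_t^2$ I define the vector $d_{\ell,y_{t+1}^2,u_t^2}$ of size $|\mathcal{X}_t\times\mathcal{U}_t^1|$ by $d_{\ell,y_{t+1}^2,u_t^2}(x_t,u_t^1) := \sum_{x_{t+1}}\ell(x_{t+1})\Py[x_{t+1},y_{t+1}^2\mid x_t,u_t^1,u_t^2]$; the second equality is exactly the linearity of $\vct{Q}_t$ in $\pi_t\gamma_t^1$.

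The remaining step is combinatorial: a finite sum of pointwise maxima of linear functionals is again a pointwise maximum of linear functionals, via the identity $\sum_{y}\max_{\ell_y}\langle d_{\ell_y,y},v\rangle = \max_{(\ell_y)_y}\big\langle\sum_y d_{\ell_y,y},\,v\big\rangle$, the outer maximum ranging over the finite set of tuples $(\ell_{y_{t+1}^2})_{y_{t+1}^2}\in\mathcal{A}_{t+1}^{|\mathcal{Y}_{t+1}^2|}$. Applying this to the $y_{t+1}^2$-sum inside each $u_t^2$-block of $w_t$ and folding in $c^{u_t^2}$, I would set
\[
\mathcal{D}(u^2) := \Big\{\, c^{u^2} + \textstyle\sum_{y_{t+1}^2} d_{\ell_{y_{t+1}^2},\,y_{t+1}^2,\,u^2}\ :\ (\ell_{y_{t+1}^2})_{y_{t+1}^2}\in\mathcal{A}_{t+1}^{|\mathcal{Y}_{t+1}^2|}\,\Big\},
\]
a finite collection of vectors of size $|\mathcal{X}_t\times\mathcal{U}_t^1|$, and conclude $w_t(\pi_t,\gamma_t^1,\gamma_t^2) = \sum_{u^2}\gamma_t^2(u^2)\max_{d\in\mathcal{D}(u^2)}\langle d,\pi_t\gamma_t^1\rangle$, which is the claim. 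The only genuine subtlety is justifying the application of the induction hypothesis to the sub-normalized measure $\vct{Q}_t$ rather than to an honest probability vector; this is precisely why the hypothesis is stated over $\bar{\mathcal{S}}_{t+1}$ (equivalently, it follows from the homogeneity in Lemma \ref{scalinglemma}), and everything else is bookkeeping.
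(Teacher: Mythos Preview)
Your proposal is correct and follows essentially the same route as the paper: start from \eqref{wtdef}, split $z_{t+1}=(y_{t+1}^2,u_t^2)$, apply the induction hypothesis (in its extended form over $\bar{\mathcal{S}}_{t+1}$) to $V_{t+1}(\vct{Q}_t(\cdot))$, exploit the linearity of $\vct{Q}_t$ in $\pi_t\gamma_t^1$, and fold in the instantaneous cost. If anything, your argument is more explicit than the paper's, which simply appeals to ``sum of piecewise linear and convex functions is piecewise linear and convex'' where you write out the vectors $d_{\ell,y_{t+1}^2,u_t^2}$ and the tuple-maximization identity.
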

\begin{proof}
Consider the second term of $w_t$ in \eqref{wtdef}. Due to the induction hypothesis, we have
\begin{align}
&\sum_{\vct{z}_{t+1}}\gamma_t^2(\vct{u}_t^2)V_{t + 1}(\vct{Q}_{t}(\pi_t,\gamma_t^1,z_{t+1})) \label{secondtermeq}\\
 &= \sum_{\vct{z}_{t+1}}\gamma_t^2(\vct{u}_t^2)\max_{\ell \in \mathcal{A}_{t +1}}\langle \ell, \vct{Q}_{t}(\pi_t,\gamma_t^1,z_{t+1}) \rangle\\
&= \sum_{u_t^2}\gamma_t^2(\vct{u}_t^2)\left[\sum_{y_{t+1}^2}\max_{\ell \in \mathcal{A}_{t +1}}\sum_{x_{t+1}} \ell(x_{t+1})\vct{Q}_{t}(\pi_t,\gamma_t^1,z_{t+1}; x_{t+1})\right].
\end{align}
Note that $z_{t+1} = \{u_t^2,y_{t+1}^2\}$. Using the fact that $Q_t$ is linear in the {product measure} $\pi_t \gamma_t^1$ (see \eqref{simple1}), we can say that the term $$\max_{\ell \in \mathcal{A}_{t +1}}\sum_{x_{t+1}} \ell(x_{t+1})\vct{Q}_{t}(\pi_t,\gamma_t^1,z_{t+1}; x_{t+1})$$ is piecewise linear and convex in the product measure $\pi_t \gamma_t^1$, for each $z_{t+1}$. Further, since the sum of piecewise linear and convex functions is piecewise linear and convex, we can say that for each $u^2 \in \mathcal{U}_t^2$, there exists a finite set $\mathcal{D}'(u^2)$ of vectors of size $|\mathcal{X}_t \times \mathcal{U}_t^1|$ such that {\eqref{secondtermeq} is equal to}
\begin{align}
 \sum_{u^2}\gamma_t^2(u^2)\left[\max_{d \in \mathcal{D}'(u^2)}\sum_{x_t,u_t^1} d(x_t,u_t^1)\pi_t(x_t)\gamma_t^1(x_t;u_t^1)\right]= \sum_{u^2}\gamma_t^2(u^2)\max_{d \in \mathcal{D}'(u^2)}\langle d, \pi_t\gamma_t^1 \rangle.
\end{align}
Combining this fact with \eqref{wtdef}, we have
\begin{align}
w_t(\pi_t, \gamma_t^1, \gamma_t^2) = \tilde{c}_t(\pi_t,\gamma_t^1,\gamma_t^2) + \sum_{u^2}\gamma_t^2(u^2)\max_{d \in \mathcal{D}'(u^2)}\langle d, \pi_t\gamma_t^1 \rangle.\label{wstruct}
\end{align}
Instantaneous cost $\tilde{c}_t$ has the following structure (see \eqref{tildec})
\begin{align*}
\tilde{c}_t(\pi_t,\gamma_t^1,\gamma_t^2) = \sum_{x_t,u_t^1,u_t^2}\pi_t(x_t)\gamma_t^1(x_t;u_t^1)\gamma_t^2(u_t^2)c_t(x_t,u_t^1,u_t^2).
\end{align*}
Thus, using the structure in \eqref{wstruct}, we can conclude that for each $u^2 \in \mathcal{U}_t^2$, there exists a finite set $\mathcal{D}(u^2)$ of vectors of size $|\mathcal{X}_t \times \mathcal{U}_t^1|$ such that
\begin{align}
w_t(\pi_t, \gamma_t^1, \gamma_t^2) = \sum_{u^2}\gamma_t^2(u^2)\max_{d \in \mathcal{D}(u^2)}\langle d, \pi_t\gamma_t^1 \rangle.
\end{align}
This concludes the proof of the claim.\qed
\end{proof}

Let $\mathcal{E}_t$ be the collection of all mappings $e: \U_t^2 \rightarrow \Delta [\cup_{u^2 \in \mathcal{U}_t^2} \mathcal{D}(u^2)]$ such that for every $u^2$, $e(u^2;d) = 0$ if $d \notin \mathcal{D}(u^2)$ or in other words, the support of the distribution $e(u^2)$ is a subset of $\mathcal{D}(u^2)$. For some mapping $e$ and some action $u^2 \in \U_t^2$, the probability associated with a vector $d \in \mathcal{D}(u^2)$ is denoted by $e(u^2;d)$.


\begin{claim}\label{maxmamxminclaim}
We have
\begin{align}
V_t(\pi_t)&=\max_{\gamma_t^2}\min_{\gamma_t^1}\max_{e\in \mathcal{E}_t}\sum_{u^2}\sum_{d \in \mathcal{D}(u^2)}\gamma_t^2(u^2)e(u^2;d)\langle d,\pi_t\gamma_t^1\rangle\\
&= \max_{\gamma_t^2}\max_{e \in \mathcal{E}_t}\min_{\gamma_t^1}\sum_{u^2}\sum_{d \in \mathcal{D}(u^2)}\gamma_t^2(u^2)e(u^2;d)\langle d,\pi_t\gamma_t^1\rangle\label{maxmaxmin}.
\end{align}
\end{claim}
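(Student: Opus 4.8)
The plan is to begin from the max-min characterization $V_t(\pi_t) = V_t^l(\pi_t) = \max_{\gamma_t^2}\min_{\gamma_t^1} w_t(\pi_t,\gamma_t^1,\gamma_t^2)$, which is available since the preceding subsection showed $V_t^u = V_t^l =: V_t$ and the outer sup/inf are attained (Lemma \ref{equiexistlemma}). Substituting the representation of $w_t$ from Claim \ref{piececlaim1},
\[
V_t(\pi_t) = \max_{\gamma_t^2}\min_{\gamma_t^1}\sum_{u^2}\gamma_t^2(u^2)\max_{d\in\mathcal{D}(u^2)}\langle d,\pi_t\gamma_t^1\rangle .
\]
The first step is to re-express the inner pointwise maximum. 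For each fixed $u^2$, the map $d\mapsto\langle d,\pi_t\gamma_t^1\rangle$ is linear on the finite set $\mathcal{D}(u^2)$, so its maximum over $\mathcal{D}(u^2)$ equals its maximum over the simplex $\Delta\mathcal{D}(u^2)$, i.e. $\max_{d\in\mathcal{D}(u^2)}\langle d,\pi_t\gamma_t^1\rangle = \max_{\mu\in\Delta\mathcal{D}(u^2)}\sum_{d\in\mathcal{D}(u^2)}\mu(d)\langle d,\pi_t\gamma_t^1\rangle$. Since the set $\mathcal{E}_t$ is precisely the product of these simplices over $u^2\in\mathcal{U}_t^2$ (the constraint on $e(u^2;\cdot)$ for one action does not couple with that for another), the maximizations over the individual simplices combine into a single maximization over $\mathcal{E}_t$ that can be pulled outside the sum over $u^2$, giving
\[
w_t(\pi_t,\gamma_t^1,\gamma_t^2) = \max_{e\in\mathcal{E}_t}\sum_{u^2}\sum_{d\in\mathcal{D}(u^2)}\gamma_t^2(u^2)e(u^2;d)\langle d,\pi_t\gamma_t^1\rangle .
\]
Substituting this back yields the first displayed equality of the claim.

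For the second equality I would interchange $\min_{\gamma_t^1}$ and $\max_{e\in\mathcal{E}_t}$ in the inner problem, for each fixed $\gamma_t^2$. Fixing $\gamma_t^2$ and $\pi_t$, set $\phi(\gamma_t^1,e) := \sum_{u^2}\sum_{d\in\mathcal{D}(u^2)}\gamma_t^2(u^2)e(u^2;d)\langle d,\pi_t\gamma_t^1\rangle$. Because $\langle d,\pi_t\gamma_t^1\rangle = \sum_{x_t,u_t^1}d(x_t,u_t^1)\pi_t(x_t)\gamma_t^1(x_t;u_t^1)$ is linear in $\gamma_t^1$ for each fixed $d$, and $\phi$ is manifestly linear in $e$ for each fixed $\gamma_t^1$, the function $\phi$ is bilinear on $\mathcal{B}_t^1\times\mathcal{E}_t$; both sets are nonempty, convex and compact subsets of finite-dimensional Euclidean spaces (each being a product of probability simplices). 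Hence Sion's minimax theorem \cite{sion1958general} applies and gives $\min_{\gamma_t^1}\max_{e\in\mathcal{E}_t}\phi(\gamma_t^1,e) = \max_{e\in\mathcal{E}_t}\min_{\gamma_t^1}\phi(\gamma_t^1,e)$. Applying $\max_{\gamma_t^2}$ to both sides gives the second displayed equality.

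I do not anticipate a serious obstacle. The only points requiring care are making explicit in the first step that $\mathcal{E}_t$ is a product of simplices indexed by $u^2$, so that the maximum over $\mathcal{E}_t$ decouples per action and commutes with the sum over $u^2$, and verifying for the second step that $\phi$ is bilinear and that $\mathcal{B}_t^1$ and $\mathcal{E}_t$ meet the convexity and compactness hypotheses of Sion's theorem; both are routine consequences of the definitions already in place.
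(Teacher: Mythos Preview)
Your proposal is correct and follows essentially the same route as the paper: rewrite $w_t$ via Claim \ref{piececlaim1}, replace each $\max_{d\in\mathcal{D}(u^2)}$ by a max over the simplex and combine into a single $\max_{e\in\mathcal{E}_t}$, and then, for fixed $\gamma_t^2$, invoke Sion's minimax theorem on the bilinear function of $(\gamma_t^1,e)$ over the compact convex sets $\mathcal{B}_t^1$ and $\mathcal{E}_t$ to swap the inner $\min$ and $\max$. The paper's proof is identical in structure, differing only in that it states the Sion hypotheses as convexity/concavity rather than the stronger bilinearity you note.
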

\begin{proof}
Because of Claim \ref{piececlaim1}, we have
\begin{align}
w_t(\pi_t, \gamma_t^1, \gamma_t^2) &= \sum_{u^2}\gamma_t^2(u^2)\max_{d \in \mathcal{D}(u^2)}\langle d, \pi_t\gamma_t^1 \rangle\\
&=\sum_{u^2}\gamma_t^2(u^2)\max_{\lambda \in \Delta \mathcal{D}(u^2)}\sum_{d \in \mathcal{D}(u^2)}\lambda(d)\langle d, \pi_t\gamma_t^1 \rangle\label{deteqrand}\\
&=\max_{e \in \mathcal{E}_t}\sum_{u^2}\sum_{d \in \mathcal{D}(u^2)}\gamma_t^2(u^2)e(u^2;d)\langle d, \pi_t\gamma_t^1 \rangle.
\end{align}
For any fixed $\gamma_t^2$, we have
\begin{align}
\min_{\gamma_t^1}w_t(\pi_t, \gamma_t^1, \gamma_t^2) &= \min_{\gamma_t^1}\max_{e \in \mathcal{E}_t}\sum_{u^2}\sum_{d \in \mathcal{D}(u^2)}\gamma_t^2(u^2)e(u^2;d)\langle d, \pi_t\gamma_t^1 \rangle\\
&=\max_{e \in \mathcal{E}_t}\min_{\gamma_t^1}\sum_{u^2}\sum_{d \in \mathcal{D}(u^2)}\gamma_t^2(u^2)e(u^2;d)\langle d, \pi_t\gamma_t^1 \rangle.\label{maxmineq}
\end{align}
The last inequality is a consequence of Sion's minimax theorem \cite{sion1958general}. We can apply Sion's theorem here because the sets $\mathcal{B}_t^1$ and $\mathcal{E}_t$ are compact and convex and, the function
$$\sum_{u^2}\sum_{d \in \mathcal{D}(u^2)}\gamma_t^2(u^2)e(u^2;d)\langle d, \pi_t\gamma_t^1 \rangle$$
is convex in $\gamma_t^1$ and concave in $e$. The claim then follows from \eqref{maxmineq} and the fact that
\begin{align}
V_t(\pi_t) = \max_{\gamma_t^2}\min_{\gamma_t^1}w_t(\pi_t, \gamma_t^1, \gamma_t^2).
\end{align}\qed
\end{proof}

Let $\mathscr{P}_t \subset \R^{|\U_t^2 \times \cup_{u^2 \in \U_t^2}\mathcal{D}(u^2)|}$ be the polytope that is characterized by the following constraints 
\begin{align}
&\sum_{u^2\in \U_t^2,d \in \cup_{u^2 \in \U_t^2}\mathcal{D}(u^2)}r(u^2,d) = 1\\
& r(u^2,d) \geq 0\; \forall u^2 \in \U_t^2, d \in \cup_{u^2 \in \U_t^2}\mathcal{D}(u^2)\\
& r(u^2,d) = 0 \; \text{if } d \notin \mathcal{D}(u^2).
\end{align}
Notice that the objective in the max-max-min problem in \eqref{maxmaxmin}, is a function of the product $\gamma_t^2(u^2) e(u^2,d)$. The product $\gamma_t^2e$ is a joint distribution over the space $\U_t^2 \times \cup_{u^2 \in \U_t^2}\mathcal{D}(u^2)$. One can easily show that the set
\begin{align}
\mathscr{P}_t = \{\gamma_t^2 e : \gamma_t^2 \in \mathcal{B}_t^2, e \in \mathcal{E}_t\}.
\end{align}
Therefore, using Claim \ref{maxmamxminclaim}, we have
\begin{align}
V_t(\pi_t) &= \max_{\gamma_t^2}\max_{e \in \mathcal{E}_t}\min_{\gamma_t^1}\sum_{u^2}\sum_{d \in \mathcal{D}(u^2)}\gamma_t^2(u^2)e(u^2;d)\langle d,\pi_t\gamma_t^1\rangle\label{maxmaxmin2}\\
&= \max_{r \in \mathscr{P}_t}\min_{\gamma_t^1}\sum_{u^2}\sum_{d \in \mathcal{D}(u^2)}r(u^2,d)\langle d,\pi_t\gamma_t^1\rangle.\label{maxmaxmin3}
\end{align}

\begin{claim}\label{lpclaim}
The value of the expression (\ref{maxmaxmin}), and thus $V_t(\pi_t)$, is equal to the optimal value of the following linear program
\begin{equation*}
\begin{aligned}
& \underset{r \in \mathscr{P}_t, \nu \in \R^{|\X_t|}}{\text{max}}
 \sum_{x_t}\pi_t(x_t)\nu(x_t) \\
& \text{s. t. }
 \sum_{u^2 \in \U_t^2, d \in \cup_{u^2 \in \U_t^2}\mathcal{D}(u^2)}r(u^2,d)d({x_t,u_t^1}) \geq \nu(x_t), \; \forall u_t^1 \in \mathcal{U}_t^1,x_t \in \mathcal{X}_t.
\end{aligned}
\end{equation*}
\end{claim}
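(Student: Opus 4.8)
The plan is to start from the max--min representation of $V_t(\pi_t)$ in \eqref{maxmaxmin3}, carry out the inner minimization over virtual player 1's prescription $\gamma_t^1$ in closed form, and then recast the resulting maximization as a linear program by an epigraph reformulation. Concretely, recall from \eqref{maxmaxmin3} that $V_t(\pi_t)=\max_{r\in\mathscr{P}_t}\min_{\gamma_t^1\in\mathcal{B}_t^1}\sum_{u^2}\sum_{d\in\mathcal{D}(u^2)}r(u^2,d)\langle d,\pi_t\gamma_t^1\rangle$, where $\mathscr{P}_t$ is the polytope defined just before the claim. Expanding $\langle d,\pi_t\gamma_t^1\rangle=\sum_{x_t,u_t^1}d(x_t,u_t^1)\pi_t(x_t)\gamma_t^1(x_t;u_t^1)$ and interchanging the finite sums, the objective becomes $\sum_{x_t}\pi_t(x_t)\sum_{u_t^1}\gamma_t^1(x_t;u_t^1)\,q_r(x_t,u_t^1)$, where
\begin{align}
q_r(x_t,u_t^1):=\sum_{u^2\in\mathcal{U}_t^2,\ d\in\cup_{u^2\in\mathcal{U}_t^2}\mathcal{D}(u^2)}r(u^2,d)\,d(x_t,u_t^1)
\end{align}
is precisely the linear form in $r$ appearing in the constraints of the LP (using $r(u^2,d)=0$ for $d\notin\mathcal{D}(u^2)$).

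Next I would perform the minimization over $\gamma_t^1$. Since a prescription in $\mathcal{B}_t^1$ is an \emph{arbitrary} map from $\mathcal{X}_t=\mathcal{P}_t^1$ into $\Delta\mathcal{U}_t^1$, the distributions $\gamma_t^1(x_t;\cdot)$ may be chosen independently across $x_t$; as $\pi_t(x_t)\ge 0$, the minimization decouples over $x_t$, and for each $x_t$ the objective is linear in $\gamma_t^1(x_t;\cdot)$ over the simplex $\Delta\mathcal{U}_t^1$, hence its minimum is attained at a vertex. Therefore $\min_{\gamma_t^1}\sum_{x_t}\pi_t(x_t)\sum_{u_t^1}\gamma_t^1(x_t;u_t^1)q_r(x_t,u_t^1)=\sum_{x_t}\pi_t(x_t)\min_{u_t^1\in\mathcal{U}_t^1}q_r(x_t,u_t^1)$, so that $V_t(\pi_t)=\max_{r\in\mathscr{P}_t}\sum_{x_t}\pi_t(x_t)\min_{u_t^1}q_r(x_t,u_t^1)$.

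Finally I would linearize the inner minimum by introducing variables $\nu(x_t)$, $x_t\in\mathcal{X}_t$, subject to $\nu(x_t)\le q_r(x_t,u_t^1)$ for all $u_t^1$, and show
\begin{align}
\max_{r\in\mathscr{P}_t}\sum_{x_t}\pi_t(x_t)\min_{u_t^1}q_r(x_t,u_t^1)=\max_{\substack{r\in\mathscr{P}_t,\ \nu\in\R^{|\mathcal{X}_t|}\\ \nu(x_t)\le q_r(x_t,u_t^1)\ \forall x_t,u_t^1}}\sum_{x_t}\pi_t(x_t)\nu(x_t).
\end{align}
The inequality ``$\le$'' follows by taking an optimal $r$ and setting $\nu(x_t):=\min_{u_t^1}q_r(x_t,u_t^1)$, which is feasible and attains the left side; ``$\ge$'' follows since any feasible $(r,\nu)$ has $\nu(x_t)\le\min_{u_t^1}q_r(x_t,u_t^1)$, so multiplying by $\pi_t(x_t)\ge 0$ and summing bounds its objective by the left side. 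The right-hand side is exactly the linear program in the claim, and attainment is immediate because $\mathscr{P}_t$ is a nonempty polytope, the objective is linear, and for coordinates $x_t$ with $\pi_t(x_t)>0$ the variable $\nu(x_t)$ is bounded above by the constraints.

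I do not expect a genuine obstacle. The only points needing care are (i) the decoupling of $\min_{\gamma_t^1}$ across the realizations of $\mathcal{P}_t^1$, which relies on a prescription imposing no coupling between distinct private states together with $\pi_t\ge 0$, and (ii) checking that the epigraph reformulation remains exact on all of $\bar{\mathcal{S}}_t$, including when $\pi_t(x_t)=0$ for some $x_t$ --- in that case the corresponding coordinate of $\nu$ does not affect the objective, so the equivalence is unaffected.
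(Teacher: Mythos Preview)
Your argument is correct. It differs in presentation from the paper's proof: the paper recasts \eqref{maxmaxmin3} as the value of an auxiliary finite Bayesian zero-sum game (nature draws $x_t\sim\pi_t$, the minimizer observes $x_t$ and picks $\gamma_t^1(x_t;\cdot)$, the maximizer picks $r\in\mathscr{P}_t$) and then invokes the known LP formulation for such games from \cite{ponssard1980lp} to obtain the stated linear program. You instead carry out the inner minimization over $\gamma_t^1$ explicitly---exploiting the separability across $x_t$ and the fact that a linear function on $\Delta\mathcal{U}_t^1$ is minimized at a vertex---and then apply a standard epigraph linearization. Your route is more elementary and self-contained (no external citation needed), while the paper's route has the advantage of making the game-theoretic structure transparent and immediately connecting to the literature on LP duality for Bayesian games; both arrive at the same LP and both are valid.
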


\begin{proof}
Consider the following imaginary Bayesian zero-sum game. Fix $\pi_t$. Nature selects $x_t$ using the distribution $\pi_t$. The minimizing player in this imaginary game observes $x_t$ and selects a distribution over the space $\U_t^1$ using the prescription $\gamma_t^1$. The maximizing player selects a distribution from the set $\mathscr{P}_t$. The cost function associated with these selections is
\begin{align}
p(\gamma_t^1,r) &:= \sum_{u^2}\sum_{d \in \mathcal{D}(u^2)}\sum_{x_t,u_t^1}r(u^2,d) \pi_t(x_t)\gamma_t^1(x_t;u_t^1)d(x_t,u_t^1)\\
&= \sum_{u^2}\sum_{d \in \mathcal{D}(u^2)}r(u^2,d)\langle d, \pi_t\gamma_t^1 \rangle.
\end{align}
Notice that the value of this finite Bayesian game exists and is equal to the maxmin value in \eqref{maxmaxmin3} and thus, equal to the value function $V_t(\pi_t)$. The value of this game can be obtained by the linear program given in \cite{ponssard1980lp} which leads us to the desired result.\qed
\end{proof}

As mentioned earlier, the optimum value of the linear program in Claim \ref{lpclaim} is finite. Thus, there exists a solution to this linear program such that it is an extreme point of the polytope formed by the constraints (see Corollary 2 in Chapeter 2 of \cite{luenberger}). Also, since there are only finitely many extreme points (see Corollary 3 in Chapeter 2 of \cite{luenberger}), we can restrict the optimization variables in the linear program to this finite set of extreme points. Let this set of extreme points be $\mathscr{E}_t$. Therefore,
\begin{align}
V_t(\pi_t) = \max_{r,\nu \in \mathscr{E}_t}\sum_{x_t}\pi_t(x_t)\nu(x_t).
\end{align}
This proves that the value function $V_t(\pi_t)$ is piecewise linear and convex in $\pi_t$. This completes the induction step of our proof.

\section{Proof of Theorem \ref{strategy}}
\label{strategyproof}
Let the strategies $\tilde{\chi}^{1*}$ and $\tilde{\chi}^{1*}$ be as defined in Definition \ref{stratdefone}. Note that the strategy $\tilde{\chi}^{1*}$ uses only the common information $c_t$ and player 1's past prescriptions $\gamma_{1:t-1}^1$. Because of this structure, we have
\begin{equation}\label{structeq}
\vct{\chi}^{1*} := \varrho^1(\tilde{\vct{\chi}}^{1*},\tilde{\vct{\chi}}^{2*}) = \varrho^1(\tilde{\vct{\chi}}^{1*},\tilde{\vct{\chi}}^{2})
\end{equation}
for any strategy $\tilde{\vct{\chi}}^{2}$. Let us assume that $\vct{\chi}^{1*} $ is not a minimax strategy for the game $\gm{H}_v$. Therefore, there exists $\vct{\chi}^2 \in \H^2$ such that
\begin{equation}
\mathcal{J}(\vct{\chi}^{1*},\vct{\chi}^{2}) > S(\gm{H}_v).
\end{equation}
Based on the result in (\ref{structeq}) and the fact that $\varrho^2(\tilde{\chi}^1,\chi^2) = \chi^2$ for any $\tilde{\chi}^1 \in \tilde{\mathcal{H}}^1$, we have
\begin{equation}
(\vct{\chi}^{1*},\vct{\chi}^{2}) = \varrho(\tilde{\vct{\chi}}^{1*},{\vct{\chi}}^{2}).
\end{equation}
Hence, using Lemma \ref{evolequi} we have
\begin{equation}
{\mathcal{J}}(\tilde{\vct{\chi}}^{1*},{\vct{\chi}}^{2}) = \mathcal{J}(\vct{\chi}^{1*},\vct{\chi}^{2}) > S(\gm{H}_v) = S(\gm{H}_e).
\end{equation}
This is a contradiction because $\tilde{\vct{\chi}}^{1*}$ is a minimax strategy of $\gm{H}_e$ due to Lemma \ref{valthmone}. Therefore, ${\vct{\chi}}^{1*}$ must be a minimax strategy of game $\gm{H}_v$.

Combining the definitions of the strategy $\tilde{\chi}^{1*}$ in Definition \ref{stratdefone} and the mapping $\varrho^1$ in Definition \ref{def:rho}, we can easily show using a forward inductive argument that at each time $t \leq T$ and for each $c_t$,
\begin{align}
\chi_t^{1*}(\vct{c}_t) = \Xi_t^1(\pi_t).
\end{align}
Here, $\Xi_t^1$ is as defined in the min-max dynamic program in Section \ref{dp:oneside} and $\pi_t$ is computed in a forward inductive manner using the following relation
\begin{align}
\pi_1(x_1) &= \Py[X_1 = x_1 \mid C_1 = c_1] ~ \forall \; x_1,\\
\pi_{\tau + 1} &= F_\tau(\pi_\tau, \Xi_\tau^1(\pi_\tau),z_{\tau+1}), ~  1 \leq \tau < t.
\end{align}
Further, using the approach in Theorem 1 of \cite{nayyar2014common}, we can show that if $\vct{\chi}^{1*}$ is a minimax strategy in the virtual game $\gm{H}_v$ for player 1, then $\vct{g}^{1*}$ defined in Theorem \ref{strategy} is a minimax strategy in the original game $\gm{H}$ for player 1.
\end{appendix}

\bibliographystyle{spmpsci}      
\bibliography{refs}   


\end{document}